\documentclass[11pt, twoside]{entics}

\usepackage{enticsmacro}
\usepackage{graphicx}

\usepackage{amssymb, amsmath}
\usepackage{microtype, stmaryrd, url, lmodern, eucal, extdash}
\usepackage{mathtools}
\usepackage{mathpartir}
\usepackage{braket}
\usepackage{todonotes}
\usepackage{quiver}
\usepackage[nameinlink]{cleveref}
\usepackage[inline,shortlabels]{enumitem}
\usepackage{tabularx}
\usepackage{circuitikz}

\makeatletter
\newcommand{\leqnomode}{\tagsleft@true}
\newcommand{\reqnomode}{\tagsleft@false}
\makeatother

\newenvironment{proofsketch}{%
\renewcommand{\proofname}{Proof sketch.}\proof}{\endproof}

\newcommand{\nm}{\mathsf} 
\newcommand{\bb}{\mathbb}

\renewcommand{\cal}{\mathcal}
\newcommand{\frk}{\mathfrak}

\renewcommand{\Set}{\mathsf{Set}}
\newcommand{\Frm}{\mathsf{Frm}}
\newcommand{\Loc}{\mathsf{Loc}}
\newcommand{\Top}{\mathsf{Top}}
\newcommand{\Retro}{\mathsf{Retro}}
\newcommand{\TopRetro}{\mathsf{TopRetro}}
\newcommand{\LocRetro}{\mathsf{LocRetro}}
\newcommand{\AmpLocRetro}{\mathsf{AmpLocRetro}}
\newcommand{\AmpTopRetro}{\mathsf{AmpTopRetro}}
\newcommand{\Comod}{\nm{Comod}}
\newcommand{\Mnd}{\nm{Mnd}}

\newcommand{\op}{\mathsf{op}}

\DeclareMathOperator{\ob}{\mathsf{ob}}
\DeclareMathOperator{\lan}{\mathsf{lan}}
\DeclareMathOperator{\pt}{\mathsf{pt}}
\newcommand{\inv}{^{-1}}
\newcommand{\id}{\mathsf{id}}

\newcommand{\codeno}[1]{\llparenthesis {#1} \rrparenthesis}
\newcommand{\deno}[1]{\llbracket {#1} \rrbracket}
\newcommand\bind{\ensuremath{\mathbin{>\mkern-6.8mu>\mkern-6.9mu=}}} 
\newcommand\rskip{\ensuremath{\mathbin{>\mkern-6.9mu>}}} 
\DeclareMathOperator{\return}{\nm{return}}

\newcommand\revepsilon{\reflectbox{$\varepsilon$}}

\makeatletter
\providecommand{\leftsquigarrow}{%
  \mathrel{\mathpalette\reflect@squig\relax}%
}
\newcommand{\reflect@squig}[2]{%
  \reflectbox{$\m@th#1\rightsquigarrow$}%
}
\makeatother

\def\conf{MFPS 2026}    
\volume{NN}                     

\def\lastname{Garner, Renata, Wu}

\begin{document}
\begin{frontmatter}
  \title{Stone Duality for Monads}
  \author{Richard Garner\thanksref{macquarie}\thanksref{remail}}
  \author{Alyssa Renata\thanksref{imperial}\thanksref{anemail}}
  \author{Nicolas Wu \thanksref{imperial}\thanksref{anemail}}
  \address[imperial]{Imperial College London, London, UK}
  \address[macquarie]{Macquarie University, Sydney, Australia}
  \thanks[remail]{Email:{\texttt{\normalshape richard.garner@mq.edu.au}}}
  \thanks[anemail]{Email:{\texttt{\normalshape
  \{alyssa.renata19,n.wu\}@imperial.ac.uk}}}
  \begin{abstract}
    We introduce a contravariant idempotent adjunction between
    \begin{enumerate*}
    \item the category of ranked monads on $\Set$; and
    \item the category of internal categories and internal retrofunctors
      in the category of locales.
    \end{enumerate*} The left adjoint takes a monad $T$---viewed as a
    notion of computation, following Moggi---to its \emph{localic
    behaviour category} $\nm{LB}T$. This behaviour category is
    understood as ``the universal transition system'' for
    interacting with $T$: its ``objects'' are states and the
    ``morphisms'' are transitions. On the other hand, the right
    adjoint takes a localic category $\nm{LC}$---similarly understood
    as a transition system---to the monad $\Gamma\nm{LC}$ where
    $\Gamma\nm{LC}A$ is the set of $A$-indexed families of local
    sections to the source map which jointly partition the locale of
    objects. The fixed points of this adjunction consist of
    \begin{enumerate*}
    \item \emph{hyperaffine-unary monads}, i.e., those monads where
      term $t$ admits a read-only operation $\overline{t}$ predicting
      the output of $t$; and
    \item \emph{ample localic categories}, i.e., whose source maps are local
      homeomorphisms and whose locale of objects are strongly zero-dimensional.
    \end{enumerate*}
    The hyperaffine-unary monads arise in earlier works by Johnstone
    and Garner as a syntactic characterization of those monads with
    Cartesian closed Eilenberg-Moore categories. This equivalence is
    the \emph{Stone duality for monads}; so-called because it further restricts
    to the classical Stone duality by viewing a Boolean algebra $B$
    as a monad of $B$-partitions and the corresponding Stone space as
    a localic category with only identity morphisms.
  \end{abstract}
  \begin{keyword}
    behaviour category, comodels, internal categories, internal
    retrofunctors, monads, stone duality
  \end{keyword}
\end{frontmatter}

\section{Introduction}

Algebraic theories, as proposed by Plotkin and Power
\cite{Plotkin2002-uk,Plotkin2004-bt}, model notions of computation.
An algebraic theory consists of a set $\Sigma$ of generating computational
operations which define the program \emph{syntax}
along with some equations $\cal{E}$ that express program \emph{semantics}.
Expressing the semantics via equations means that we reason equationally about
programs, but there is another way to express semantics: two programs are
equivalent if they behave the same with respect to some external
\emph{environment}.
Reasoning against the environment can be more intuitive than
equational reasoning,
but constructing an explicit model of the environment from scratch is
a complicated
endeavor. Faced with this tension, in this paper we explore whether there is a
systematic way of recovering a model of the environment from the equations.

In logic, such explorations often manifest in the
construction of a \emph{duality}, which is a (contravariant) equivalence between
some category of algebraic objects (here the
algebraic theories) and some category of geometric objects (the environment).
The prototypical duality is the \emph{Stone duality} between Boolean
algebras and \emph{Stone spaces}. An element of a Boolean algebra $B$
is typically understood as a proposition about the environment, while
the elements of the corresponding Stone space $\nm{Spec}(B)$ are
the possible states of the environment. Now, the Boolean algebra is to be
recovered as certain \emph{decidable} functions on $\nm{Spec}(B)$, and so
this is formally enforced by imposing a topology on $\nm{Spec}(B)$ and taking
the continuous functions as a proxy for the decidable ones. The Stone spaces
characterize the spaces which may appear as the $\nm{Spec}(B)$ of some $B$, and
so $\nm{Spec}$ is an equivalence between the category of Boolean algebras and
Stone spaces.

Indeed, our exploration yields a duality. On the algebraic
side we have the (ranked) monads on $\Set$. In category theory,
monads are understood as
syntax-free algebraic theories \cite{Kelly1993-br}, because while every
algebraic theory induces a monad and vice versa, two algebraic theories with
different generating operations (i.e. different syntax) may induce the same
monad. We can replace algebraic theories by monads here because the
environment is sensitive only to the computations underlying the programs, and
not any particular choice of syntax. In subsections
\ref{subsec:monads},\ref{subsec:comodels} and \ref{subsec:terminal-comodel}, we
review the relevant aspects of monad theory.

On the geometric side, the environment associated to a monad $T$ is a category
$\nm{LB}T$ internal to ``a category of spaces $\cal{S}$'', which we call the
\emph{behaviour category} of $T$. Its objects are to be understood as
states of the environment and morphisms as transitions between states.
The internalization
in spaces achieves the same effect of controlling for computability as in the
classical Stone duality, but the first plot twist is that $\cal{S}$ is not the
category $\Top$ of topological spaces. Rather, it is the category $\Loc$
of \emph{locales}, also known as \emph{pointless
spaces} because they take as primitive the lattice of opens from the definition
of topological spaces while omitting the points entirely. The reason for
this twist is that there are monads whose topological space of states we found
to be empty (example \ref{ex:locale-of-injective-state-prelude}), and
yet whose localic space of states is non-trivial (example
\ref{ex:locale-of-injective-state})! Luckily, by
restraining ourselves to finitary monads it does
suffice to consider $\cal{S} = \Top$. The construction of the
localic/topological behaviour category is addressed in
\cref{sec:topological-localic-comodels} (for the object/state space)
and \cref{sec:localic-behaviour-category} (for the morphism/transition space).

Now, before we can state our duality result, we
must address the notion of morphism we take between our internal categories,
wherein lies the second twist. On the monad side, we take a standard
notion of monad morphism corresponding to interpretations of algebraic theories.
On the geometric side however, the notion of morphism we need is \emph{not}
functors, but rather \emph{retrofunctors}
\cite{NiuSpivak2025,Aguiar1997-ha,Ahman2016-fu,Ahman2017-gq}.
Roughly speaking, a retrofunctor $\bb{C} \rightsquigarrow \bb{D}$ takes objects
from $\bb{C}$ to $\bb{D}$, but takes morphisms from $\bb{D}$ to $\bb{C}$.
We can now state our first main contribution, though here
lies the final twist:
there is a contravariant \emph{adjunction}, and not equivalence(!), between the
category of
ranked monads on $\Set$ and the category $\LocRetro$ of
$\Loc$-internal categories and
retrofunctors, as well as its finitary cousin replacing
internalization in $\Loc$
by $\Top$.

\begin{center}
  \noindent
  \textbf{Theorem \ref{thm:stone-adjunction-loc}}
  $
  \begin{tikzcd}[ampersand replacement=\&,cramped]
    {\nm{LB} : \Mnd_r(\Set)} \& {\LocRetro^\op : \Gamma}
    \arrow[""{name=0, anchor=center, inner sep=0}, shift left=2,
    from=1-1, to=1-2]
    \arrow[""{name=1, anchor=center, inner sep=0}, shift left=2,
    from=1-2, to=1-1]
    \arrow["\dashv"{anchor=center, rotate=-90}, draw=none, from=0, to=1]
  \end{tikzcd}
  $ \hfill
  \textbf{Theorem \ref{thm:stone-adjunction-top}} $
  \begin{tikzcd}[ampersand replacement=\&,cramped]
    {\bb{B} : \Mnd_\omega(\Set)} \& {\TopRetro^\op : \Gamma_\omega}
    \arrow[""{name=0, anchor=center, inner sep=0}, shift left=2,
    from=1-1, to=1-2]
    \arrow[""{name=1, anchor=center, inner sep=0}, shift left=2,
    from=1-2, to=1-1]
    \arrow["\dashv"{anchor=center, rotate=-90}, draw=none, from=0, to=1]
  \end{tikzcd}
  $
\end{center}
These are merely adjunctions because on the
geometric side we assumed the environment is \emph{deterministic}, but any
algebraic theory with a commutative binary operation---such as the theory of
non-deterministic finite choice---cannot be modeled by a
deterministic environment
\cite{Uustalu2015-sv}, and so their corresponding environment under
our adjunction
is the trivial space, from which it is impossible to recover the
original theory.

In general, for a ranked monad $T$, the recovered back-and-forth monad
completes $T$ with \emph{prescience}: To each term
$t \in TA$, there is a new operation $\bar{t}$ which---intuitively---performs
$t$, keeps track of the result, and then rolls back the
state of the environment to just before performing $t$. Monads admitting such
\emph{prescient computations} were characterised by the first-named author in
\cite{Garner2024-yc,Garner2025-hr} as the \emph{cartesian closed}
monads, i.e., those whose categories of Eilenberg-Moore algebras are
cartesian closed. On the other side of the adjunction, the category we obtain
from a monad satisfies what we term \emph{ampleness}
(definition \ref{defn:ample-category}),
to use the terminology of
$C^\ast$-algebraists~\cite[Definition~2.2.4]{Paterson1999Groupoids}.
The cartesian closed monads and the ample localic
categories are precisely the fixpoints of our Stone adjunction, which
thus restricts
to an equivalence that constitutes the
\emph{Stone duality for monads} of the title, and our second main result.

\begin{center}
  \noindent
  \textbf{Theorem \ref{thm:stone-duality} }
  Adjunctions \ref{thm:stone-adjunction-loc} and \ref{thm:stone-adjunction-top}
  restrict to $\nm{HUMnd_r} \simeq \AmpLocRetro$ and
  $\nm{HUMnd_\omega \simeq \AmpTopRetro}$.
\end{center}

Now, the finitary version of the Stone duality obtains on the algebraic
side the finitary cartesian closed monads, and on the geometric side the ample
topological categories. From this, we
find that this finitary Stone duality subsumes the classical Stone
duality. on the algebraic side any Boolean algebra $B$ induces
a
finitary monad $T_B$ generated by binary operations of the
form $\textsf{if } b \textsf{ then } (-) \textsf{ else } (-)$ for each $b \in B$
\cite{Bergman1991-cc}. On the geometric side, the space of
objects for the category corresponding to $T_B$ is the classical Stone dual
$\nm{Spec}(B)$ of $B$, and the transitions consist of
only identity morphisms, reflecting the read-only nature of computations in
$T_B$. In fact, our story starts by inspecting the monad $T_B$ towards the
end of \cref{sec:preliminaries},
and curiously recovering $\nm{Spec}(B)$ via a construction on monads called
the \emph{terminal comodel}. This serves as a jumping off point for the rest of
the paper.

\textbf{Related and Future Work.}
The adjunction \ref{thm:stone-adjunction-top} is
related to existing work on \emph{interaction laws} \cite{Katsumata2020-bv}.
First, adjunction \ref{thm:stone-adjunction-top} admits a
detopologized version replacing
$\TopRetro$ by the category $\Retro$ of small categories and
retrofunctors. $\Retro$ is equivalent to the category of polynomial
comonads \cite{Ahman2016-fu,Ahman2017-gq}, and composing this
equivalence with our left adjoint functor $\bb{B}$, we recover the
\emph{cosemantics}
functor of Garner \cite{Garner2022-rj}. As explained in that paper,
the cosemantics comonad of a monad $T$ is its \emph{Sweedler dual} which is
characterized as the universal comonad admitting an \emph{interaction law} with
$T$ \cite{Katsumata2020-bv}. Here, the comonads represent environments
for interacting with the monad, so the detopologized $\bb{B}T$ represents
the universal environment for interacting with $T$. This brings us to our
first direction for future work: can we explain the topologized
$\bb{B}T$ or even
$\nm{LB}T$ as a universal interacting comonad? This seems to
demand a generalization of the Sweedler dual towards comonads on
$\Top$ or $\Loc$.

As we explained in the beginning, the behaviour category $\nm{LB}T$
should aid program reasoning. In future work, we hope to construct a logic for
reasoning about programs in Moggi's monadic metalanguage
\cite{Moggi1991-ff}. We expect this logic
to take the shape of propositional dynamic logic (PDL) \cite{Harel2001-aw}
since $\nm{LB}T$ can be seen as a Kripke model
whose propositions are interpreted as certain open sets of $\nm{LB}_0T$ and
whose programs are (generated by) $T1$. The modality $[m]\varphi$
is interpreted as $\codeno{m}\inv \varphi$. Since we are interpreting
the modalities in the monad $\Gamma\nm{LB}T$ completing $T$ with
propositions (the prescient computations), we don't require $T$ to contain
sufficiently many such propositions in the first place, lifting a constraint
in previous works on monadic program logic such as Goncharov \&
Schr\"oder \cite{Goncharov2013-ty}.

Having brought up Goncharov \& Schr\"oder \cite{Goncharov2013-ty},
we also ought to discuss their use of unary computations as
propositions, whereas our propositions are generated by binary computations
(see definition \ref{defn:behaviour-locale}).
This contrast seems to stem from their
assumption that computations may fail to terminate, leading to an
information ordering on computations \`a la domain theory, whereas we assume
that computations always terminate. Since this assumption
stems from a natural exploration of comodels, we hope in the future to
explore the Stone-type duality that arises when we consider comodels
\emph{residual} \cite{Ahman2020-bk,Katsumata2020-bv,Uustalu2020-ez}
over the lifting monad $\set{\bot} + -$. Just as the global
sections monad is a very fancy state monad, we expect the right
adjoint of this duality to take monads of partial sections, i.e.,
fancy partial state monads.

It is also natural to consider generalizations beyond monads on
$\Set$. In this paper, many constructions explicitly talk about
elements of monads, so an appropriate generalization will
have to talk about morphisms in the Kleisli category instead. We can
shed a preliminary light on this, based on the
adjunction introduced by Cockett and the first-named author
\cite{Cockett2021-jq} between
\emph{restriction categories realized in $\Loc$} and
\emph{$\Loc$-internal partite categories}. Here, $\nm{LB}T$ is (or rather,
generates) the partite internal category corresponding
under this adjunction to a restriction category generated by
the locale of states $\nm{LB}_0T$. If we can construct $\nm{LB}_0T$ in
element-free fashion, then we get a description of $\nm{LB}T$ which
avoids talking about elements of $T$ entirely.

Finally, the existence of a spectral duality for monads raise the
interesting possibility of developing a \emph{scheme theory of
monads}. Recall the notion of a \emph{scheme of rings} from
algebraic geometry: these are locally ringed sheaves which are
locally isomorphic to the spectrum of a commutative ring. The
analogy here is between rings and (hyperaffine-unary) monads, with
the spectrum of a ring (which is a sheaf) analogous to the source
map of $\nm{LB}T$. At first approximation, a scheme of monads
should then be a localic category $\nm{S}$, which ``locally
resembles'' $\nm{LB}T$ for some $T$. The idea is that in a scheme
$\nm{S}$, the monad (and hence the syntax) in play varies
continuously over the base space $\nm{S}_0$. This should allow for
the modelling of effects whose syntax is not fixed, such as
\emph{local} state---it would be nice if ``local state'' =
``locally a state monad''.

\section{Preliminaries, and the Classical Stone Duality from a
Monadic Perspective} \label{sec:preliminaries}

In subsections \ref{subsec:monads}, \ref{subsec:comodels},
\ref{subsec:terminal-comodel} and \ref{subsec:behaviour-category}, we
recall the basic theory of monads, their comodels, and the
classification of comodels by the behaviour category, which is the prototype
from which we build the topological and localic behaviour category.
We then review the classical Stone duality from a comodel-theoretic
perspective (subsection \ref{subsec:classical-stone-duality}),
using this to motivate our monadic Stone duality (subsection
\ref{subsec:generalized-stone-duality}).

\subsection{Monads} \label{subsec:monads}

Monads encode computations, as first
proposed by Moggi \cite{Moggi1991-ff}. We use a
definition which emphasizes this aspect of monads. Since this definition was
popularized in functional programming, we also we also borrow
some notation, specifically from the Haskell programming language.

\begin{definition}
  A \emph{monad} $(T,
  \bind, \return)$ on $\Set$ comprises:
  \begin{enumerate}
    \item for each set $A$, a set $TA$ of
      \emph{computations};
    \item for each \emph{value} $a \in A$, a \emph{pure
      computation} $\return a \in TA$; and
    \item for each set $A, B$ a
      \emph{composition} operation $\bind \colon TA \times TB^A \to TB$.
  \end{enumerate}
  These are required to satisfy, for each $t \in TA, a \in A, u \in
  TB^A, v \in TC^B$, the equations
  \[ \return a \bind u =
    u(a) \qquad  t \bind \lambda a. \return a = t \qquad (t \bind u) \bind v =
  t \bind (\lambda a. u(a) \bind v). \]
  A \emph{monad map} $\gamma \colon T \to S$
  comprises, for each set $A$, a function $\gamma_A \colon TA \to SA$
  satisfying $\gamma(\return a) = \return a$ and $\gamma(t \bind u) =
  \gamma(t) \bind \lambda a. \gamma(u(a))$. We also introduce an extra piece of
  notation: if $t_1 \in TA$ and $t_2 \in TB$, then we write $t_1 \rskip t_2$ for
  the composite $t_1 \bind \lambda _. t_2$ with the constant family on $t_2$.
\end{definition}

Here, the set $A$ of a computation $t \in TA$ is to be interpreted as
a return type.
A common way to specify monads is to specify a set $\Sigma$ of generating
operations $\nm{op}$ equipped
with their return types $A$, which we denote by $\nm{op}/A$ for brevity.
The monad $F_\Sigma$ generated by such a set of operations has, for
$F_\Sigma A$,
the set of trees whose leaves are values of $A$ and whose nodes are
the generating operations with the branching arity specified by the
return type. The operation $t \bind u$ substitutes each leaf of $t$
with label $a$ by the
tree $u(a)$, while $\return a$ is the leaf-only tree with label $a$.

One can furthermore specify generating equations, via a
family of binary relations
$\cal{E}_A$ over the set of trees $F_\Sigma A$. An \emph{algebraic
theory} $\bb{T}
= [\Sigma|\cal{E}]$ consists of a
set of generating operations $O$ and equations $R$. The monad $T$
generated by $\bb{T}$ is
given by the quotient $TA = F_\Sigma A/=_{\cal{E}_A}$ where the family of
equivalence relations $=_{\cal{E}_A}$
is generated by:
\begin{mathpar}
  \inferrule{t_1 ~\cal{E}_A~ t_2}{t_1 =_{\cal{E}_A} t_2} \quad
  \inferrule{t_1 =_{\cal{E}_A} t_2 \quad \forall a \in A. u_1(a) =_{R_B}
  u_2(a)}{t_1 \bind u =_{R_B} t_2 \bind u_2} \quad
  \inferrule{ t \in TA }{t =_{\cal{E}_A} t} \quad
  \inferrule{t_1 =_{\cal{E}_A} t_2}{t_2 =_{\cal{E}_A} t_1} \quad
  \inferrule{t_1 =_{\cal{E}_A} t_2 \quad t_2 =_{\cal{E}_A} t_3}{t_1
  =_{\cal{E}_A} t_3}
\end{mathpar}
If $=_{\cal{E}_A}$ is the identity relation for all $A$ then we say the
theory and its corresponding
monad is \emph{free}.

\begin{example}
  The monad of \emph{binary input} is the free monad corresponding to the theory
  $[\nm{flip}/2|\emptyset]$, whose generating operation $\nm{flip}$
  we can think of as sampling a binary digit, e.g.\! from a coin
  flip. A term $t \in TA$ of this monad is simply a binary tree with
  leaves labelled
  by elements of $A$. As a non-free example, the monad of
  \emph{reversible input} is induced by
  further adding two new unary operations $\nm{unflip_0/1}$ and
  $\nm{unflip_1/1}$
  along with the following equations expressing that the
  $\nm{unflip}$ operations
  are ``inverses'' of $\nm{flip}$.
  \[ \left[
      \begin{array}{c | c}
        \nm{flip}/2 & \nm{flip}(\nm{unflip}_0(a), \nm{unflip}_1(a)) =
        \return a \\
        \nm{unflip_0}/1 \quad \nm{unflip_1}/1 & \quad
        \nm{unflip}_0(\nm{flip}(a, b)) = \return a \qquad
        \nm{unflip}_1(\nm{flip}(a, b)) = \return b \qquad
      \end{array}
  \right] \]
\end{example}

For this paper, we will want to distinguish operations based on the
cardinality of their return types, and also consider monads whose operations
are bounded by a certain cardinality; these are the \emph{ranked} monads as
follows. A special case is when all operations have finite return type;
these are the \emph{finitary} monads.
After the following definition, any mention of monads will refer to
ranked monads only.

\begin{definition}
  An operation $t \in TA$ is
  \emph{finitary} if there is some function $f \colon I \to A$ from a
  finite set $I$ and $t' \in TI$ such that $t = t' \bind \lambda i.
  \return f(i)$; if here we replace ``finite'' by ``$\lambda$-small''
  for some regular cardinal $\lambda$, then we instead say that $t$ is
  \emph{$\lambda$-ary}. Now $T$ itself is \emph{finitary} if each of
  its operations is so, and is \emph{ranked} if there is a regular
  cardinal $\lambda$ for which all of its operations are $\lambda$-ary.
  We write $\Mnd_\omega(\Set)$ (resp. $\Mnd_r(\Set)$) for the category
  of finitary (resp. ranked) monads and monad maps.
\end{definition}

\subsection{Comodels} \label{subsec:comodels}

It is well-known that every monad is generated by some algebraic
theory, and hence
we can always interpret computations as trees. Such trees can be seen
as \emph{decision trees}, whose execution recursively proceeds by
executing the topmost operation against some \emph{environment} (See
  example
\ref{ex:comodels}  for a visualization). The environment
reciprocates by communicating which branch to walk down along, but is
itself changed
in the process. The execution then
proceeds by executing the subtree found along the given branch, and
so on. The execution
terminates when the walk
ends in a leaf, which becomes the return value of that execution.
Environments are mathematically given by the notion of
\emph{comodel}, as first introduced
in the context of computer science by Power and Shkaravska \cite{Power2004-qv}.
They are also called
\emph{stateful runners} by Uustalu \cite{Uustalu2015-sv}, and as
explained by Ahman and Bauer
\cite{Ahman2020-bk}, we may also think of comodels as virtual machines
for $\Sigma$.

\begin{definition} \label{defn:set-comodel}
  Let $T$ be a monad. A
  $T$-\emph{comodel} is a pair $(W, \codeno{-})$
  whose data comprises a set $W$ of \emph{states} along with
  \emph{co-interpretations} $\codeno{t} \colon W \to A \times W$ of
  each computation $t \in TA$. These co-interpretations are required to respect
  $\bind$ and $\return$, in the sense that for any $t \in TA$, $u : A
  \to TB$, $a \in A$
  and $w, w' \in W$,
  \[ \codeno{\return a}(w) = (a, w) \text{ and } \codeno{t \bind
  u}(w) = \codeno{u(a)}(w') \text{ where } (a, w') = \codeno{t}(w). \]
  A \emph{comodel map} is a function $W \to W'$ which preserves each
  co-interpretation,
  i.e., for each $t \in TA$, $\codeno{t}^{W'}(h(w)) =
  h(\codeno{t}^{W}(w))$. We write
  $\Comod_T$ for the category of $T$-comodels.
\end{definition}

\begin{example} \label{ex:comodels}
  An example comodel for the binary input monad is the set
  $2^\omega$ of infinite binary
  streams. Because co-interpretations respect $\bind$, the
  co-interpretations for a monad generated by a theory $[\Sigma |\cal{E}]$
  is determined by the co-intepretations of the generating
  operations $\Sigma$.

  \noindent
  \begin{minipage}{0.64\textwidth}
    In this case, it means that we need only specify a map
    $\codeno{\nm{flip}} : 2^\omega \to 2 \times 2^\omega$ for which we
    take the map $\beta \mapsto (\nm{head}(\beta), \nm{tail}(\beta))$.
    Then the co-interpretation of a term occurs by consuming one digit to walk
    down one operation at a time, as the
    righthand diagram demonstrates in co-interpreting $t =
    \nm{flip}(\nm{flip}(a,
    \nm{flip}(b, c)), d)$ against state
    $\beta = 0110\ldots$.
    The co-interpretation finishes when it reaches the leaf
    $c$, leaving the environment in state $\beta' = 0\ldots$, so
    $\codeno{t}(\beta) = (c, \beta')$.
  \end{minipage} \hspace{0.03\textwidth}
  \begin{minipage}{0.33\textwidth}
    \vspace{-2ex}
    \centering
    \resizebox{0.8\textwidth}{!}{%
      \begin{circuitikz}
        \tikzstyle{every node}=[font=\fontsize{18.2pt}{23.7pt}\selectfont]
        \draw [-{Latex[scale=1.5]}, ] (14.625,13.25) -- (14,12.5);
        \draw [-{Latex[scale=1.5]}, ] (14.125,12.25) -- (14.875,11.5);
        \node [font=\fontsize{10.2pt}{13.3pt}\selectfont, inner
        xsep=0.080cm, inner ysep=0.085cm, rounded corners=0.020cm] at
        (14.875,13.5) {$0110\ldots$};
        \node [font=\fontsize{10.2pt}{13.3pt}\selectfont, inner
        xsep=0.080cm, inner ysep=0.085cm, rounded corners=0.020cm] at
        (13.875,12.375) {$110\ldots$};
        \node [font=\fontsize{10.2pt}{13.3pt}\selectfont, inner
        xsep=0.080cm, inner ysep=0.085cm, rounded corners=0.020cm] at
        (15,11.375) {$10\ldots$};
        \node [font=\fontsize{10.2pt}{13.3pt}\selectfont, inner
        xsep=0.080cm, inner ysep=0.085cm, rounded corners=0.020cm] at
        (16,10.375) {$0\ldots$};
        \draw [-{Latex[scale=1.5]}, ] (15.25,11.125) -- (15.875,10.5);
        \node [font=\fontsize{18.2pt}{23.7pt}\selectfont, inner
        xsep=0.080cm, inner ysep=0.085cm, rounded corners=0.020cm] at
        (16.625,14.125) {$2^\omega$};
        \draw [line width=2pt, short] (10.875,13.375) -- (10.125,12.5);
        \draw [{Circle[scale=1.5]}-, short] (10.875,13.5) -- (11.625,12.625);
        \draw [{Circle[scale=1.5]}-, short] (10.125,12.5) -- (9.375,11.625);
        \draw [line width=2pt, short] (10.125,12.375) -- (10.875,11.5);
        \draw [line width=2pt, short] (10.75,11.625) -- (11.5,10.75);
        \draw [{Circle[scale=1.5]}-, short] (10.875,11.625) -- (10.125,10.75);
        \node [font=\fontsize{10.2pt}{13.3pt}\selectfont, inner
        xsep=0.080cm, inner ysep=0.085cm, rounded corners=0.020cm] at
        (9.375,11.375) {$a$};
        \node [font=\fontsize{10.2pt}{13.3pt}\selectfont, inner
        xsep=0.080cm, inner ysep=0.085cm, rounded corners=0.020cm] at
        (10.125,10.5) {$b$};
        \node [font=\fontsize{10.2pt}{13.3pt}\selectfont, inner
        xsep=0.080cm, inner ysep=0.085cm, rounded corners=0.020cm] at
        (11.625,10.5) {$c$};
        \node [font=\fontsize{10.2pt}{13.3pt}\selectfont, inner
        xsep=0.080cm, inner ysep=0.085cm, rounded corners=0.020cm] at
        (11.75,12.375) {$d$};
        \node [font=\fontsize{18.2pt}{23.7pt}\selectfont, inner
        xsep=0.080cm, inner ysep=0.085cm, rounded corners=0.000cm] at
        (10,14) {$t$};
        \draw  (15,11.75) circle (2.375cm);
      \end{circuitikz}
    }%
  \end{minipage}

  The same comodel works for reversible binary input, where the
  $\nm{unflip}$ operations
  put the corresponding digit back on top of the given stream, i.e.,
  $\codeno{\nm{unflip}_0} \colon \beta \mapsto 0:\beta$ and
  $\codeno{\nm{unflip}_1} \colon \beta \mapsto 1:\beta$.
  Clearly, any comodel of reversible input is also a comodel of
  binary input, but
  consider the set with two states which we shall call $(01)^*$ and
  $(10)^*$, intuitively the stream of repeating $01$s and $10$s
  respectively. This admits a cointerpretation
  $\codeno{\nm{flip}}((01)^*) = (0, (10)^*)$ and
  $\codeno{\nm{flip}}((10)^*) = (1, (01)^*)$, but this
  cannot be extended to the $\nm{unflip}$ operations in a way that
  would respect the generating equations, since this would require
  adding new states
  such as $0(01)^*$ to cointerpret $\codeno{\nm{unflip}_0}((01)^*)$.
\end{example}

\subsection{The Terminal Comodel} \label{subsec:terminal-comodel}

Hopefully, the reader has gathered enough intuition on comodels by
now to see that they look very much like transition systems. Indeed,
comodels are coalgebras, so it is instructive to look
at the terminal object which encodes the observable behaviours of
coalgebraic states.
In the case of
comodels, the observable behaviour of some state is to execute any
given computation
$t \in TA$ down to a return value $a \in A$, so the elements of
the terminal comodel
are given by maps $TA \to A$ (natural over $A$), admitting a certain
\emph{admissibility} condition \cite{Garner2022-rj}.

\begin{definition} (Admissible behaviours) \label{defn:terminal-comodel}
  A \emph{$T$-behaviour} $\beta$ simply consists of an assigment
  $\beta_A : TA \to A$ for each set $A$. Such a behaviour $\beta$ is
  \emph{admissible} if for any $t \in
  TA$, $a \in A$ and $u
  \colon A \to TB$, we have
  $\beta(t \bind u) = \beta(t \rskip u(\beta(t)))$ and $\beta(t
  \rskip \return a) = a$. Note that the latter equation is equivalent
  to naturality of $\beta : T \to \id_{\Set}$ in the presence of the
  former. The set of admissible
  behaviours $\bb{B}_0T$ admits a comodel structure with
  $\codeno{t}(\beta) = (\beta(t), \partial_t\beta)$ where the next
  behaviour $\partial_t \beta$ is defined by $\partial_t\beta \colon
  t' \mapsto \beta(t \rskip t')$. This makes $\bb{B}_0T$ the
  terminal comodel, where for any other comodel $W$, the
  unique comodel map $W \to \bb{B}_0T$ sends $w \in W$ to the
  behaviour $\beta_w \colon t \in TA \mapsto \pi_{A}(\codeno{t}(w))$.
\end{definition}

Unless otherwise indicated, every behaviour we consider from now on
will be admissible.

\begin{example}
  For our running example of the binary input monad, the comodel
  $2^\omega$ is in fact (isomorphic to) the terminal comodel. This is because
  by admissibility of any behaviour $\beta$, the
  execution of $\beta(t)$ for any tree $t$ is determined by the values of
  $\beta(\nm{flip}), \beta(\nm{flip} \rskip \nm{flip}), \ldots
  \beta(\nm{flip}^k) \ldots$, which determines a map $\omega \to 2$.
  For example,
  for the $t$ in example \ref{ex:comodels}
  $\beta(t)$ is determined by $\beta(\nm{flip}) = 0$,
  $\beta(\nm{flip}^2) = 1$ and
  $\beta(\nm{flip}^3) = 1$. Since the binary input monad is free, any such map
  $\omega \to 2$  determines an admissible behaviour, and hence we obtain our
  isomorphism. This argument also works for reversible binary input, since
  $\beta$ has no choice to make for computations involving the $\nm{unflip}$
  operations.
\end{example}

Any comodel $W$ admits a unique map $! \colon W \to \bb{B}_0T$, which by
taking inverse images determines a map $!\inv : \bb{B}_0T \to \Set$.
On the other hand,
not just any family $P : \bb{B}_0T \to \Set$ can induce a comodel,
but the first-named author
\cite{Garner2022-rj} showed that
we can equip $\bb{B}_0T$ with morphisms $\bb{B}_1T$ to obtain a
\emph{behaviour category} $\bb{B}T$ such
that covariant presheaf functors $P \colon \bb{B}T \to \Set$
correspond to comodels by taking the sum of the sets $P(\beta)$ over
all the behaviours $\beta$. We now review the construction of the
behaviour category.

\subsection{The Behaviour Category} \label{subsec:behaviour-category}
For a free monad $T$, a
$T$-behaviour $\beta$ produces, from each term $t \in TA$, a sequence
of operations called the \emph{trace}, which it had to evaluate to
run $t$ down to its return value. These traces are what we take as the morphisms
of the behaviour category. In fact, the return value itself
does not matter for computing the trace, i.e., $t \in TA$ and
$(t \rskip \return) \in T1$ have the same trace, so it is enough to consider
the trace of $T1$-terms only. Now, an arbitrary monad $T$ may have
more than one plausible set of generating operations, so we want a
definition of traces which is agnostic to any particular choice of generators.
We do this by defining trace-equivalence and then recover
the traces as the trace-equivalence classes.

\begin{definition} ($\beta$-equivalence, behaviour category)
  \label{defn:behaviour-category}
  Let $\beta$ be a $T$-behaviour. The relation $\sim_\beta$ on $T1$
  is the least equivalence relation such that $(t \bind u)
  \sim_{\beta} (t \rskip u(\beta(t)))$ for any $t \in TA$ and $u
  \colon A \to T1$. Write $[t]_\beta$ for the
  $\sim_\beta$-equivalence class of $t$. The \emph{behaviour
  category} $\bb{B}T$ has as objects $T$-behaviours, and as
  morphisms pairs of the form $(\beta, [t]_{\beta})$ but which we
  will simply write as $[t]_\beta$. The morphism
  $[t]_\beta$ has source $\beta$ and target $\partial_t\beta$. The
  identity $\id_\beta$ is $[\return]_\beta$, while the
  composite $[t]_\beta;[s]_{\partial_t\beta}$ is $[t \rskip s]_\beta$.
\end{definition}

\begin{theorem} \label{thm:comodels-classified-by-BT}
  $\Comod_T \simeq [\bb{B}T, \Set]$.
\end{theorem}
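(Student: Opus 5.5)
We will build explicit functors in both directions and check that they are mutually inverse up to isomorphism. For the direction $[\bb{B}T,\Set] \to \Comod_T$, take a covariant presheaf $P$ and set $W = \sum_{\beta} P(\beta)$. For $t \in TA$ we define
\[
  \codeno{t}(\beta, x) = \bigl(\beta(t),\ (\partial_t\beta,\ P([t \rskip \return]_\beta)(x))\bigr),
\]
where $t \rskip \return$ denotes the image of $t$ in $T1$. The comodel axioms then come down to three facts. First, $\codeno{\return a}$ is the identity on states because $[\return]_\beta = \id_\beta$ and $\partial_{\return a}\beta = \beta$. Second, by admissibility $\beta(t \bind u) = \partial_t\beta(u(\beta(t)))$. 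Third, we need the equation of morphisms $[(t\bind u)\rskip\return]_\beta = [t\rskip\return]_\beta ; [u(\beta(t))\rskip\return]_{\partial_t\beta}$. This holds because the left side is $\sim_\beta$-equivalent to $[t \rskip (u(\beta(t))\rskip \return)]_\beta$ by the defining generator of $\sim_\beta$. We also need that $\partial_{t\bind u}\beta = \partial_{u(\beta(t))}\partial_t\beta$, which is admissibility again. A natural transformation $P \Rightarrow Q$ gives a fibrewise map $W_P \to W_Q$, and naturality is exactly preservation of co-interpretations.

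For the converse direction, take a comodel $W$ and let $! \colon W \to \bb{B}_0T$ be the unique map from \cref{defn:terminal-comodel}. Put $P(\beta) = !\inv(\beta)$. For a morphism $[t]_\beta$ with $t \in T1$, let $P([t]_\beta)$ send $w$ to $\pi_2\codeno{t}(w)$. Since $!$ is a comodel map, this lands in the fibre over $\partial_t\beta$. Identities and composition are preserved because $\codeno{\return}$ is the identity and $\codeno{t \rskip s} = \codeno{s}\circ\pi_2\codeno{t}$.

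The main obstacle is showing that this action is well defined on $\sim_\beta$-classes. Since $\sim_\beta$ is the least equivalence relation generated by $(t \bind u) \sim_\beta (t \rskip u(\beta(t)))$, it suffices to check one generating pair at a single $w$ with $!w = \beta$. Here $\codeno{t}(w) = (\beta(t), w')$, because the first component of $\codeno{t}(w)$ is $\beta_w(t) = \beta(t)$. So both $\codeno{t\bind u}(w)$ and $\codeno{t \rskip u(\beta(t))}(w)$ equal $\codeno{u(\beta(t))}(w')$. This is the key point where the fibre condition is used. Comodel maps commute with $!$ by terminality, so they restrict to fibres and give natural transformations.

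Finally we check the composites. Starting from $W$, the comodel on $\sum_\beta !\inv(\beta)$ is canonically isomorphic to $W$. Starting from $P$, the terminal map of $\sum_\beta P(\beta)$ sends $(\beta,x)$ to $\beta$: the behaviour of that state on $t$ is $\beta(t)$ by construction, and $\beta$ is admissible, so $\beta_{(\beta,x)} = \beta$. Hence the fibres recover $P(\beta)$, and the actions agree by definition. Both isomorphisms are natural, which gives the equivalence $\Comod_T \simeq [\bb{B}T,\Set]$.
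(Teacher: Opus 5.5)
Your proof is correct and uses the same correspondence the paper sketches: a comodel goes to the copresheaf of fibres $!\inv(\beta)$ of its unique map to the terminal comodel, and a copresheaf $P$ goes back to $\sum_\beta P(\beta)$, with well-definedness on $\sim_\beta$-classes checked on generating pairs exactly as you do. The paper gives no proof of its own here, only that one-line description, and defers the verification to \cite{Garner2022-rj}.
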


\begin{example} \label{ex:behaviour-category}
  Let $T$ be the monad of binary input.
  The trace of $t \in T1$ at $\beta$ is the number
  of $\nm{flip}$s traversed by $\beta$ when running down $t$, so a
  morphism in $\bb{B}T$ has the form $n \colon W \rightarrow
  \partial^n W$ for some $n \in \mathbb{N}$. 
\end{example}
\begin{example}
  For any set $S$, there is a \emph{state monad} $T = (- \times S)^S$ whose
  computations are functions which inspect and update the current state, along
  with using this information to produce a return value. Its behaviours are in
  bijection with the set of states $S$, and actually so are the traces
  $T1/_{\sim_s}$ over each $s \in S$. To
  understand why the latter is true, observe that the unary terms are
  just state-update functions $S \to S$.
  Now, intuitively, two such unary computations
  are equivalent if they make the same update at state $s$. Indeed,
  two unary terms $t_1, t_2 \in T(1) = S^S$ are $s$-equivalent iff $t_1(s)
  = t_2(s)$. So $\bb{B}T$ is the indiscrete (or chaotic) category
  with object-set $S$.
  We refer to \cite{Garner2022-rj} for more examples.
\end{example}

\subsection{The Classical Stone Duality, from a Comodel-theoretic Perspective}
\label{subsec:classical-stone-duality}

Recall that a Boolean algebra is a bounded distributive lattice
$(B, \wedge, \vee, \top, \bot)$ such that every element $b \in B$ admits a
complement, i.e., a (necessarily unique) element $\neg b$ such that $b
\vee \neg b = \top$
and $b \wedge \neg b = \bot$. From a logical perspective, we think of
the elements of $B$ as propositions about the world. Any Boolean
algebra induces a
\emph{monad of partitions} \cite{Bergman1991-cc,Garner2024-yc}, which we now
define ``out of thin-air''. However, this monad does admit an
equational presentation which we
actually use later in this paper (see remark
\ref{remark:equational-presentation-of-TB}).

\begin{definition}
  Let $B$ be a Boolean algebra. For a set $A$, an
  \emph{$A$-partition} of $B$ is a function
  $t : A \to B$ such that $p$ is
  \begin{enumerate*}
  \item \emph{finitely supported}, i.e., for all but finitely many
    $a \in A$, $t(a) = \bot$;
  \item \emph{pairwise disjoint}, i.e., $t(a) \wedge t(a') = \bot$
    for $a \neq a'$; and
  \item \emph{jointly covering},  i.e., $\bigvee \set{ t(a) \neq
    \bot | a \in A } = \top$.
  \end{enumerate*}
  The monad of partitions $T_B$ takes $T_{B}A$ to be the set of
  $A$-partitions, with
  $(\return a)(a) = \top$ and $\bot$ otherwise. The composite $t
  \bind u$ for $t \in TA$ and $u : A \to TC$ is the partition $(t
  \bind u)(c) = \bigvee_{a \in A} t(a) \wedge u(a)(c)$.
\end{definition}

While this monad seems quite complicated, it is generated by $T_B2$ which is
in fact isomorphic to the Boolean algebra $B$. To see why, notice
that a binary partition $t$
determines some $b := t(1)$, and then the conditions demand that
$t(0) = \neg b$. We can view such a $t$ as a program
expression $\textsf{if } b \textsf{ then } 1 \textsf{ else } 0$. In this case,
executing such an expression amounts to checking whether $b$ holds in
the world, and then returning the appropriate value. The observable
behaviour of the world then is just a specification of which propositions hold,
subject to some reasonable conditions. Such specifications are called
\emph{ultrafilters}.

\begin{proposition} \label{prop:terminal-comodel-of-partitions-monad}
  Let $B$ be a Boolean algebra. The terminal comodel $\bb{B}_0T_B$ is
  isomorphic to
  the set of \emph{ultrafilters} for $B$, namely those subsets
  $\frak{p} \subset B$ that satisfy:
  \begin{enumerate*}
  \item \emph{nontriviality}, i.e., $\top \in \frak{p}$;
  \item \emph{completeness}, i.e., $a \vee b \in \frak{p}$ implies
    $a \in \frak{p}$ or $b \in \frak{p}$; and
  \item \emph{consistency}, i.e., $\neg a \in \frak{p}$ implies $a
    \not\in \frak{p}$.
  \end{enumerate*}
\end{proposition}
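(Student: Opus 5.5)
We construct explicit mutually inverse maps between admissible $T_B$-behaviours and ultrafilters. Given an admissible behaviour $\beta$, set $\frak{p}_\beta = \set{ b \in B | \beta_2(\chi_b) = 1}$, where $\chi_b \in T_B2$ is the binary partition with $\chi_b(1) = b$ and $\chi_b(0) = \neg b$. In the other direction, given an ultrafilter $\frak{p}$ and $t \in T_BA$, we put $\beta_{\frak{p}}(t)$ to be the unique $a \in A$ with $t(a) \in \frak{p}$. Existence of such an $a$ uses that $t$ is finitely supported and jointly covering, so $\top = \bigvee_{a \in \mathrm{supp}\,t} t(a) \in \frak{p}$, and completeness applied finitely many times yields some $t(a) \in \frak{p}$. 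Uniqueness uses disjointness: if $t(a), t(a') \in \frak{p}$ with $a \neq a'$, then $t(a') \le \neg t(a)$. We first show $\frak{p}$ is upward closed: if $x \le y$ and $x \in \frak{p}$ but $y \notin \frak{p}$, then $\top = y \vee \neg y \in \frak{p}$ gives $\neg y \in \frak{p}$, and $\neg y \le \neg x$, so $\neg x = \neg y \vee (\neg x \wedge y)$. This reduces upward closure to showing $\neg x \in \frak{p}$ when a smaller element is, i.e.\ we need upward closure itself. So instead I would derive it directly: if $x \in \frak{p}$ and $x \le y$, then $\neg y \le \neg x$, and $\top = y \vee \neg y$ forces $y$ or $\neg y \in \frak{p}$. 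If $\neg y \in \frak{p}$, then $\neg x = \neg y \vee (\neg x \wedge y)$ again needs care. I expect this bookkeeping — deriving upward closure and $\bot \notin \frak{p}$ from the three stated axioms — to be the main fiddly step. If the stated axioms turn out too weak, I would read ``consistency'' as excluding $\bot$ (take $a=\top$: $\neg\top=\bot\in\frak p$ implies $\top\notin\frak p$, contradiction, so $\bot\notin\frak p$). Upward closure then follows via completeness applied to $y = x \vee (y \wedge \neg x)$ once we know $\frak{p}$ is closed under binary meets, which itself we obtain from consistency plus completeness on $\neg(x\wedge y) = \neg x \vee \neg y$.

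Next we check that $\beta_{\frak{p}}$ is admissible. For $t \in T_BA$ and $u \colon A \to T_BC$, let $a_0 = \beta_{\frak p}(t)$, and let $c$ be the value of $\beta_{\frak{p}}$ on $t \rskip u(a_0)$. Then $(t \rskip u(a_0))(c) = u(a_0)(c)$, since $t$ covers, so $u(a_0)(c) \in \frak{p}$. Hence $t(a_0) \wedge u(a_0)(c) \in \frak{p}$ by meet-closure, and so $(t\bind u)(c) \in \frak{p}$ by upward closure; this gives $\beta_{\frak p}(t\bind u) = c$. The equation $\beta(t \rskip \return a) = a$ is immediate since $t \rskip \return a = \return a$.

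Conversely, for admissible $\beta$ we check that $\frak{p}_\beta$ is an ultrafilter, and we check the round trips. The key lemma is that for $t \in T_BA$ and $a \in A$ one has $\beta_A(t) = a$ iff $\beta_2(\chi_{t(a)}) = 1$. Writing $\chi_{t(a)} = t \bind (\lambda a'.\, \return [a'=a])$, naturality of $\beta$ gives exactly this. The lemma gives $\beta_{\frak p_\beta} = \beta$ at once. The axioms follow similarly. Nontriviality holds because $\chi_\top = \return 1$. Consistency holds because $\chi_{\neg a}$ is $\chi_a$ with the outputs swapped. For completeness, we use the $3$-partition $(a, \neg a \wedge b, \neg(a\vee b))$ and naturality along the maps $3 \to 2$. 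Finally, $\frak{p}_{\beta_{\frak p}} = \frak{p}$ is immediate from the definitions.
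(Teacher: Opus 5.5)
Your argument is correct in substance, but it takes a genuinely different route from the paper. The paper gives no elementary argument. It obtains the bijection from the chain $\bb{B}_0T_B \cong \pt\nm{LB}_0T_B = \Loc(1,\nm{LB}_0T_B) \cong \Frm(\nm{Idl}(B), \mathcal{O}1) \cong \nm{BA}(B, \mathcal{O}1)$, using four ingredients:
\begin{enumerate*}
\item the lifting of $\mathcal{O} \dashv \pt$ to comodels, so the terminal topological comodel is the spatialization of the terminal localic one;
\item the lemma that for a hyperaffine-unary monad the frame $\mathcal{O}(\nm{LB}_0T)$ is presented by $H2_{\mathcal{J}}$ (here $T_B2 \cong B$ with only finite partitions, giving the ideal frame);
\item the universal property of $\nm{Idl}$;
\item the standard identification of homomorphisms into the two-element frame with ultrafilters.
\end{enumerate*}
You instead work only with the admissibility equations. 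Your key lemma, $\beta_A(t) = a \iff \beta_2(\chi_{t(a)}) = 1$ by naturality along $A \to 2$, is the elementary counterpart of the paper's fact that $\nm{LB}_0T_B$ is generated by $T_B2 \cong B$. The paper's route is shorter and shows the result as an instance of the general machinery ($\nm{LB}_0T_B$ is the Stone locale of $B$). However, it forward-references results whose proofs are only sketched or delegated to earlier work. It also leaves implicit that the three axioms in the statement cut out exactly the Boolean homomorphisms $B \to 2$. Your route is self-contained and makes that last verification explicit.

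Two local steps need repair. First, your final justification of upward closure (completeness applied to $y = x \vee (y \wedge \neg x)$) runs the wrong way. Completeness only passes from a join in $\mathfrak{p}$ to one of its joinands, so it presupposes $y \in \mathfrak{p}$. The correct argument uses facts you already have. If $x \in \mathfrak{p}$, $x \le y$ and $y \notin \mathfrak{p}$, then $\neg y \in \mathfrak{p}$ by completeness on $\top = y \vee \neg y$. Meet-closure then puts $x \wedge \neg y = \bot$ in $\mathfrak{p}$, contradicting $\bot \notin \mathfrak{p}$. As your own parenthetical shows, $\bot \notin \mathfrak{p}$ follows from the stated axioms, so no reinterpretation of consistency is needed.

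Second, when checking completeness of $\mathfrak{p}_\beta$, naturality along maps $3 \to 2$ applied to $(a, \neg a \wedge b, \neg(a \vee b))$ only gives $a \in \mathfrak{p}_\beta$ or $\neg a \wedge b \in \mathfrak{p}_\beta$. This is because $b$ is in general not a union of blocks of that partition. There are two easy fixes. You can first show $\mathfrak{p}_\beta$ is upward closed, using naturality on the $3$-partition $(x, y \wedge \neg x, \neg y)$ for $x \le y$. Alternatively, use the $4$-partition $(a \wedge b, a \wedge \neg b, \neg a \wedge b, \neg a \wedge \neg b)$, from which $\chi_a$, $\chi_b$ and $\chi_{a \vee b}$ are all images under maps $4 \to 2$. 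With these repairs the argument is complete.
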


Now, the much-celebrated Stone duality theorem says that we can actually recover
the Boolean algebra $B$ \emph{if} we additionally induce a topology on the set
of ultrafilters. The topological spaces that arise in this way are called the
\emph{Stone spaces} (definition \ref{defn:stone-spaces}), and by Stone duality
there is a dual equivalence $\nm{BA} \simeq \nm{Stone}^\op$.
In combination with the above proposition, the monads of the form $T_B$ are
therefore determined by their terminal comodel equipped with an
appropriate topology. Therefore it is natural to ask: can we generalize Stone
duality to recover any monad $T$ from its terminal comodel $\bb{B}_0T$?

\subsection{Towards a Generalized Stone Duality for Monads}
\label{subsec:generalized-stone-duality}

Since the terminal comodel interprets each computation $t$ as a map
$\codeno{t} \colon \bb{B}_0T \to A \times \bb{B}_0T$, it is natural
as a naive attempt to try and recover the monad $T$ as the
\emph{state monad} $(- \times \bb{B}_0T)^{\bb{B}_0T}$. But there are two
orthogonal problems to consider.

\emph{First}, we already know from
the classical Stone duality that we ought
to put some kind of topology on $\bb{B}_0T$, which we have yet to do.
Actually, there is a computational explanation for using topology: there
are too many functions $\bb{B}_0T \to A \times \bb{B}_0T$, not all of which
are actually computable. Consider the binary input or even
reversible input monad for example. There is no binary tree whose
cointerpretation is a map $2^\omega \to \set{\nm{yes}, \nm{no}}
\times 2^\omega$ assigning $\nm{repeat}(10) \mapsto \nm{yes}$,
and $\nm{no}$ for any other stream, because this requires inspecting
infinitely many digits of the stream. Putting a
topology, and asking for only continuous functions, acts as a
proxy for demanding computability\footnote{This is similar to
  Scott's model of the untyped lambda-calculus: the only set $X$
  isomorphic to $X^X$ is $X = 1$, but there are non-trivial
\emph{spaces} $X$ for which such isomorphisms exist.}. As we will
see later, the topology we put on $\bb{B}_0T$ coincides with the
cantor space topology when $T$ is binary input, which rules out
the function we just described. In
\cref{sec:topological-localic-comodels} we address this
topologization for arbitrary monads, capturing the Stone topology as
a special case. It turns out there that topology is only sufficient to capture
finitary monads, for reasons owing to size---hence our sensitivity to
cardinality
when introducing monads earlier. To better capture ranked monads, we need to
consider replacing topological spaces by an alternative notion of space called
\emph{locales}.

\emph{Second}, both the binary input monad and the
reversible input monad
induce the same terminal comodel, so we can't yet distinguish the two. Still,
they have different categories of comodels, as we saw
in example \ref{ex:comodels}. The solution for this is to
consider not just $\bb{B}_0T$, but the behaviour category
$\bb{B}T$; since it classifies categories of comodels, it must be able
to tell the difference between binary input and reversible input.
We have already seen in example \ref{ex:behaviour-category}
that the morphisms of the behaviour category for binary input are natural
numbers specifying how many digits of the stream to consume. For reversible
input, we have said the $\nm{unflip}$ operations are inverses for $\nm{flip}$,
and indeed the behaviour category for reversible input is a
\emph{groupoid} whose
morphisms are, in addition to the natural numbers, non-empty lists of binary
digits specifying digits to put back onto the stream. Amalgamating
our proposed solutions to the two problems, we must not
only topologize (or rather, localify)
$\bb{B}_0T$, but also $\bb{B}_1T$, and this is exactly what we
address in \cref{sec:localic-behaviour-category} to obtain
the \emph{localic behaviour category} $\nm{LB}T$ as the Stone dual of
$T$. The remaining two sections
\cref{sec:stone-adjunction-for-monads}
and
\cref{sec:stone-duality-for-haffun-monads}
take the story to its logical conclusion by constructing the dual equivalence
of categories that becomes our Stone duality for monads.

\section{Topological \& Localic Comodels}
\label{sec:topological-localic-comodels}

\subsection{Topological Comodels}

In this section, our goal is to topologize the terminal comodel. We do this
by rediscovering the terminal object in the category $\Comod_T(\Top)$ of
comodels valued in $\Top$, the category of topological spaces. For this reason,
we need to define comodels valued in other categories beyond $\Set$. To support
a definition of comodels, the base category $\cal{C}$ must have \emph{copowers},
which means that for any $C
\in \cal{C}$ and any set $A$, the $A$-fold coproduct $A \cdot C$
exists. For example, both $\Set$ and $\Top$  have copowers given by
the $A$-fold disjoint sum $\coprod_{a \in A} C$. We will denote the
inclusion maps by $\upsilon_a \colon C \to A \cdot C$, and the
codiagonal by $\pi_C \colon A \cdot C \to C$. 

\begin{definition} \label{defn:comodel} \textup{\cite[Definition
  2.15]{Garner2022-rj}}
  Let $T$ be a monad and $\cal{C}$ a category with copowers. A
  \emph{comodel} of $T$ in $\cal{C}$ is a pair $(W, \codeno{-})$
  whose data comprises an object $W\in \ob\cal{C}$ and
  \emph{co-interpretations} $\codeno{t} \colon W \to A \cdot W$ for
  each computation $t \in TA$. If we extend this to a
  cointerpretation $\codeno{u} \colon A \cdot W \rightarrow B
  \cdot W$ of each $u \colon A \rightarrow TB$ via $\codeno{u} :=
  [\codeno{u(a)}]_{a \in A}$, then the comodel axioms require that
  $\codeno{t \bind u} = \codeno{u} \circ \codeno{t}$ and
  $\codeno{\return a} = \upsilon_a$. A \emph{comodel map} $(W,
  \codeno{-}_W) \to (W', \codeno{-}_{W'})$ is a map $h \colon W \to
  W'$ which preserves each co-interpretation, i.e., for each $t \in
  TA$, we have $\codeno{t} \circ h = (A \cdot h) \circ \codeno{t}$.
  We write $\Comod_{T}(\cal{C})$ for the category of $T$-comodels in $\cal{C}$.
\end{definition}

This recovers definition \ref{defn:set-comodel} with $\Comod_T =
\Comod_T(\Set)$. Clearly, any $\Top$-comodel is a $\Set$-comodel,
yielding a forgetful
functor $U \colon \Comod_T(\Top) \rightarrow \Comod_T(\Set)$. On the
other hand, there is also a coarsest topology on any $\Set$-comodel
making it a $\Top$-comodel \cite{Garner2023-bi}:

\begin{definition} \label{defn:operational-topology} (Operational
  Topology) Let $W$ be a $T$-comodel in $\Set$. The \emph{operational
  topology} on $W$ is generated by sub-basic open sets $[t \mapsto a]
  := \set{w \in W | \codeno{t}(w) = (a, w') \text{ for some } w'}$
  for $t \in TA$ and $a \in A$.
\end{definition}

\begin{proposition} \label{prop:operational-topology-adjunction}
  The assignment which endows a comodel with its operational topology
  yields a right adjoint $O \colon \Comod_T(\Set) \rightarrow
  \Comod_T(\Top)$ to $U \colon \Comod_T(\Top) \rightarrow \Comod_T(\Set)$.
\end{proposition}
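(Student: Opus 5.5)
We split the proof into two parts: first we show that $O$ is well defined, i.e.\ that $OW$ is a $\Top$-comodel and that $O$ acts on comodel maps; then we establish the adjunction $U \dashv O$ by exhibiting a natural bijection between $\Top$-comodel maps $V \to OW$ and $\Set$-comodel maps $UV \to W$. Since $UOW = W$ as $\Set$-comodels, the counit is the identity. The key observation is that, for any $\Top$-comodel $V$, a $\Set$-comodel map $h \colon UV \to W$ is automatically continuous as a map $V \to OW$. So the bijection is simply ``the same underlying function''.

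\textbf{Well-definedness.} Fix $t \in TA$. We must check that $\codeno{t} \colon OW \to A \cdot OW$ is continuous, where $A\cdot OW$ is the disjoint sum. Write $\codeno{t}(w) = (t_1(w), t_2(w))$, where $t_1 \colon W \to A$ is the output and $t_2 \colon W \to W$ the next state. An open set of $A \cdot OW$ is a sum of opens $U_a \subseteq W$, so it suffices to show that $\set{w | t_1(w) = a,\ t_2(w) \in [s \mapsto b]}$ is open for each subbasic $[s \mapsto b]$; finite intersections and unions are then handled automatically. We will show this set equals
\[ [\,t \bind \lambda a'.\, (s \bind \lambda b'.\return (a',b'))\ \mapsto\ (a,b)\,], \]
a single subbasic open. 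This uses the comodel axiom $\codeno{t\bind u}(w) = \codeno{u(a')}(w')$, where $(a',w') = \codeno{t}(w)$. This verification is the main (though short) calculation.

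For a $\Set$-comodel map $h \colon W \to W'$, continuity of $h \colon OW \to OW'$ follows from $h^{-1}[t \mapsto a] = [t\mapsto a]$, which holds because $h$ preserves cointerpretations and hence outputs. So $O$ is a functor with $UO = \id$.

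\textbf{Adjunction.} Let $V$ be a $\Top$-comodel and $h \colon UV \to W$ a $\Set$-comodel map. Then $h^{-1}[t \mapsto a] = \set{v \in V | \codeno{t}(v) \in \upsilon_a(V)}$, which is $\codeno{t}^{-1}(\upsilon_a V)$. This set is open because $\codeno{t}$ is continuous and coproduct injections in $\Top$ have open image. Hence $h$ is continuous into $OW$, and it is a $\Top$-comodel map since comodel equations are checked on underlying functions. Conversely, any $\Top$-comodel map $V \to OW$ is a $\Set$-comodel map $UV \to W$. The two assignments are mutually inverse and trivially natural, which gives $U \dashv O$, with identity counit and unit $V \to OUV$ the identity function.
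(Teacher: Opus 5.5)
Your proposal is correct and follows the paper's argument. The core step is the same in both: for a $\Top$-comodel $V$ and a $\Set$-comodel map $h \colon UV \to W$ one has $h\inv[t \mapsto a] = \codeno{t}\inv(\upsilon_a V)$, which is open, so the hom-set bijection is the identity on underlying functions. You additionally check two things the paper leaves implicit, and both checks are correct: that $OW$ is genuinely a $\Top$-comodel, via the subbasic open $[t \bind \lambda a'.\, s \bind \lambda b'.\return(a',b') \mapsto (a,b)]$, and that $O$ is functorial.
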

\begin{proof}
  Let $\cal{W}$ be a $\Top$-comodel and $W$ a $\Set$-comodel. Then
  any $\Set$-comodel morphism $f \colon U\cal{W} \to W$ is a
  continuous function $f \colon \cal{W} \to OW$ since $f\inv[t
  \mapsto a]_{W} = [t \mapsto a]_{\cal{W}} =
  \codeno{t}_{\cal{W}}\inv\left(\{a\} \times \cal{W}\right)$.
\end{proof}

Since right adjoints preserve limits---so in particular terminal
objects---we can
conclude that $\bb{B}_0T$ equipped with the operational topology is the terminal
$\Top$-comodel.

\begin{example}(The Cantor Space) \label{ex:cantor-space}
  Consider the monad of binary input and its terminal topological comodel. In
  this case, we can choose an even smaller set of sub-basic opens, namely those
  of the form $[\frk{b}] = \set{\beta | \beta \sqsupseteq \frk{b}}$ of those
  streams extending some finite sequence $\frk{b}$ of binary digits. This space
  is better known as the \emph{Cantor space}.
\end{example}

Now, if $T = T_B$ is the partitions monad over a Boolean algebra $B$,
since this monad is generated by the binary partitions in correspondence with
elements $b \in B$, the
operational topology on $\bb{B}_0T$ is generated by open sets of the form
$[b \mapsto 1]$ (and also $[b \mapsto 0]$, but those are recoverable
as $[\neg b \mapsto 1]$),
which we can simply write as $[b]$. Under the correspondence of
proposition \ref{prop:terminal-comodel-of-partitions-monad}, we find
that the generating open sets are of the form $[b] = \set{\frak{p} |
b \in \frak{p}}$, and
this is precisely the Stone topology on the set of ultrafilters. The spaces that
arise in this way are quite well-behaved, and are known as the
\emph{Stone spaces}.

\begin{definition} \label{defn:stone-spaces}
  A \emph{Stone space} $X$ is a topological space $X$ which is
  compact, Hausdorff
  and \emph{zero-dimensional}, i.e., admits a basis of clopen sets.
\end{definition}

For any monad $T$, $\bb{B}_0T$ is actually always Hausdorff and
by construction admits a basis of clopen sets (each $[t \mapsto a]$ is clopen),
but the compactness condition requires $T$ to be finitary, though we obtain
a proof of this as a corollary of propositions
\ref{prop:LB0T-ultraparacompact} and
\ref{prop:compact-and-ultraparacompact-is-stone}
from later in this section.

\begin{proposition} \label{prop:terminal-top-comodel-is-Stone}
  Let $T$ be a finitary monad. Then $\bb{B}_0T$ is a Stone space.
\end{proposition}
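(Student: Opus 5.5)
We need to show that $\bb{B}_0T$, with the operational topology, is Hausdorff, zero-dimensional and compact. We argue directly from admissible behaviours rather than through the later propositions \ref{prop:LB0T-ultraparacompact} and \ref{prop:compact-and-ultraparacompact-is-stone}.

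\emph{Zero-dimensionality and Hausdorffness.} Fix $t \in TA$. The sets $[t \mapsto a]$ for $a \in A$ partition $\bb{B}_0T$, since $\beta \in [t\mapsto a]$ iff $\beta(t) = a$. Each is therefore open with open complement $\bigcup_{a' \neq a}[t \mapsto a']$, hence clopen. Finite intersections of clopens are clopen, so the sub-basis yields a clopen basis. For Hausdorffness, let $\beta \neq \beta'$. Then there is some $t \in TA$ with $\beta(t) = a \neq a' = \beta'(t)$. The disjoint open sets $[t\mapsto a]$ and $[t \mapsto a']$ separate them.

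\emph{Compactness.} Our plan is to embed $\bb{B}_0T$ as a closed subspace of a compact product. Because $T$ is finitary, every $t \in TA$ can be written as $t' \bind \lambda i.\return f(i)$ with $t' \in TI$ and $I$ finite. Naturality of behaviours (admissibility) then gives $\beta(t) = f(\beta(t'))$. It follows that a behaviour is determined by its restriction to the terms $TI$ with $I$ finite, and we may take $I$ ranging over the finite ordinals $n$. Consider the restriction map
\[ r \colon \bb{B}_0T \to P := \textstyle\prod_{n \in \omega}\prod_{s \in Tn} n. \]
By Tychonoff, $P$ is compact, as a product of finite discrete spaces. The map $r$ is injective by the remark above. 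It is continuous, since the preimage of a basic open $\{\beta(s) = i\}$ is $[s \mapsto i]$. Conversely, every sub-basic open $[t \mapsto a]$ is $r$-preimage of an open set: it equals $\bigcup_{i \in f^{-1}(a)} [t' \mapsto i]$. Hence $r$ is an embedding.

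The main obstacle is showing that the image of $r$ is closed. For this we must check that any $\gamma \in P$ satisfying the admissibility equations restricted to finitary terms extends to an admissible behaviour on all of $T$. The natural definition is $\beta(t) := f(\gamma(t'))$. We must first show this is well defined when $t$ has two finitary presentations: since both present the same $t$, compare them via the coproduct $n + m$ and use the naturality equations for $\gamma$ along finite maps. We must then check the two admissibility equations. For $\beta(t \bind u)$, the composite need not have a finite presentation in which $u$ itself is visible, but it is finitary. We reduce it to $n$-ary data by replacing $u$ with its restriction along $f$, where each $u(f(i))$ is itself presented finitarily and the presentations are amalgamated into a finite coproduct. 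This is legitimate because only finitely many $u(f(i))$ matter. Each resulting equation mentions finitely many coordinates of $P$, namely equalities between finitely many coordinates taking values in finite sets. Such a condition defines a clopen subset of $P$. The image of $r$ is therefore an intersection of closed sets, hence closed, and so $\bb{B}_0T$ is compact. Together with the first two parts, $\bb{B}_0T$ is a Stone space.
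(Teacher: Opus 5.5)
Your argument is correct, but it takes a genuinely different route from the paper. The paper does not argue with points here. It derives the statement as a corollary of Propositions~\ref{prop:LB0T-ultraparacompact} and~\ref{prop:compact-and-ultraparacompact-is-stone}:
\begin{enumerate}
\item $\nm{LB}_0T$ is presented by a covering system whose basic covers are the partitions $\set{[t\mapsto a] | a\in A}$.
\item For finitary $T$ these covers reduce to finite partitions, so $\nm{LB}_0T$ is compact as well as ultraparacompact.
\item Such a locale is spatial, and its space of points is a Stone space.
\item $\pt\nm{LB}_0T\cong\bb{B}_0T$, because $\pt$ lifts to a right adjoint on comodels (Proposition~\ref{prop:loc-top-adjunction-for-comodels}) and so preserves the terminal comodel.
\end{enumerate}
You stay in $\Top$ throughout. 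You prove clopenness and Hausdorffness by hand, and you get compactness by realising $\bb{B}_0T$ as a closed subspace of $\prod_n\prod_{s\in Tn}n$, cut out by clopen admissibility conditions. Both proofs hinge on the same finitary identity $[t\mapsto a]=\bigcup_{i\in f^{-1}(a)}[t'\mapsto i]$, and both rest on the Boolean prime ideal theorem: you use it through Tychonoff for finite discrete spaces, the paper through spatiality of coherent locales. Yours is elementary and self-contained. The paper's gives more: spatiality of $\nm{LB}_0T$ itself, and the finitary case as an instance of ultraparacompactness for all ranked $T$. That extra is what is used later, in Proposition~\ref{prop:finitary-implies-spatiality} and the finitary half of Theorem~\ref{thm:stone-duality}.

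One step should be spelled out. In the well-definedness check, the equation $Tf(t')=Tg(t'')$ lives in $TA$ with $A$ possibly infinite. So naturality along finite maps cannot be applied to $[f,g]\colon n+m\to A$ directly. Factor $[f,g]$ through its finite image $j\colon C\hookrightarrow A$. Then $Tj$ is injective (a split mono when $C\neq\emptyset$; the case $C=\emptyset$ is vacuous), so the equation already holds in $TC$, and naturality along finite maps applies there.

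You can also bypass the extension bookkeeping altogether. For an ultrafilter $\cal{U}$ on $\bb{B}_0T$, let $\beta(t)$ be the unique $a$ with $[t\mapsto a]\in\cal{U}$; it is unique because, for finitary $t$, only finitely many of these disjoint sets are nonempty and they cover. Check admissibility via $[t\mapsto a]\cap[t\bind u\mapsto b]=[t\mapsto a]\cap[t\rskip u(a)\mapsto b]$, and observe that $\cal{U}$ converges to $\beta$.
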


As for infinitary monads, the following example demonstrates that the terminal
topological comodel
for certain infinitary monads $T$ are poorly behaved, even if they admit a
deterministic and non-failing notion of environment.

\begin{example} \label{ex:locale-of-injective-state-prelude}
  Consider the monad $T$ corresponding to the following theory.
  \[\left[
      \begin{array}{c|c}
        \forall x \in \bb{R}, & \nm{get}_x \bind \lambda n.
        \nm{get}_x \lambda m. \return (n, m)
        = \nm{get}_x \lambda n. \return (n, n) \qquad \nm{get}_x \rskip
        \return a = \return a \\
        \nm{get}_x/\bb{N}     &  \nm{get}_x \bind \lambda
        n. \nm{get}_y \bind \lambda m. \return (n, m) = \nm{get}_y
        \bind \lambda m.
        \nm{get}_x \bind \lambda n. (n, m) \\
        &  \nm{get}_x \bind \lambda
        n. \nm{get}_y \bind \lambda m. \return (n, m) = \nm{get}_x
        \bind \lambda n.
        \nm{get}_y \bind \lambda m. f(n, m) \\
        & \text{ for all } x \neq y \in \bb{R} \text{ and } f \colon
        \bb{N} \times \bb{N} \to \bb{N} \times \bb{N} \text{
        satisfying } \forall n \neq m \in \bb{N}. f(n, m) = (n, m)
      \end{array}
  \right] \]
  The first three equations express that $\nm{get}_x$ fetches the value of a
  natural number from some memory location $x$, without changing the
  stored value,
  i.e., they express the theory of \emph{read-only state}. The behaviours
  correspond to functions $\bb{R} \to \bb{N}$ each of which express
  some memory configuration. The fourth equation is
  our non-standard addition, expressing that two distinct memory
  locations cannot contain
  the same value. This makes $\bb{B}_0T$ empty since the admissible
  behaviours in this case corresponds to injective memory
  configurations $\bb{R} \rightarrowtail \bb{N}$, of which there
  are (famously) none.

  However, a term in this signature of operations is well-founded,
  which is to say any
  particular execution of a term in $T$ can only query
  finitely many memory locations. So, from the program's
  perspective, it can never be sure that it is \emph{not} in a
  non-injective state configuration, since it needs to query
  uncountably-many memory locations to make a pigeonhole
  principle argument.
  We
  can think of this as having an uncountable virtual address space over
  countably many physical memory cells---the moment you try to
  query more locations than there are actual cells, the program
  forcibly halts.
\end{example}

Clearly, we cannot recover the monad $T$ of injective read-only state
from the empty
$\bb{B}_0T$, even though it is in some sense possible to reason
consistently about
such state configurations. In fact, there is a \emph{non-trivial
``space'' of injective functions $\bb{R} \rightarrowtail \bb{N}$}
\cite[Example C1.2.9]{Johnstone2002-xx}, but this
requires us to change what we mean by ``space'', choosing a notion of
space that emphasizes the opens, and de-emphasizes the points. Such
``spaces'' are called \emph{locales},
whose theory we now review, before computing the terminal localic comodel.

\subsection{Frames \& Locales}

Locales are spaces where we emphasize the opens, which in a topological space
forms a poset (under $\subset$) with infinite joins (unions
$\bigcup$) and finite meets (intersection $\cap$) that distribute.
Locales are defined more abstractly as any
such poset, but there is some subtlety on the morphisms.

\begin{definition}
  A \emph{frame} is a poset with infinite joins and finite meets
  satisfying the infinite distributive law $x \wedge \left(\bigvee_i
  y_i\right) = \bigvee_i (x \wedge y_i)$. We write $\Frm$ for the
  category of frames and frame homomorphisms, i.e., monotone maps
  which preserve infinite joins and finite meets. A \emph{locale} is
  simply a frame, but the category of locales is $\Loc := \Frm^\op$.
\end{definition}

The reason for this flipping of morphisms is that frame homomorphisms
more closely
resemble the inverse image map $f\inv$ associated to a continuous
function $f : X \to Y$, which
by continuity takes open sets of $Y$ to open sets of $X$, i.e.,
continuous functions
go in the opposite direction to their induced frame homomorphisms. In fact, we
have a functor $\cal{O} \colon \Top \to \Loc$ which sends $X$ to
its frame of open sets $\cal{O}(X)$, and a continuous map $f \colon
X \to Y$ to $f\inv \colon \cal{O}(Y) \rightarrow \cal{O}(X)$.
We imagine a general locale to be the lattice of opens of some space,
and the data of a continuous map to be given by the inverse image
map. To sustain this fantasy, we overload notation by also writing
$\cal{O} \colon \Loc^\op \to \Frm$ for the identity functor,
writing its action on morphisms as $\cal{O}(f) := f\inv$, and
calling elements $u \in \cal{O}(L)$ \emph{opens} of $L$.

\begin{definition}
  Any locale $L$ induces a topological space $\pt(L)$ with set of points
  the locale morphisms $x \colon 1 \to L$ from the terminal locale $1$
  and open sets given by $[u] = \set{x | x\inv u = \top}$ for $u \in
  \cal{O}(L)$. This yields a functor $\pt \colon \Loc \to \Top$,
  right adjoint to $\cal{O}$. A locale $L$ at which the adjunction
  counit is invertible is called \emph{spatial}, while a space at
  which the unit is invertible is called \emph{sober}: thus, spatial
  locales and sober spaces form equivalent categories.
\end{definition}

We make use of the fact---see \cite[II
2.11]{Johnstone1982-ei}---that frames may be constructed by
generators and relations.

\begin{example}(Copowers in $\Loc$)
  The (frame of opens of the)
  copower locale $A \cdot L$ can be presented by generating opens
  of the form $\braket{a \mapsto u}$ for $a \in A$ and $u \in
  \cal{O}(L)$, subject to the equations (1) $\bigvee_i \braket{a
  \mapsto u_i} = \braket{a \mapsto \bigvee_i u_i}$; (2)
  $\bigwedge^k_i \braket{a \mapsto u_i} = \braket{a \mapsto
  \bigwedge^k_i u_i}$; and (3) $\braket{a \mapsto u} \wedge
  \braket{a' \mapsto v} = \bot$ for $a \neq a'$. These equations
  imply that every open is uniquely of the form $\bigvee_{a \in A}
  \braket{a \mapsto u_a}$---which is consistent with the fact that
  $A$-fold copowers in $\Loc$ correspond to $A$-fold products in
  $\Frm$.
\end{example}

Since $\Loc$ has copowers, we may also consider
$\Comod_{T}(\Loc)$, and the adjunction between $\Top$ and $\Loc$ lifts to their
corresponding categories of $T$-comodels as the following proposition
establishes.

\begin{proposition} \label{prop:loc-top-adjunction-for-comodels}
  The adjunction $\cal{O} \dashv \pt \colon \Loc \rightarrow \Top$
  lifts to
  $\cal{O} \dashv \pt \colon \Comod_{T}(\Loc) \rightarrow \Comod_{T}(\Top)$.
\end{proposition}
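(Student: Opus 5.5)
The plan is to apply the general fact that an adjunction $F \dashv G \colon \cal{D} \to \cal{C}$ lifts to categories of comodels whenever the left adjoint $F$ preserves copowers. Here $\cal{O} \colon \Top \to \Loc$ is a left adjoint, so it preserves all colimits, in particular copowers: the comparison $A \cdot \cal{O}(X) \to \cal{O}(A \cdot X)$ is invertible. Concretely, $\cal{O}(\coprod_A X) \cong \prod_A \cal{O}(X)$, which is exactly the frame presenting $A \cdot \cal{O}(X)$ from the copower example. The argument then proceeds in three steps.

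\textbf{Step 1: lift $\cal{O}$.} Given a $\Top$-comodel $(X, \codeno{-})$, I define co-interpretations on $\cal{O}(X)$ by
\[ \cal{O}(X) \xrightarrow{\cal{O}\codeno{t}} \cal{O}(A \cdot X) \cong A \cdot \cal{O}(X). \]
The comodel axioms hold because $\cal{O}$ is a functor and the copower comparison isomorphisms are natural and compatible with coproduct injections and copairing. Hence $\cal{O}[\codeno{u(a)}]_a$ corresponds to $[\cal{O}\codeno{u(a)}]_a$, and $\cal{O}(\upsilon_a)$ corresponds to $\upsilon_a$. Comodel maps are sent to comodel maps by naturality.

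\textbf{Step 2: lift $\pt$.} For $\pt$ I cannot expect preservation of copowers, since right adjoints need not preserve colimits. Instead I use the canonical comparison $\kappa \colon A \cdot \pt(L) \to \pt(A \cdot L)$, which is the copairing of the maps $\pt(\upsilon_a)$. Given a $\Loc$-comodel $(L, \codeno{-})$, I would like to set $\codeno{t}_{\pt L}$ to be $\pt\codeno{t}$ followed by an inverse of $\kappa$. This needs $\kappa$ to be a homeomorphism, and this is the main obstacle.

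To establish it, I would argue directly that a point $x$ of $A \cdot L$ is a frame map $\prod_A \cal{O}(L) \to 2$. The elements $\braket{a \mapsto \top}$ are pairwise disjoint and their join is $\top$, so $x$ sends exactly one of them to $\top$. Therefore $x$ factors through a unique $\upsilon_a$, which shows $\kappa$ is a bijection. It is open because $[\braket{a \mapsto u}]$ corresponds to $\{a\} \times [u]$, so $\kappa$ is a homeomorphism. With this in hand, the comodel axioms for $\pt L$ follow exactly as in Step 1, using naturality of $\kappa$.

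\textbf{Step 3: lift the adjunction.} Finally I show that the unit $\eta_X \colon X \to \pt\cal{O}X$ and the counit $\varepsilon_L \colon \cal{O}\pt L \to L$ are comodel maps. This reduces to the compatibility of $\eta$ and $\varepsilon$ with the copower comparison maps. That compatibility holds because both comparisons are built from the injections $\upsilon_a$, and $\eta$ and $\varepsilon$ are natural. Since $\eta$ and $\varepsilon$ are comodel maps, the triangle identities hold in the comodel categories because they hold underlying. This yields $\cal{O} \dashv \pt$ on $\Comod_T(\Loc)$ and $\Comod_T(\Top)$.
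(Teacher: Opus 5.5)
Your proof is correct and follows the paper's route. Both arguments reduce the lift to $\cal{O}$ and $\pt$ preserving copowers, with $\cal{O}$ handled as a left adjoint. For $\pt$, your observation that a point of $A \cdot L$ sends exactly one $\braket{a \mapsto \top}$ to $\top$ is precisely the connectedness of the terminal locale that the paper invokes. You also spell out the openness of $\kappa$ and the check that the unit and counit are comodel maps, which the paper leaves implicit.

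One correction to your opening framing: every left adjoint preserves copowers, so the ``general fact'' as you state it would make every adjunction lift, and that is false. The lift of $\pt$ genuinely needs the right adjoint to preserve copowers, which is exactly what your Step 2 establishes.
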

\begin{proofsketch}
  Since comodels are defined in terms of copowers, it suffices to
  verify that $\cal{O}$ and $\pt$ preserve copowers. Since $\cal{O}$
  is a left adjoint, it preserves colimits and in particular
  copowers. As for $\pt$, this is one of
  its standard properties, which follows from fact that the
  terminal locale $1$ is connected, so that homming out of it
  preserves copowers.
\end{proofsketch}

\subsection{The Terminal Localic Comodel}
\label{subsec:terminal-localic-comodel}

By proposition \ref{prop:loc-top-adjunction-for-comodels}, we see
that the terminal topological comodel has to be the spatialization
of the terminal localic comodel. But by proposition
\ref{prop:operational-topology-adjunction}, the terminal
topological comodel is the set of behaviours, equipped with the
operational topology. This gives us an idea of what the terminal
localic comodel looks like.

\begin{definition} \label{defn:behaviour-locale}
  Let $T$ be a monad on $\Set$. The \emph{behaviour locale}
  $\nm{LB}_0T$ is generated by opens $[b]$ for each $b \in T2$,
  subject to the following equations for all $t \in TA, u :A \to
  TB, a \neq a' \in A$ and $b \in B$.
  \begin{center}
    \noindent
    \begin{minipage}{0.496\linewidth}
      \begin{equation}
        [t \mapsto a] \wedge [t \mapsto a'] = \bot
        \tag{$\nm{LB}_0$-$\bot$} \label{eqn:LB-bot}
      \end{equation}
    \end{minipage}
    \begin{minipage}{0.496\linewidth}
      \begin{equation}
        [t \rskip \return a \mapsto a] = \top
        \tag{$\nm{LB}_0$-$\eta$} \label{eqn:LB-return}
      \end{equation}
    \end{minipage}
  \end{center}
  \begin{equation}
    [t \bind u \mapsto b] = \bigvee_{a \in A} [t \mapsto a] \wedge
    [t \rskip u(a) \mapsto b] \tag{$\nm{LB}_0$-$\mu$} \label{eqn:LB-bind}
  \end{equation}
  Here we write $[t \mapsto a]$ as shorthand for $[t \bind \lambda
  a'. \delta_{a}(a')]$ with $\delta_{a}(a') = 1$ when $a = a'$ and
  $0$ otherwise.
\end{definition}

\begin{proposition} \label{prop:alt-axioms}
  Let $T$ be a monad on $\Set$. Then the following equations
  hold in $\nm{LB}_0T$:
  \begin{equation*}
    \bigvee_{a \in A} [t \mapsto a] = \top \qquad [t \rskip
    \return a \mapsto a'] = \bot \qquad [t \mapsto a] \wedge [t \bind
    u \mapsto b] = [t \mapsto a] \wedge [t \rskip u(a) \mapsto b],
  \end{equation*}
  where $t \in TA$, $a \neq a' \in A$, $u \colon A \to TB$ and $b \in B$.
\end{proposition}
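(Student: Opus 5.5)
We derive all three equations directly from the defining relations \eqref{eqn:LB-bot}, \eqref{eqn:LB-return} and \eqref{eqn:LB-bind} of \cref{defn:behaviour-locale}, in the order: second, first, third.

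\textbf{Second equation.} Fix $a \neq a'$. By \eqref{eqn:LB-bot} applied to the term $t \rskip \return a \in TA$ we have
\[ [t \rskip \return a \mapsto a] \wedge [t \rskip \return a \mapsto a'] = \bot. \]
By \eqref{eqn:LB-return} the first conjunct is $\top$, so $[t \rskip \return a \mapsto a'] = \bot$.

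\textbf{First equation.} Apply \eqref{eqn:LB-bind} with $u := \lambda a. \return a$, so that $t \bind u = t$. For each $b \in A$ this gives
\[ [t \mapsto b] = \bigvee_{a \in A} [t \mapsto a] \wedge [t \rskip \return a \mapsto b]. \]
This only relates $[t\mapsto b]$ to itself, so it cannot yield $\bigvee_a [t \mapsto a] = \top$. Instead apply \eqref{eqn:LB-bind} with the constant family $u := \lambda a. \return \ast \in T1$ and $b = \ast$:
\[ [t \rskip \return \ast \mapsto \ast] = \bigvee_{a \in A} [t \mapsto a] \wedge [t \rskip \return \ast \mapsto \ast]. \]
By \eqref{eqn:LB-return} the left side is $\top$, and each second conjunct on the right is also $\top$. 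Hence $\top = \bigvee_a [t \mapsto a]$. The notation $[s \mapsto \ast]$ makes sense for $s \in T1$ via $\delta_\ast$, so this step is legitimate.

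\textbf{Third equation.} Meet both sides of \eqref{eqn:LB-bind} with $[t \mapsto a]$ and distribute using the frame law:
\[ [t \mapsto a] \wedge [t \bind u \mapsto b] = \bigvee_{a'} [t \mapsto a] \wedge [t \mapsto a'] \wedge [t \rskip u(a') \mapsto b]. \]
For $a' \neq a$ the term vanishes by \eqref{eqn:LB-bot}, leaving only the $a' = a$ summand, which is $[t \mapsto a] \wedge [t \rskip u(a) \mapsto b]$.

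The only step needing care is the first equation: it requires choosing the constant family $u$ rather than the unit, so that \eqref{eqn:LB-return} can be invoked to make the left-hand side $\top$.
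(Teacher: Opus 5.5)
Your proof is correct. Each of the three equations follows from \eqref{eqn:LB-bot}, \eqref{eqn:LB-return} and \eqref{eqn:LB-bind} exactly as you derive it. The paper states this proposition without proof, so your direct derivation is the natural argument it leaves implicit.

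One step deserves to be made explicit. For $t \in TA$, the equality $[t \rskip \return \ast \mapsto \ast] = \top$ is an instance of \eqref{eqn:LB-return} applied to the term $t \rskip \return \ast \in T1$, because $(t \rskip \return \ast) \rskip \return \ast = t \rskip \return \ast$. Separately, you can delete your abandoned first attempt with $u = \lambda a.\return a$.
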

In fact, axiom \eqref{eqn:LB-bind} can equivalently be replaced by
a combination of the first and third equations of proposition
\ref{prop:alt-axioms}. We chose
axioms $\eqref{eqn:LB-bind}$ and \eqref{eqn:LB-return} because of
their resemblance to the admissibility condition of $T$-behaviours
(definition \ref{defn:terminal-comodel}).
In any case, the axioms can be ``discovered'' as the necessary
conditions for proving the universal property of $\nm{LB}_0T$ as
the terminal localic comodel, as in the following proposition.

\begin{proposition} \label{prop:terminal-localic-comodel}
  $\nm{LB}_0T$ is the terminal localic comodel with
  cointerpretation $\codeno{t} \colon \nm{LB}_0T \to A \cdot
  \nm{LB}_0T$ given by: $\codeno{t}\inv \colon \braket{a_0 \mapsto
  [t_0]} \mapsto [t \mapsto a_0] \wedge [t  \rskip t_0]$.
\end{proposition}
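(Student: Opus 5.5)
We prove this in three steps: well-definedness of $\codeno{t}$, the comodel axioms, and terminality.

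\textbf{Well-definedness.} Since $\cal{O}(A \cdot \nm{LB}_0T) \cong \prod_{a\in A}\cal{O}(\nm{LB}_0T)$, a frame map $\cal{O}(A\cdot \nm{LB}_0T) \to \cal{O}(\nm{LB}_0T)$ is the same as an $A$-indexed family of frame maps $\phi_a \colon \cal{O}(\nm{LB}_0T) \to {\downarrow}[t\mapsto a]$ whose images of $\top$ are pairwise disjoint and jointly cover $\top$. For each $a$ we define $\phi_a$ on generators by $[b] \mapsto [t \mapsto a]\wedge[t\rskip b]$. We then check that each axiom \eqref{eqn:LB-bot}, \eqref{eqn:LB-return} and \eqref{eqn:LB-bind}, instantiated at $t_0,u_0$, is sent to a valid equation in ${\downarrow}[t\mapsto a]$. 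For this we first identify $[t\rskip t_0 \mapsto a_0]$ with $[t\rskip (t_0\mapsto a_0)]$ using associativity of $\bind$. Each axiom for the family $t\rskip t_0$ is again an instance of the same axiom, so it survives meeting with $[t\mapsto a]$. The disjointness and covering conditions come from \eqref{eqn:LB-bot} and the first equation of Proposition~\ref{prop:alt-axioms}.

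\textbf{Comodel axioms.} Checking $\codeno{\return a}=\upsilon_a$ reduces, on generators, to $[\return a\mapsto a]=\top$, $[\return a\mapsto a']=\bot$ and $[\return a\rskip b]=[b]$. All three follow from \eqref{eqn:LB-return} and Proposition~\ref{prop:alt-axioms}, since $\return a = \return * \rskip \return a$ as an element of $T1$ composed appropriately. Checking $\codeno{t\bind u}=\codeno{u}\circ\codeno{t}$ on a generator $\braket{b_0\mapsto[s]}$ means comparing two opens. The left side is $[t\bind u\mapsto b_0]\wedge[(t\bind u)\rskip s]$. The right side is $\bigvee_a [t\mapsto a]\wedge[t\rskip u(a)\mapsto b_0]\wedge[t\rskip u(a)\rskip s]$. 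To show they agree, we expand the left side with \eqref{eqn:LB-bind} and rewrite $(t\bind u)\rskip s = t\bind\lambda a.\,(u(a)\rskip s)$. We then apply the third equation of Proposition~\ref{prop:alt-axioms} under $[t\mapsto a]$.

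\textbf{Terminality.} Let $(L,\codeno{-}_L)$ be a localic comodel. A comodel map $h\colon L\to \nm{LB}_0T$ must satisfy $h^{-1}[b] = \codeno{b}_L^{-1}\braket{1\mapsto \top}$, because $[b]=\codeno{b}^{-1}\braket{1\mapsto\top}$ in $\nm{LB}_0T$ (this follows from \eqref{eqn:LB-return}, taking $t_0=\return$). This forces uniqueness. For existence, define $h^{-1}$ on generators by that formula. We must verify the three axioms in $\cal{O}(L)$; more generally, $\codeno{t}_L^{-1}\braket{a\mapsto\top}$ plays the role of $[t\mapsto a]$. Here \eqref{eqn:LB-bot} follows from disjointness of the copower summands. \eqref{eqn:LB-return} follows from $\codeno{t\rskip\return a}_L = \upsilon_a$ composed with the codiagonal, together with the comodel axioms. \eqref{eqn:LB-bind} follows from $\codeno{t\bind u}_L=\codeno{u}_L\circ\codeno{t}_L$, by decomposing along the summands of $A\cdot L$. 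It remains to show that $h$ commutes with cointerpretations, which we check on generators $\braket{a_0\mapsto[t_0]}$. This is again the comodel axiom for $t\rskip t_0$, restricted to the $a_0$-summand.

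\textbf{Main obstacle.} The main difficulty is the bookkeeping in the bind axiom, in both directions. The key identity is
\[
\codeno{u}_L^{-1}\braket{b\mapsto \top} \;=\; \bigvee_a \braket{a\mapsto \codeno{u(a)}_L^{-1}\braket{b\mapsto\top}},
\]
and it must be matched with the identity $[t\rskip u(a)\mapsto b]\wedge[t\mapsto a] = [t\mapsto a]\wedge\codeno{t}^{-1}\braket{a\mapsto[u(a)\mapsto b]}$. The plan is to reduce both identities to the third equation of Proposition~\ref{prop:alt-axioms}.
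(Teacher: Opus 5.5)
Your proposal is correct and takes essentially the paper's route. It uses the same generator-level check of $\codeno{t\bind u}=\codeno{u}\circ\codeno{t}$ and $\codeno{\return a}=\upsilon_a$ via the $\nm{LB}_0$ axioms, and the same uniqueness argument from $[b]=\codeno{b}^{-1}\braket{1\mapsto\top}$. Beyond the paper, you verify that each $\codeno{t}^{-1}$ is a well-defined frame map and sketch existence of the comodel map into $\nm{LB}_0T$; the paper skips the first and leaves the second as an exercise, and both go through as you describe. One small correction: on the $L$ side the bind axiom follows directly from the copairing formula for $\codeno{u}_L$ together with the comodel axioms of $L$, not from Proposition~\ref{prop:alt-axioms}.
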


Now by proposition \ref{prop:loc-top-adjunction-for-comodels} we
conclude that the spatialization of behaviour locale $\nm{LB}_0T$
is the topological space $\bb{B}_0T$. This says that $\nm{LB}_0T$
contains more information than $\bb{B}_0T$, so we ought to check that
it contains \emph{strictly} more by making sure that we have gained something
with respect to example
\ref{ex:locale-of-injective-state-prelude}.

\begin{example} \label{ex:locale-of-injective-state}
  Consider again the monad $T$ of injective read-only state from
  example \ref{ex:locale-of-injective-state-prelude}. Its
  behaviour locale $\nm{LB}_0T$, which in this instance is the the locale of
  injective functions $\bb{R} \rightarrowtail \bb{N}$ from \cite[Example
  C1.2.9]{Johnstone2002-xx}. The underlying frame of this locale is
  generated by opens $\braket{x \mapsto n}$ for $x \in
  \bb{R}$ and $n \in \bb{N}$, under the requirement that, for any $x \neq y$
  and $m \neq n$, the following equations holds:
  \[ \bigvee_{x \in \bb{R}} \braket{x \mapsto n} = \top \qquad
    \braket{x \mapsto n} \wedge \braket{x \mapsto m} = \bot \qquad
  \braket{x \mapsto n} \wedge \braket{y \mapsto n} = \bot. \]
  To see why, notice that by repeated application of axiom
  \eqref{eqn:LB-bind} the behaviour locale for this
  monad $\nm{LB}_0T$ can instead be
  generated by opens of the form $[\nm{get}_x \mapsto n]$, which of
  course coincide with
  the generating opens $\braket{x \mapsto n}$ above. Then it is a
  matter of proving that
  the equations of the behaviour locale reduce, in this instance, to
  the equations above---this proof is documented in
  \cref{subsec:proof-of-injective-state}.
\end{example}

\subsection{Strongly Zero-dimensional Locales and Grothendieck Boolean Algebras}

In proposition \ref{prop:terminal-top-comodel-is-Stone} the terminal
topological comodel is shown to be a Stone space whenever $T$ is
finitary. We now want to
state and prove a generalization of this proposition for infinitary
monads, dropping
the compactness requirement. In the absence of compactness, the
zero-dimensionality
condition of Stone spaces needs to be infinitarily strengthened,
obtaining the \emph{strongly
zero-dimensional} locales of Johnstone
\cite{Johnstone1997-ry}, also called \emph{ultraparacompact} by Van
Name \cite{Van-Name2013-tn}.

\begin{definition} \cite{Van-Name2013-tn}
  Let $L$ be a locale. An open $u \in \cal{O}(L)$ is
  \emph{complemented} if it so in the usual lattice-theoretic
  sense: thus, there is some $v \in \cal{O}(L)$ with $u \wedge v =
  \bot$ and $u \vee v = \top$. The set $\frk{B}(L)$ of complemented
  opens of $L$ inherits finite meets and joins from $L$, and so is
  a Boolean algebra. We say that $L$ is \emph{zero-dimensional} if
  $u \in \cal{O}(L)$ is the join of the complemented opens below it.

  A \emph{cover} of $L$ is a subset $J \subseteq \cal{O} (L)$ such
  that $\bigvee j = \top$. A cover $J$ \emph{refines} $J'$ if for
  every $u \in J$ there is $u' \in J'$ such that $u \leq u'$. An
  \emph{extended partition} $P$ is a pairwise disjoint cover, i.e.,
  $u \wedge v = \bot$ for any $u \neq v \in P$. A \emph{partition}
  $P$ is an extended partition which does not contain $\bot$, and any
  extended partition $P$ induces a partition $P^- = P
  \setminus\set{\bot}$. A zero-dimensional locale $L$ is
  \emph{strongly zero-dimensional} or \emph{ultraparacompact} if
  every open cover is refined by a partition.
\end{definition}

Since ultraparacompact locales generalize Stone spaces, there should be a
corresponding generalization of Stone duality. On the algebraic side, the
corresponding generalization is to consider the
\emph{Grothendieck Boolean algebras}. The notion is originally due
to~\cite{Van-Name2013-tn} but our nomenclature
follows~\cite[Definition 3.6]{Garner2024-yc}.

\begin{definition}
  A \emph{Grothendieck Boolean algebra} $B_\cal{J}$ is a Boolean
  algebra equipped with a \emph{strongly zero-dimensional
  topology}, i.e., a collection $\cal{J}$ of partitions for $B$ such that
  \begin{enumerate}[(i),itemsep=0em]
    \item $\cal{J}$ contains every finite partition;
    \item if $P \in \cal{J}$ and $Q_b \in \cal{J}$ for each $b \in
      P$, then $P;Q := \set{b \wedge c | b \in P, c \in Q_b}^- \in
      \cal{J}$ also;
    \item if $P \in \cal{J}$ and $f \colon P \to I$ is a surjective
      function, then each $\bigvee f\inv(i)$ exists in $B$ and
      $f\inv P := \set{\bigvee f\inv(i) | i \in I} \in \cal{J}$.
  \end{enumerate}
\end{definition}

\begin{theorem} \textup{\cite[Theorem 24]{Van-Name2013-tn}}
  \label{thm:ultraparacompact-duality}
  The category of ultraparacompact locales is dually equivalent to
  the category of Grothendieck Boolean algebras.
\end{theorem}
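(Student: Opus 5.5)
We prove this as an instance of the classical strategy for Stone-type dualities: an adjunction between the two categories, which we then show to be an equivalence on the relevant objects. On the locale side we send an ultraparacompact locale $L$ to $B_{\cal{J}}(L) := (\frk{B}(L), \cal{J}_L)$. Here $\cal{J}_L$ is the set of partitions of $\frk{B}(L)$ whose join, computed in $\cal{O}(L)$, is $\top$. On the algebraic side we send a Grothendieck Boolean algebra $B_{\cal{J}}$ to the locale $\nm{Spec}(B_{\cal{J}})$. Its frame is presented by generators $[b]$ for $b \in B$, subject to the following relations: $[-]$ preserves finite meets and finite joins (including $\top$ and $\bot$), and $\bigvee_{b \in P}[b] = \top$ for each $P \in \cal{J}$. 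Equivalently, $\cal{O}(\nm{Spec}(B_{\cal{J}}))$ is the frame of $\cal{J}$-ideals: downsets $I \subseteq B$ closed under finite joins and under joins of any subfamily of a $\cal{J}$-partition.

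\textbf{Step 1 (functoriality and adjunction).} We first check that $\cal{J}_L$ satisfies axioms (i)--(iii).
\begin{enumerate*}
\item Finite partitions of complemented opens cover $\top$ in $L$.
\item Refining a covering partition by covering partitions of each block again gives a covering partition, by the infinite distributive law.
\item Given $f\colon P \to I$ surjective, each $\bigvee f\inv(i)$ is complemented, with complement $\bigvee_{j\neq i}\bigvee f\inv(j)$. Disjointness again follows from distributivity.
\end{enumerate*}
A frame map preserves complemented opens and covering partitions, so $B_\cal{J}$ is a functor. The adjunction then holds because maps $B_\cal{J}\to B_\cal{J}(L)$ correspond to frame maps $\cal{O}(\nm{Spec}\,B_\cal{J}) \to \cal{O}(L)$. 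This is immediate from the presentation: a frame map out of the presented frame is exactly a Boolean homomorphism into $\frk{B}(L)$ sending $\cal{J}$-partitions to covers, and its image automatically lands in complemented opens.

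\textbf{Step 2 (counit at ultraparacompact $L$).} We show that the comparison $\cal{O}(\nm{Spec}\,B_\cal{J}(L)) \to \cal{O}(L)$, $[b]\mapsto b$, is an isomorphism.
\begin{itemize}
\item \emph{Surjectivity} is zero-dimensionality: every open is a join of complemented opens.
\item \emph{Injectivity} is the main point, and we do it through ideals. Send an open $u$ to $I_u = \{b \in \frk{B}(L) \mid b \le u\}$, and show that $I \mapsto \bigvee I$ is injective on $\cal{J}$-ideals.
\end{itemize}
Suppose $c \le \bigvee I$ with $c$ complemented. The family $I\wedge c \cup \{\neg c\}$ is an open cover of $L$. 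Ultraparacompactness gives a partition refining it, and we may replace its members by complemented opens: every block of a partition of $\top$ is complemented, with complement the join of the others. Hence this partition lies in $\cal{J}_L$. Its blocks below $c$ form a subfamily of a $\cal{J}$-partition, each member of which lies in $I$. Their join is $c$, so $c \in I$ by $\cal{J}$-closure. Thus $I = I_{\bigvee I}$, proving injectivity.

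\textbf{Step 3 (unit at $B_\cal{J}$).} This step is the expected main obstacle. We must show that $b\mapsto [b]$ is a bijection $B \cong \frk{B}(\nm{Spec}\,B_\cal{J})$, and that the covering partitions there are exactly $\cal{J}$. We also need $\nm{Spec}\,B_\cal{J}$ to be ultraparacompact.
\begin{itemize}
\item \emph{Injectivity of $b \mapsto [b]$.} Principal downsets ${\downarrow} b$ are $\cal{J}$-ideals. This uses axiom (iii) to see that the join of a subfamily of a $\cal{J}$-partition exists and is compatible with $\le b$.
\item \emph{Surjectivity.} Take complementary ideals $I, I'$ with $I \vee I' = \top$. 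In the ideal frame, $\top \in I\vee I'$ means that $\top$ is a $\cal{J}$-covered join of elements of $I\cup I'$, using (ii) to iterate the closure. So some $P\in\cal{J}$ has every block in $I$ or $I'$. Applying (iii) to the two-valued map $P \to \{I, I'\}$ gives $b = \bigvee(P\cap I)$ with $I = {\downarrow} b$.
\item \emph{Covering partitions equal $\cal{J}$.} This follows by the same argument.
\item \emph{Ultraparacompactness.} A cover by ideals is refined by a $\cal{J}$-partition, obtained from the same iteration analysis of how $\top$ enters a join of ideals.
\end{itemize}
The delicate point throughout is describing joins in the frame of $\cal{J}$-ideals as transfinite iterations of $\cal{J}$-closure, and showing via (ii) that one step already suffices. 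We would first try to prove directly that the set of $b$ admitting a $\cal{J}$-partition below $b$ that refines a given family of ideals is itself a $\cal{J}$-ideal.

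Steps 2 and 3 make unit and counit invertible, which yields the duality.
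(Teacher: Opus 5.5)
Your proposal is correct and follows the same route as the paper. The paper uses exactly these two constructions, namely $\frk{B}(L)$ with its covering partitions and the frame of $\cal{J}$-ideals, and defers the verification to Van Name; your adjunction with invertible unit and counit supplies that verification. The one-step closure you flag does hold: for a lattice ideal $D$, the set of $b$ with $b \wedge p \in D$ for every block $p$ of some $P \in \cal{J}$ is already a $\cal{J}$-ideal by (ii), and since each resulting block lies below a finite join $i_1 \vee \cdots \vee i_n$ of elements of the given ideals, one further finite refinement (using (i) and (ii)) makes every block lie in a single ideal, which is what your surjectivity, covering-partition and ultraparacompactness arguments need.
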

\begin{proof}
  We sketch just the constructions. An ultraparacompact locale $L$
  induces a Boolean algebra $\frk{B}(L)$ with strongly
  zero-dimensional topology given by the partitions of $L$ (the
  opens in a partition are necessarily complemented). On the other
  hand, a Grothendieck Boolean algebra $B_\cal{J}$ generates a
  locale of $\cal{J}$-closed ideals in the usual way (as
    explained by Vickers \cite{Vickers1996-vs} or Johnstone \cite[II
  2.11]{Johnstone1982-ei}).
\end{proof}

We are now in the position to show that $\nm{LB}_0T$ is
ultraparacompact. The intuition
here is that the equations defining $\nm{LB}_0T$ can be re-expressed
as generating equations for a Boolean algebra plus a strongly zero-dimensional
topology on that algebra.
The topology is essentially generated by the first equation of proposition
\ref{prop:alt-axioms}). When $T$ is
finitary, this is expressible as a finite disjunction, so the topology
is the topology of finite partitions, and hence $\nm{LB}_0T$ is compact because
every open cover in $\nm{LB}_0T$ is by construction refinable by a
finite partition,
which yields a finite subcover. Hence we have the following proposition, whose
proof details can be found in \cref{subsec:proof-of-LB0T-ultraparacompact}.

\begin{proposition} \label{prop:LB0T-ultraparacompact}
Let $T$ be a monad. Then $\nm{LB}_0T$ is ultraparacompact. Moreover,
if $T$ is finitary, then $\nm{LB}_0T$ is compact.
\end{proposition}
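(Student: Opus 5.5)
The plan is to exhibit $\nm{LB}_0T$ as the locale of $\cal{J}$-closed ideals of a Grothendieck Boolean algebra $B_\cal{J}$, and then invoke \cref{thm:ultraparacompact-duality}.

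\textbf{The Boolean algebra.} Take $B$ to be the Boolean algebra presented by generators $[b]$ for $b \in T2$, subject to the finitary consequences of the defining equations. Concretely, I would set $B$ to be the sub-Boolean algebra of $\cal{O}(\nm{LB}_0T)$ generated by the opens $[t \mapsto a]$.

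Each $[t\mapsto a]$ is complemented, with complement $[t \mapsto a]^c = [t \bind \lambda a'.\,\delta_{a}(a') \mapsto 0]$. This follows from \eqref{eqn:LB-bot} and the first equation of \cref{prop:alt-axioms} applied to the binary term $t \bind \delta_a$. Consequently every element of $B$ is complemented in the frame.

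Using \eqref{eqn:LB-bind} and the third equation of \cref{prop:alt-axioms}, finite meets of generators collapse to single generators. For instance, $[t\mapsto a]\wedge[s\mapsto c] = [t \bind \lambda x.\,(s \bind \lambda y.\,\return(x,y)) \mapsto (a,c)]$. So I expect $B$ to be, up to finite joins, spanned by generators of the form $[t\mapsto a]$.

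\textbf{The topology.} Let $\cal{J}$ be generated by the partitions $P_t = \set{[t\mapsto a] \mid a\in A}^-$ for $t \in TA$, together with all finite partitions. It is then closed under conditions (ii) and (iii):
\begin{enumerate}[(i),itemsep=0em]
\item[(ii)] Refining $P_t$ by $P_{u(a)}$-type partitions on each block corresponds to the partition $P_{t\bind \lambda a.\, t\rskip u(a) \ldots}$. More cleanly, it corresponds to $P_{t \bind \lambda a.\,(t\rskip u(a))\bind\lambda b.\,\return(a,b)}$, using \eqref{eqn:LB-bind}.
\item[(iii)] Coarsening $P_t$ along $f$ corresponds to $P_{t\bind \return\circ f}$.
\end{enumerate}
I will therefore take $\cal{J} = \set{P_t}$ closed up under these operations, verifying that it already forms a strongly zero-dimensional topology.

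\textbf{Identifying the locale.} Next I check that the frame of $\cal{J}$-closed ideals has the same presentation as $\nm{LB}_0T$, namely that the two universal properties coincide. The defining relations \eqref{eqn:LB-bot}, \eqref{eqn:LB-return} and \eqref{eqn:LB-bind} are exactly finitary Boolean relations plus the covering relations $\bigvee_a [t\mapsto a] = \top$ for each $P_t$. Meanwhile, the $\cal{J}$-ideal frame is presented by $B$'s finitary relations plus the requirement that the $\cal{J}$-partitions cover. Hence $\nm{LB}_0T$ is ultraparacompact.

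This is the main obstacle. The delicate point is showing that no additional relations are forced, so that $B$ embeds and the infinitary relation \eqref{eqn:LB-bind} is captured by covers. I would handle this by decomposing \eqref{eqn:LB-bind} into the first and third equations of \cref{prop:alt-axioms}, as the remark after it permits: the third equation is finitary, and the first is a $\cal{J}$-cover.

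\textbf{Compactness.} Suppose $T$ is finitary. Each $t\in TA$ factors as $t'\bind \return\circ f$ with $t' \in TI$ and $I$ finite. By (LB$_0$-$\eta$)/bind, $[t\mapsto a]=\bigvee_{f(i)=a}[t'\mapsto i]$, so $P_t$ has only finitely many nonzero blocks. Thus $\cal{J}$ consists of finite partitions.

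Now let $C$ be an arbitrary open cover. By ultraparacompactness it is refined by a partition, and by the duality that partition may be taken in $\cal{J}$. A partition in $\cal{J}$ is finite, and choosing for each of its blocks a member of $C$ above it yields a finite subcover. Hence $\nm{LB}_0T$ is compact.
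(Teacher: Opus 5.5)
\textbf{There is a genuine gap, and it is in the construction of $B_{\cal{J}}$.} Since your $B$ sits inside $\cal{O}\nm{LB}_0T$, the identification step you flag would be fairly harmless if a genuine topology were available.

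\emph{The meet claim is false.} By \eqref{eqn:LB-bind}, your term gives
\[ [t \bind \lambda x.\, s \bind \lambda y.\, \return (x,y) \mapsto (a,c)] = [t\mapsto a]\wedge[t \rskip s\mapsto c], \]
because $s$ runs in the state left behind by $t$. For binary input with $t=s=\nm{flip}$, $a=0$, $c=1$, the meet is $\bot$ by \eqref{eqn:LB-bot}, while your term picks out the streams beginning $01$.

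In general no term computes the meet at all. In the free monad on two binary operations $\nm{f},\nm{g}$, a tree rooted at $\nm{f}$ can never observe what $\nm{g}$ would have returned in the initial state. Checking on points, $[\nm{f}\mapsto 1]\wedge[\nm{g}\mapsto 1]$ is not $[r]$ for any $r\in T2$. Closure under meets does hold for hyperaffine-unary $T$, where one can scry with $\overline{t}$; that is Lemma~\ref{lemma:H2-of-hyperaffine-unary-monad}, which your plan essentially reproduces. It fails in general.

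\emph{Axiom (ii) fails for the same reason.} Refining the block $[t\mapsto a]$ by $P_{u(a)}$ yields $[t\mapsto a]\wedge[u(a)\mapsto b]$, not a block of any $P_r$. Your formula even runs $t$ twice.

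\emph{Axiom (iii) can fail outright.} Once $\cal{J}$ contains the finite partitions of $B$, as (i) demands, and is closed under (ii), take the free monad on $\nm{f}/\bb{N}$ and $\nm{g}_n/2$ ($n\in\bb{N}$). Refine $P_{\nm{f}}$ on each block $[\nm{f}\mapsto n]$ by $\{[\nm{g}_n\mapsto 1],[\nm{g}_n\mapsto 0]\}$, then coarsen to two blocks. Axiom (iii) requires $\bigvee_n[\nm{f}\mapsto n]\wedge[\nm{g}_n\mapsto 1]$ to exist in $B$. However, each generator only observes responses to operation sequences beginning with its root operation. So every element of $B$ depends on the initial response of $\nm{g}_n$ for only finitely many $n$, and one checks on points that this family has no least upper bound in $B$.

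\emph{Consequence.} In general no strongly zero-dimensional topology on your $B$ contains all the $P_t$. So Theorem~\ref{thm:ultraparacompact-duality} cannot be invoked, and the real content is never proved: that every cover of the presented frame is refined by a disjoint one. For finitary $T$ your plan does go through, with $\cal{J}$ the finite partitions of $B$: every $P_t$ is then finite and $\nm{LB}_0T$ is just the Stone locale of $B$.

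\textbf{The missing idea is that you need neither a Boolean algebra nor axiom (iii).} The paper leaves finite meets of generators uncollapsed. It works in the meet-semilattice $\nm{MB}_0T$ presented by $[t\rskip\return a\mapsto a]=\top$ and the third equation of Proposition~\ref{prop:alt-axioms}. It then presents $\cal{O}\nm{LB}_0T$ as the frame of $\cal{J}$-ideals for the coverage generated by:
\begin{itemize}
\item the covers $\{[t\mapsto a]\mid a\in A\}$ of $\top$;
\item the empty cover of $[t\mapsto a]\wedge[t\mapsto a']$ for $a\neq a'$;
\item the trivial covers,
\end{itemize}
closed under meet-stability and composition. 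The base covers are pairwise disjoint and both closure rules preserve disjointness, so every generated cover is disjoint. A family of ideals covers $\top$ exactly when its union contains some $J\in\cal{J}(\top)$, so the ideals $\downarrow j$ for $j\in J$ form a refining partition. For finitary $T$, each base cover has only finitely many nonzero members and the closure rules preserve this, which yields compactness.
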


Now, we arrived at the notion of ultraparacompactness by eliminating
compactness,
so in the presence of compactness, ultraparacompactness amounts to
being a Stone space. So in particular, while example
\ref{ex:locale-of-injective-state} shows that we need
the full force of locales for infinitary
monads, it suffices to consider the Stone space $\bb{B}_0T$ for finitary monads.

\begin{proposition} \label{prop:compact-and-ultraparacompact-is-stone}
Let $L$ be a compact locale. Then $L$ is ultraparacompact iff $L$ is spatial and
$\pt L$ is a Stone space, so in particular $\pt \nm{LB}_0T \cong \bb{B}_0T$ is a
Stone space (thus also proving proposition
\ref{prop:terminal-top-comodel-is-Stone}).
\end{proposition}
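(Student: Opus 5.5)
For the forward direction, suppose $L$ is compact and ultraparacompact. Zero-dimensionality says every open is a join of complemented opens, so $\frk{B}(L)$ generates $\cal{O}(L)$ under joins. Compactness lets us reduce every cover of $\top$ by complemented opens to a finite subcover. Each finite cover by complemented opens can then be refined to a finite partition by the usual disjointification $u_1, u_2 \wedge \neg u_1, \ldots$. So the strongly zero-dimensional topology of $\frk{B}(L)$ from \cref{thm:ultraparacompact-duality} is exactly the topology of finite partitions. Under that duality, $L$ is the locale of ideals of the Boolean algebra $\frk{B}(L)$, since closure under finite partitions only asks ideals to be downward closed and closed under finite joins. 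This locale is $\cal{O}$ of the Stone space $\nm{Spec}(\frk{B}(L))$ by classical Stone duality: an ideal $I$ corresponds to the open $\bigcup_{b \in I}[b]$, and the correspondence is bijective by compactness of the Stone space. Hence $L$ is spatial and $\pt L \cong \nm{Spec}(\frk{B}(L))$ is Stone.

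For the converse, suppose $L \cong \cal{O}(X)$ with $X$ a Stone space. Zero-dimensionality holds because the clopens of $X$ form a basis, and the clopens are exactly the complemented opens. Now take an open cover $J$. Write each member as a union of clopens, and use compactness to extract finitely many clopens $c_1,\ldots,c_n$ that cover $X$, each lying inside some member of $J$. The disjointification $c_1, c_2\setminus c_1, \ldots, c_n \setminus (c_1\cup\cdots\cup c_{n-1})$, with empty pieces removed, is a partition refining $J$.

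For the "so in particular" clause, \cref{prop:LB0T-ultraparacompact} gives that $\nm{LB}_0T$ is compact and ultraparacompact when $T$ is finitary. So $\nm{LB}_0T$ is spatial with $\pt\nm{LB}_0T$ Stone. \Cref{prop:loc-top-adjunction-for-comodels} together with terminality (\cref{prop:terminal-localic-comodel} and \cref{prop:operational-topology-adjunction}) identifies $\pt\nm{LB}_0T$ with $\bb{B}_0T$ under the operational topology. Indeed, $\pt$ is a right adjoint on comodels and so preserves the terminal object. This yields \cref{prop:terminal-top-comodel-is-Stone}.

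The main obstacle is the identification step in the forward direction: making precise that the locale of $\cal{J}$-closed ideals, for the finite-partition topology, coincides with the frame of ideals of $\frk{B}(L)$, and that this frame is spatial. One could avoid citing Van Name here and argue directly. The map from ideals of $\frk{B}(L)$ to $\cal{O}(L)$ sending $I \mapsto \bigvee I$ is surjective by zero-dimensionality. It is injective because, by compactness, a complemented $c \le \bigvee I$ lies below a finite join of elements of $I$, and that finite join belongs to $I$. Spatiality of ideal frames of Boolean algebras is then the prime ideal theorem, so it uses some choice, which is standard for the spatial part of Stone duality.
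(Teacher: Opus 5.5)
Your proposal is correct and follows essentially the same route as the paper. In the forward direction, both arguments identify $\cal{O}(L)$ with the frame of ideals of $\frk{B}(L)$ (the paper says $L$ is ``freely generated'' by $\frk{B}(L)$) and get spatiality from the prime ideal theorem (the paper cites Johnstone II 3.4); the converse is the same compactness-plus-disjointification argument. The differences are cosmetic: you obtain Stone-ness of $\pt L$ from $\pt L \cong \nm{Spec}(\frk{B}(L))$ where the paper checks compactness, zero-dimensionality and Hausdorffness directly, and your check that $I \mapsto \bigvee I$ is an order-isomorphism proves a step the paper only asserts.
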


\begin{remark}
Every Boolean algebra can be regarded as a Grothendieck Boolean
algebra under the topology of finite partitions. In that case,
the dual locale is spatial, and as a topological
space is precisely the classical Stone dual. In this
way, the equivalence of \ref{thm:ultraparacompact-duality} subsumes
the usual Stone duality.
\end{remark}

\section{The Localic Behaviour Category}
\label{sec:localic-behaviour-category}

In this section, we construct the localic behaviour category
$\nm{LB}T$. Following the construction of the behaviour category,
the locale of objects $\nm{LB}_0T$ can be characterized as the
terminal localic comodel. Hence, the main battle in this section is defining
the \emph{locale of morphisms} $\nm{LB}_1T$.

Consider the usual behaviour category to begin with.
For a monad $T$, any operation $t \in TA$ induces a section
$\eta(t) \colon \bb{B}_0T \to A \cdot \bb{B}_1T$ of the behaviour
category's source map $\sigma : \bb{B}_1T \to \bb{B}_0T$. This
suggests we should recover a monad from
the behaviour category by taking such sections. But as we have
already explained in subsection \ref{subsec:classical-stone-duality},
we must consider only the \emph{continuous sections}.

Now, we can flip this argument on its head: we are trying to construct a map
$\sigma : \nm{LB}_1T \to \nm{LB}_0T$ without knowing the domain
space. The only criterion
we have is that we know what continuous sections it should admit,
namely the ones
induced by $\eta(t)$ for all $t \in TA$. In this case, it is well-known that the
problem admits a universal solution: first, describe a sheaf
consisting of the ``sections''
which you want to be continuous, and then second, apply the
well-known correspondence
between sheaves and \emph{local homeomorphisms} to obtain the desired map
$\sigma : \nm{LB}_1T \to \nm{LB}_0T$.

In order to describe this sheaf, we need a way to talk about systems
for generating
sheaves by generators and relations. Luckily, our base space
$\nm{LB}_0T$ being strongly
zero-dimensional makes this quite a pleasant experience, because
sheaves over $\nm{LB}_0T$ are
``almost'' expressible as algebraic objects admitting an action by
the Grothendieck Boolean
algebra $B_{\cal{J}}$ generating $\nm{LB}_0T$. These algebraic
objects are called $B_{\cal{J}}$-sets, whose theory we now introduce.

\subsection{Sheaves, Local Homeomorphisms \&
\texorpdfstring{$B_{\cal{J}}$}{BJ}-Sets}

An important aspect of our results is that the source map $\sigma
\colon \nm{LB}_1T \rightarrow \nm{LB}_0T$ of the localic behaviour
category is a local homeomorphism. Here, a map $f$ of locales is a
\emph{local homeomorphism} if there is a cover $\set{v_i}_i$ of its
domain such that, on each part of this cover, the map $f$ restricts
to an open injection. 
It is well-known that local homeomorphisms into a locale correspond
to sheaves on a locale; and since $\nm{LB}_0T$ is ultraparacompact,
this leads to a particularly appealing description of the source
map. For indeed, sheaves on an ultraparacompact locale---or at
least, those possessing a global section---can be described purely
algebraically as sets with a suitable action of the corresponding
Grothendieck Boolean algebra. This was first shown by Bergman
\cite{Bergman1991-cc} for Boolean algebras, and later extended to
the Grothendieck case \cite{Garner2024-yc}. One advantage of
this presentation is that it makes clear what homomorphisms and
congruences of $B_{\cal{J}}$-sets are.

\begin{definition} ($B_{\cal{J}}$-sets) \label{defn:BJ-set}
Let $B$ be a non-degenerate Boolean algebra (i.e., $0 \neq 1$ in
$B$). A $B$-set $F$ consists of a set $|F|$ equipped with one
binary operation $b(-, -)$ for each $b \in B$ satisfying the equations
\begin{equation}
  \begin{alignedat}{3}
    & b(x, x) = x \qquad && b(b(x, y), z) = b(x, z) \qquad &&
    b(x, b(y, z)) = b(x, z) \\
    & \top(x, y) = x \qquad && (\neg b)(x, y) = b(y, x) \qquad &&
    (b \wedge c)(x, y) = b(c(x, y), y)
  \end{alignedat}  \label{eqn:B-set-axioms}
\end{equation}
for all $x, y, z \in |F|$. If $\cal{J}$ is a strongly
zero-dimensional topology on $B$, then a $B_\cal{J}$-set $F$
consists of a $B$-set $F$ further equipped with a $P$-ary
operation $P \colon |F|^P \to |F|$ for each partition $P \in
\cal{J}$. These operations are required to satisfy, for any $z
\in |F|$ and families $x, y \in |F|^P$, the axioms
\begin{equation}
  P(\lambda b. z) = z \qquad P(\lambda b. b(x_b, y_b)) =
  P(\lambda b. x_b) \qquad \text{and} \qquad b(P(x), x_b) =
  x_b\rlap{ .} \label{eqn:BJ-set-axioms}
\end{equation}
\end{definition}

These axioms are rather intuitive if one reads each operation $b$
as an if-then-else operation, and the infinitary operations $P$ as
infinitary switch statements. To see the correspondence with
sheaves, view the elements of a $B_{\cal{J}}$-set as a global
section. Then the operations perform amalgamation: for example
$b(x, y)$ is the unique amalgam of $x|_{b}$ and $y|_{\neg b}$. We
don't have to explicitly track local sections because if
we have any global section $t$ at all, then a local section $s$
over $b$ manifests as a global section by taking the amalgamation
of $s$ and $t|_{\neg b}$. Hence, the category of non-empty
$B_{\cal{J}}$-sets is equivalent to the category of sheaves over
the locale presented by $B_{\cal{J}}$ that have a global section.

Because every local section of a $B_{\cal{J}}$-set $F$ comes from
some global section, the set of local sections over some $b$ is a
quotient of $|F|$, by the relation $\equiv_b$ defined as follows.

\begin{proposition} \cite[Proposition 2.6]{Garner2025-hr}
Let $B_\cal{J}$ be a non-degenerate Grothendieck Boolean algebra.
Any $B_\cal{J}$-set structure on a set $|F|$ induces equivalence
relations $\equiv_b$ for $b \in B$ given by $x \equiv_b y \iff b(x,
y) = y$. These equivalence relations satisfy:
\begin{enumerate*}
\item if $x \equiv_b y$ and $c \leq b$ then $x \equiv_c y$;
\item $x \equiv_\top y$ iff $x = y$, and $x \equiv_\bot y$ always;
\item for any $P \in \cal{J}$ and $x \in X^P$, there is a unique
  $z \in X$ such that $z \equiv_b x_b$ for all $b \in P$.
\end{enumerate*}
In fact, any $B$-indexed family of equivalence relations on $|F|$
satisfying (i)--(iii) determine a $B_\cal{J}$-set, wherein
$P(\lambda b.x_b)$ is the aforementioned unique $z$. With this
alternative definition, a $B_\cal{J}$-set homomorphism is a
function that preserves the $\equiv_b$ relations.
\end{proposition}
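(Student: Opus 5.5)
To prove the proposition, I will verify the three numbered properties directly from the $B$-set axioms \eqref{eqn:B-set-axioms} and the $B_\cal{J}$-set axioms \eqref{eqn:BJ-set-axioms}. I will then check the converse by building the operations back from the relations and confirming that the two constructions are mutually inverse.

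First, I show that each $\equiv_b$ is an equivalence relation. I will use the reformulation that $x \equiv_b y$ holds iff $b(x,z) = b(y,z)$ for all $z$; the abstract relation $b(x,y)=y$ is awkward, while this form is plainly an equivalence.
\begin{itemize}
\item From $b(x,y)=y$ we get $b(x,z) = b(x,b(y,z)) = b(y,z)$, using the axiom $b(x,b(y,z))=b(x,z)$.
\item Conversely, if $b(x,z)=b(y,z)$ for all $z$, take $z=y$: then $b(x,y)=b(y,y)=y$.
\end{itemize}
The three properties then follow.
\begin{enumerate}[(i)]
\item For $c\le b$ we have $c = b\wedge c$. Hence $c(x,z) = (b\wedge c)(x,z) = c(b(x,z),z)$, using $(b \wedge c)(x, y) = b(c(x, y), y)$ together with the symmetric form obtained via negation and commutativity of $\wedge$. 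So $c(x,z)$ depends on $x$ only through $b(x,z)$, which gives monotonicity.
\item We have $\top(x,y)=x$, and $\bot(x,y) = (\neg\top)(x,y) = \top(y,x) = y$.
\item For existence, put $z = P(x)$; the third axiom of \eqref{eqn:BJ-set-axioms} gives $b(z,x_b)=x_b$, i.e.\ $z\equiv_b x_b$. For uniqueness, suppose $z'\equiv_b x_b$ for all $b\in P$. Then $b(z',x_b)=x_b$, so
\[ z' = P(\lambda b.\, z') = P(\lambda b.\, b(z',x_b)) = P(\lambda b.\, x_b) = z \]
by the first two axioms of \eqref{eqn:BJ-set-axioms}.
\end{enumerate}

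For the converse, start from relations satisfying (i)--(iii). For a finite partition $\{b,\neg b\}$, which lies in $\cal{J}$, define $b(x,y)$ to be the unique $z$ with $z\equiv_b x$ and $z\equiv_{\neg b} y$. Define $P(x)$ as the unique gluing given by (iii). Each axiom is then an equation between elements characterised by gluing conditions, so I verify it by checking that both sides are $\equiv_c$-equivalent on the pieces $c$ of a suitable partition and invoking uniqueness in (iii).

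The main technical step is the following \emph{separation} fact: if $x\equiv_c y$ for all $c$ in some $Q\in\cal{J}$, then $x=y$. This follows from uniqueness in (iii) applied with $x_c = x$. With it, for instance,
\begin{itemize}
\item $(b\wedge c)(x,y) = b(c(x,y),y)$ is checked on the finite partition $\{b\wedge c,\, b\wedge\neg c,\, \neg b\}$, which lies in $\cal{J}$, using (i) on each piece;
\item $b(P(x),x_b) = x_b$ is checked on $\{b,\neg b\}$.
\end{itemize}

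Finally I check the round trips.
\begin{itemize}
\item Starting from a $B_\cal{J}$-set, the relations-defined $b(x,y)$ is the unique element $z$ with $b(z,x)=x$ and $\neg b(z,y)=y$. The original $b(x,y)$ satisfies these by the absorption axioms, so the two agree. The same argument recovers $P$.
\item Starting from relations, the induced relation $b(x,y)=y$ recovers $\equiv_b$. Indeed, if $x\equiv_b y$, then $y$ itself satisfies the gluing conditions defining $b(x,y)$, so $b(x,y)=y$. Conversely, if $b(x,y)=y$, then $y = b(x,y) \equiv_b x$.
\end{itemize}

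For homomorphisms, a map preserving the operations preserves $\equiv_b$, since $\equiv_b$ is defined equationally. Conversely, a map preserving every $\equiv_b$ sends the glued element to an element satisfying the image gluing conditions, so by uniqueness it preserves the operations.

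I expect the main obstacle to be the axiom bookkeeping in the converse direction: the absorption laws must be derived from uniqueness by choosing the right refining partition each time. The key lemma is separation via (iii).
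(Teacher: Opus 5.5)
The paper does not prove this statement; it quotes it from \cite[Proposition 2.6]{Garner2025-hr}, so there is no in-paper argument to compare against. Your proof is a correct, self-contained one along the natural lines. In the forward direction you derive (i)--(iii) from the axioms via the reformulation $x \equiv_b y \iff \forall z.\, b(x,z)=b(y,z)$. In the converse you define the operations by gluing and verify each equation with the separation consequence of uniqueness in (iii). I checked the axioms you did not write out, and they go through the same way. So do both round trips and the characterisation of homomorphisms.

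Two small repairs are needed.

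\begin{enumerate}
\item In showing that $b(x,y)=y$ implies $b(x,z)=b(y,z)$, your step $b(x,b(y,z)) = b(y,z)$ is not justified by the axiom you cite. Given $b(x,b(y,z)) = b(x,z)$, that step is equivalent to the claim you are proving. The fix is immediate: write $b(y,z) = b(b(x,y),z) = b(x,z)$, using the other absorption law.
\item In the converse, when $b \in \{\top,\bot\}$ the set $\{b,\neg b\}$ contains $\bot$, so it is not a partition. Use $\{b,\neg b\}^-$ instead; the condition at $\bot$ is vacuous by (ii).
\end{enumerate}

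A minor simplification: in (i) you do not need negation. The axiom instance $(c\wedge b)(x,z) = c(b(x,z),z)$, together with $c\wedge b = c$ for $c\le b$, already gives $c(x,z) = c(b(x,z),z)$.
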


The sheaf corresponding to the source map will be constructed as a
quotient of a free $B_{\cal{J}}$-set, so it is instructive to
construct the free $B_{\cal{J}}$-set explicitly.

\begin{definition} (Free $B_{\cal{J}}$-sets) \label{defn:free-BJ-sets}
Let $A$ be a set and $B_{\cal{J}}$ a non-degenerate Grothendieck
Boolean algebra. Then the \emph{Grothendieck Boolean power}
$A[B]^{\cal{J}}$ is the set of functions $h \colon A \to B$ for
which $\set{h(a) | a \in A}^- \in \cal{J}$.
\end{definition}

\begin{remark} \label{remark:equational-presentation-of-TB}
Notice that the following
construction is precisely the monad of partitions $T_B$ when taking $\cal{J}$
to be the topology of finite partitions. In other words, the $B$-sets are
precisely the Eilenberg-Moore algebras of $T_B$, and definition
\ref{defn:BJ-set}
is the equational presentation of $T_B$. Of course, we can equally well
define a monad of partitions $T_{B_{\cal{J}}}$ for Grothendieck
Boolean algebras.
\end{remark}

\begin{proposition} \cite[Remark 3.17]{Garner2024-yc}
\label{prop:free-BJ-sets}
Let $A$ be a set. Then $A[B]^{\cal{J}}$ has a $B_{\cal{J}}$-set
structure given by $P(\lambda b. h_b) := \lambda a. \bigvee_b b
\wedge h_b(a)$, and this is the free $B_{\cal{J}}$-set with
$A$-many generators. The unit map $A \to A[B]^{\cal{J}}$
identifies $a \in A$ with the map $\delta_a$ for which
$\delta_a(a) := \top$ and $\delta_a(a') := \bot$ for $a' \neq a$.
\end{proposition}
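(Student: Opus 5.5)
We have to prove that $A[B]^{\cal{J}}$, with the operations $P(\lambda b.h_b) := \lambda a.\bigvee_b b \wedge h_b(a)$, is a $B_{\cal{J}}$-set, and that it is free on $A$ via $a \mapsto \delta_a$. The binary operation $c(h,k)$ is the special case of the two-element partition $\{c,\neg c\}$, namely $\lambda a.(c\wedge h(a))\vee(\neg c\wedge k(a))$.

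\textbf{Well-definedness.} Given $P\in\cal{J}$ and $h_b\in A[B]^{\cal{J}}$ for each $b \in P$, let $Q_b$ be the partition $\{h_b(a)\}_a^-$. By axiom (ii), the refinement $P;Q$ lies in $\cal{J}$. Define $f\colon P;Q\to A$ by sending $b\wedge h_b(a)$ to $a$. This is well defined, since distinct $a$ give disjoint elements and by disjointness of $P$ any element of $P;Q$ determines its $b$. Axiom (iii) then applied to the corestriction of $f$ onto its image shows that the joins $\bigvee_b b\wedge h_b(a)$ exist and form a partition in $\cal{J}$, adding $\bot$ at the remaining $a$. So $P(\lambda b.h_b)\in A[B]^{\cal{J}}$. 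The care needed is that the joins only exist because of axiom (iii), since $B$ is not complete.

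\textbf{Axioms.} Rather than checking \eqref{eqn:B-set-axioms} and \eqref{eqn:BJ-set-axioms} equation by equation, I would use the alternative characterization via the relations $h\equiv_c k \iff c\wedge h(a)=c\wedge k(a)$ for all $a$. Properties (i) and (ii) are immediate. For (ii) one uses that $h\equiv_\top k$ forces $h=k$, and that the characterization $c(h,k)=k$ agrees with this relation. For (iii), the element $z=P(\lambda b.h_b)$ satisfies $b\wedge z(a)=b\wedge h_b(a)$ by disjointness of $P$, which needs distributivity over the existing join $\bigvee$. Uniqueness holds because if $z'\equiv_b h_b$ for all $b\in P$, then $z'(a)=\bigvee_b b\wedge z'(a)$. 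The step $z'(a)=\bigvee_b (b\wedge z'(a))$ is where care is needed: it requires meets to distribute over $\cal{J}$-joins. I would derive this from axiom (iii) applied to the refinement of $P$ by the finite partition $\{z'(a),\neg z'(a)\}$, using (i) and (ii), which exhibits $z'(a)$ itself as the relevant join.

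\textbf{Freeness.} Let $F$ be a $B_{\cal{J}}$-set and $g\colon A\to|F|$. Every $h\in A[B]^{\cal{J}}$ equals $\bar h(\lambda c.\delta_{a_c})$, where $\bar h = \{h(a)\}^-\in\cal{J}$ and $a_c$ is the unique $a$ with $h(a)=c$. This forces the extension $\hat g(h):=\bar h(\lambda c.\,g(a_c))$, which gives uniqueness. For existence, $\hat g$ must preserve each $\equiv_c$. Suppose $c\wedge h(a)=c\wedge k(a)$ for all $a$. Take the common refinement of $\bar h$, $\bar k$ and $\{c,\neg c\}$, which lies in $\cal{J}$ by axiom (ii). Using the $\equiv$-characterization of $P$-operations and property (i), both $\hat g(h)$ and $\hat g(k)$ are $\equiv_{c\wedge h(a)}$-related to $g(a)$ for every $a$. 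Gluing with property (iii) over the partition $\{c\wedge h(a)\}_a$ of $c$ then yields $\hat g(h)\equiv_c\hat g(k)$. One should check that gluing over a partition of $c$ rather than of $\top$ is legitimate; I would do this by completing it with $\neg c$.

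\textbf{Expected obstacle.} I expect the main difficulty to be this localized gluing, together with the distributivity of meets over $\cal{J}$-joins used in the uniqueness argument above. Both should follow from axioms (ii) and (iii).
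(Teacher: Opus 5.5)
The paper gives no proof of this proposition. It is quoted from Garner's Remark 3.17, and the appendix contains no argument for it. So there is no in-paper route to compare with; judged on its own, your verification is correct. Working through the relational characterization (i)--(iii) is a good choice: the $B_{\cal{J}}$-set axioms and the homomorphism condition for $\hat g$ both reduce to statements about the single relation $h\equiv_c k \iff c\wedge h(a)=c\wedge k(a)$. Your normal form $h=\bar h(\lambda c.\delta_{a_c})$ then gives uniqueness of $\hat g$ immediately.

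\textbf{Distributivity is not an obstacle.} In any Boolean algebra, $x\wedge(-)$ is left adjoint to $\neg x\vee(-)$. Hence it preserves every join that exists: $x\wedge\bigvee S=\bigvee_{s\in S}(x\wedge s)$ whenever $\bigvee S$ exists.
\begin{itemize}
\item This gives $b\wedge z(a)=b\wedge h_b(a)$ and $z'(a)=\bigvee_b (b\wedge z'(a))$ at once. Your detour through (iii) on $P;\{z'(a),\neg z'(a)\}$ is valid but unnecessary.
\item It also gives $z(a)\wedge z(a')=\bot$ for $a\neq a'$. Membership of $z$ in $A[B]^{\cal{J}}$ requires this, and you did not check it explicitly.
\end{itemize}

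\textbf{The localized gluing is direct.} Put $x=\hat g(h)$ and $y=\hat g(k)$. Let $R=\bar h;\{c,\neg c\}$, which lies in $\cal{J}$ by (i) and (ii), and let $w=c(x,y)$.
\begin{itemize}
\item For $d=h(a)\wedge c$ you have $w\equiv_d x\equiv_d g(a)\equiv_d y$. The last step uses $c\wedge k(a)=c\wedge h(a)$.
\item For $d=h(a)\wedge\neg c$ you have $w\equiv_d y$, from $w\equiv_{\neg c}y$.
\item Uniqueness in (iii) for the constant family at $y$ over $R$ then forces $w=y$, i.e.\ $x\equiv_c y$.
\end{itemize}
So you need neither $\bar k$ in the refinement nor a coarsening to $\{c\wedge h(a)\}_a\cup\{\neg c\}$.

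\textbf{Two small points to make explicit.}
\begin{itemize}
\item $\hat g(\delta_a)=g(a)$. This is immediate, since $\overline{\delta_a}=\{\top\}$ and $\equiv_\top$ is equality.
\item You read the definition of $A[B]^{\cal{J}}$ as requiring $h(a)\wedge h(a')=\bot$ for $a\neq a'$. That is the intended reading. Under the literal wording, $a_c$ need not be unique, and $A[B]^{\cal{J}}$ need not be closed under the operations $P$. For example, take $P=\{b,\neg b\}$ with $h_b$ equal to $\top$ at two points: the result takes the overlapping values $\top$ and $b$.
\end{itemize}
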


Given a sheaf $F$ on a locale $L$, the corresponding local
homeomorphism $E(F) \to L$ is found by taking $\cal{O}(E(F))$ as
the frame of subsheaves of $F$. We can re-express this in terms of
the category of sheaves on $L$: this is a topos, and in
particular admits a subobject classifier $\Omega$, so
subsheaves of $F$ correspond to sheaf maps $F \to \Omega$. Now if
$L$ is the ultraparacompact locale presented by $B_{\cal{J}}$, then
$\Omega$ itself is a $B_{\cal{J}}$-set, and so the frame
$\cal{O}(E(F))$ is given by the set of $B_{\cal{J}}$-set
homomorphisms $F \rightarrow \Omega$ under pointwise ordering.
$\Omega$ as a $B_\cal{J}$-set turns out to be $B_\cal{J}$ itself
with the action $P(\lambda b. u_b) = \bigvee_{b \in P} (u_b \wedge
b)$, or equivalently with $u \equiv_b v \iff b \wedge u = b \wedge v$.

\begin{definition} \label{defn:etale-space-of-BJ-set}
Let $L$ be an ultraparacompact locale presented by $B_\cal{J}$,
and $F$ be a $B_{\cal{J}}$-set. The \emph{\'etale space} $E(F)$
corresponding to $F$ is given by $\cal{O}(E(F)) :=
\nm{Set}_{B_{\cal{J}}}(F, \cal{O}L)$, and its associated
\emph{projection map} $\sigma \colon E(F) \to L$ is defined by
$\sigma\inv \colon u \mapsto \nm{const}_u$ (the constant function at $u$).
\end{definition}

\begin{lemma} \label{lemma:basis-of-etale-space}
Let $L$ be an ultraparacompact locale presented by $B_\cal{J}$,
and $F$ be a $B_{\cal{J}}$-set. Each element $x \in |F|$
injectively induces an open $\hat{x} \in \cal{O}(E(F))$ defined
by $\hat{x} := \lambda y. \bigvee \set{b | x \equiv_b y}$.
Moreover, these generate $\cal{O}E(F)$ because every $w \in
\cal{O}(E(F))$ can be expressed as $w = \bigvee_{x \in |F|}
\hat{x} \wedge \nm{const}_{w(x)}$.
\end{lemma}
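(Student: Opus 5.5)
Three things need checking: that $\hat{x}$ is a well-defined open, i.e.\ a $B_{\cal{J}}$-set homomorphism $F \to \cal{O}L$; that $x \mapsto \hat{x}$ is injective; and the decomposition $w = \bigvee_{x} \hat{x} \wedge \nm{const}_{w(x)}$. I will work throughout with the description of $B_{\cal{J}}$-sets by the equivalence relations $\equiv_b$ satisfying (i)--(iii), using that homomorphisms are exactly the maps preserving each $\equiv_b$. I will also use the description of $\Omega = \cal{O}L$ with $u \equiv_b v \iff b \wedge u = b \wedge v$. I identify $B$ with the complemented opens $\frk{B}(L)$ sitting inside $\cal{O}L$.

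\textbf{Well-definedness.} For fixed $x$, the set $I_{x,y} = \set{b | x \equiv_b y}$ is a downset by (i). It is also closed under $\cal{J}$-joins: if $b_i \in I_{x,y}$ for each $i$, extend them to a partition in $\cal{J}$ (refining each $b_i$ together with the complement of its join if needed) and use the uniqueness in (iii) to get $x \equiv_{\bigvee b_i} y$. Now suppose $y \equiv_b y'$. Transitivity of $\equiv_{c \wedge b}$ together with (i) gives $b \wedge \hat{x}(y) = \bigvee_{c \in I_{x,y}} (b \wedge c) = b \wedge \hat{x}(y')$, since $c \in I_{x,y}$ iff $c \wedge b \in I_{x,y'}$ after meeting with $b$. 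Hence $\hat{x}(y) \equiv_b \hat{x}(y')$ in $\Omega$, so $\hat{x}$ preserves the relations and is a homomorphism.

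\textbf{Injectivity.} We have $\hat{x}(x) = \top$ because $\top \in I_{x,x}$. If $\hat{x} = \hat{x}'$, then $\hat{x}'(x) = \top$, so $\bigvee I_{x',x} = \top$ in $\cal{O}L$. This is the one point where the distinction between the $\cal{J}$-join in $B$ and the frame join in $L$ matters. Ultraparacompactness shows that a family of complemented opens covering $\top$ is refined by a partition in $\cal{J}$; the partition lies in $\cal{J}$ because the partitions of $L$ form $\cal{J}$ under Theorem~\ref{thm:ultraparacompact-duality}. Each part of that partition lies in $I_{x',x}$, so gluing via (iii) gives $x' \equiv_\top x$, and (ii) then gives $x = x'$.

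\textbf{Decomposition.} For $\geq$: each $\hat{x} \wedge \nm{const}_{w(x)}$ lies below $w$. At $y$ it equals $\bigvee_{b \in I_{x,y}} b \wedge w(x)$, and for each such $b$ we have $b \wedge w(x) = b \wedge w(y)$ because $w$ preserves $\equiv_b$; this is $\leq w(y)$. For $\leq$: evaluating the $x = y$ term at $y$ gives $\hat{y}(y) \wedge w(y) = w(y)$. Joins and meets in $\cal{O}(E(F))$ are computed pointwise, since the homomorphisms $F \to \Omega$ are closed under pointwise finite meets and arbitrary joins; this closure I will check quickly from the same $b \wedge -$ criterion.

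I expect the main obstacle to be the closure of $I_{x,y}$ under joins computed in $\cal{O}L$ rather than in $B$. This needs the passage between covers of $L$ and $\cal{J}$-partitions supplied by ultraparacompactness, and careful use of the gluing axiom (iii).
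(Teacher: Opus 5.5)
Your proposal is correct and follows the paper's proof essentially step for step. The homomorphism check, where you restrict the join defining $\hat{x}(y)$ to elements below $b$ and use transitivity of $\equiv_{b \wedge c}$, is the paper's $\Diamond$ step. Injectivity goes through $\hat{x}(x) = \top$, a refining partition in $\cal{J}$ and the uniqueness in (iii). The decomposition uses $\hat{y}(y) = \top$ for one inequality and $b \wedge w(x) = b \wedge w(y)$ for the other.

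There are two differences. You explicitly check that joins and meets in $\cal{O}(E(F))$ are computed pointwise, which the paper uses without comment. Your claim that $I_{x,y}$ is closed under $\cal{J}$-joins is never used, and it is loosely argued since $\bigvee b_i$ need not be complemented, so you can simply drop it.
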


\begin{proposition} \label{prop:corresponding-local-homeo}
Let $L$ be an ultraparacompact locale presented by $B_\cal{J}$.
Then the corresponding local homeomorphism of a $B_\cal{J}$-set
$F$ is the map $\sigma \colon E(F) \to L$.
\end{proposition}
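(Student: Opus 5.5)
We have to show that $\sigma \colon E(F) \to L$ is a local homeomorphism, and that its sheaf of sections is the sheaf on $L$ corresponding to $F$. We will not argue through the topos, since only the $B_\cal{J}$-set presentation is available. Instead we use the basis of \cref{lemma:basis-of-etale-space}. The strategy is to show that each basic open $\hat{x}$ is carried by $\sigma$ isomorphically onto $L$, so that $\sigma$ restricts to an open injection on it. These $\hat{x}$ cover $E(F)$, which gives the local homeomorphism. Afterwards we identify the global sections with $|F|$, compatibly with the $\equiv_b$ relations.

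\textbf{Step 1: the $\hat{x}$ cover $E(F)$.} We have $\hat{x}(x) = \top$, since $x \equiv_\top x$. Hence $\bigvee_x \hat{x} \geq \nm{const}_\top$, the top of $\cal{O}E(F)$. This also follows directly from the decomposition $w = \bigvee_x \hat{x} \wedge \nm{const}_{w(x)}$ with $w = \top$.

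\textbf{Step 2: each $\hat{x}$ is a copy of $L$.} Define $\phi_x \colon \downarrow\hat{x} \to \cal{O}L$ by $w \mapsto w(x)$. In the other direction, $u \mapsto \hat{x} \wedge \nm{const}_u$, which is $\hat{x} \wedge \sigma^{-1}(u)$. We check the following.
\begin{enumerate*}[(a)]
\item $\phi_x(\hat{x} \wedge \nm{const}_u) = \top \wedge u = u$.
\item For $w \leq \hat{x}$, we have $w = \hat{x} \wedge \nm{const}_{w(x)}$.
\end{enumerate*}
Part (b) is the key computation. For each $y$ we need $w(y) = \hat{x}(y) \wedge w(x)$. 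Homomorphisms into $\Omega$ preserve $\equiv_b$. So if $x \equiv_b y$, then $b \wedge w(x) = b \wedge w(y)$. Taking the join over all such $b$, and using that $\hat{x}(y) = \bigvee\{b \mid x\equiv_b y\}$ is itself such a $b$ by the gluing property (iii) applied to a partition, gives $\hat{x}(y) \wedge w(x) = \hat{x}(y) \wedge w(y)$. Finally $w(y) \leq \hat{x}(y)$ because $w \leq \hat{x}$.

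It follows that $\sigma$ restricted to $\hat{x}$ is an open embedding onto all of $L$: the induced frame map $\cal{O}L \to \downarrow\hat{x}$ is an isomorphism. Together with Step 1 this makes $\sigma$ a local homeomorphism. It also shows $\sigma^{-1}$ preserves arbitrary joins and finite meets, so $\sigma$ is open.

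\textbf{Step 3: the sections recover $F$.} A global section $s$ of $\sigma$ corresponds to an open $w$ on which $\sigma$ restricts to an isomorphism onto $L$. Using the partition refinement of the cover $\{\hat{x}\}$, we write such a $w$ as $\bigvee_{b\in P} \hat{x_b} \wedge \nm{const}_b$ for some $P \in \cal{J}$. This equals $\widehat{P(\lambda b.x_b)}$, so every global section is of the form $\hat{z}$. Local sections over $b$ are then the $\hat{z}\wedge\nm{const}_b$. Two of these agree exactly when $z\equiv_b z'$, by the formula for $\hat{x}$ in \cref{lemma:basis-of-etale-space}. This reproduces the sheaf attached to $F$ through its relations $\equiv_b$.

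The main obstacle is the identity $\hat{x}(y) \wedge w(x) = \hat{x}(y) \wedge w(y)$. It needs the supremum $\bigvee\{b \mid x \equiv_b y\}$ to be attained, that is, the relation $\equiv_b$ must be closed under $\cal{J}$-joins. I would prove this first as a small lemma from property (iii): given $x \equiv_{b_i} y$ for $b_i$ forming part of a partition in $\cal{J}$, glue to conclude $x \equiv_{\bigvee b_i} y$. The join is meant in the frame $\cal{O}L$, and ultraparacompactness lets us refine any family of such $b$ to a partition in $\cal{J}$ before gluing.
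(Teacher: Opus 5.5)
Your route is the paper's: cover $E(F)$ by the opens $\hat{x}$, show that $\sigma$ carries each $\hat{x}$ isomorphically onto $L$, and recover $|F|$ as the global sections by refining $\{s^{-1}(\hat{x})\}_x$ to a partition and amalgamating. The paper only asserts the middle step; your Step 2 supplies the computation. The last step is the paper's surjectivity argument for $x \mapsto s_x$, where $s_x^{-1}(w) = w(x)$.

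\textbf{Step 2: the identity follows from distributivity, and your proposed lemma is false.} The identity $\hat{x}(y)\wedge w(x) = \hat{x}(y)\wedge w(y)$, which you call the main obstacle, needs only the infinite distributive law of $\cal{O}L$:
$\hat{x}(y)\wedge w(x) = \bigvee\{b\wedge w(x) \mid x\equiv_b y\} = \bigvee\{b \wedge w(y) \mid x \equiv_b y\} = \hat{x}(y)\wedge w(y)$.
The lemma you propose instead says that $\hat{x}(y)$ is itself some $b$ with $x\equiv_b y$. This is false in general, and not even well-typed: $\equiv_b$ is only defined for $b \in B$, and $\hat{x}(y)$ need not be complemented.

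For a counterexample, let $L$ be Cantor space with $\cal{J}$ the finite clopen partitions, and fix a point $p$. Let $|F| = \{x,y\}$ with $x\equiv_c y$ iff $p\notin c$; this is the sheaf whose \'etale space is Cantor space with $p$ doubled. It satisfies (i)--(iii), yet $\hat{x}(y) = L\setminus\{p\}$ is not clopen. Ultraparacompactness only refines covers of $\top$, not arbitrary families whose join is a non-complemented open. So the refine-and-glue step you describe cannot be carried out.

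\textbf{Step 3: this is where the gluing argument is actually needed.} You need it for the direction ``$\hat{z}\wedge\nm{const}_b = \hat{z'}\wedge\nm{const}_b$ implies $z\equiv_b z'$''. That direction does not follow from the formula for $\hat{x}$ alone.
\begin{enumerate}[(1)]
\item Evaluating at $z'$ gives $b \le \hat{z}(z')$, so $\{c\wedge b \mid z\equiv_c z'\}\cup\{\neg b\}$ is a genuine cover of $\top$.
\item Refine this cover to a partition $Q$, and check that $b(z,z')\equiv_q z'$ for every $q\in Q$.
\item Conclude $b(z,z') = z'$ by the uniqueness in (iii).
\end{enumerate}
This is exactly the paper's injectivity argument in Lemma~\ref{lemma:basis-of-etale-space}, which is the case $b = \top$.

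With the distributivity fix in Step 2 and this argument placed in Step 3, your proposal is a complete and somewhat more detailed version of the paper's proof. Your aside that $\sigma$ is open ``because $\sigma^{-1}$ preserves joins and finite meets'' does not follow, since every frame map preserves these. Openness is not needed for the proposition, so you can drop that remark.
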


For a sheaf $F$, the points of the corresponding local
homeomorphism $E(F)$ are known as \emph{germs}. Here is the
corresponding notion for $B_\cal{J}$-sets.

\begin{proposition} \label{prop:germs-of-local-homeo}
Let $L$ be an ultraparacompact locale presented by $B_\cal{J}$
and $F$ a $B_\cal{J}$-set. Then $\pt E(F) \cong \sum_{p \in \pt
L} |F|/_{\equiv_p}$, where $x \equiv_p y \iff \exists b \ni p. x
\equiv_b y$. An element $[x]_p$ of this space is called a
\emph{germ}. The topology on this space is generated by subbasic
open sets $[x|b] = \set{[x]_p | p \in b}$.
\end{proposition}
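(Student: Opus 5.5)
Our plan is to compute the points of $E(F)$ directly from the presentation of $\cal{O}(E(F))$ given in Lemma~\ref{lemma:basis-of-etale-space}, and then to identify the topology on them.

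\textbf{Step 1: the map from germs to points.} A point of $E(F)$ is a frame map $\phi \colon \cal{O}(E(F)) \to \Omega_{\Set} = \{\bot,\top\}$. Composing with $\sigma\inv$ gives a point $p$ of $L$, which we regard as a completely prime filter on $\cal{O}L$, or equivalently a $\cal{J}$-prime ultrafilter on $B$. Given $p \in \pt L$ and $x \in |F|$, we define the candidate point $\phi_{[x]_p}(w) := \top$ iff $w(x) \in p$.

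\textbf{Step 2: checking well-definedness.} First, $w \mapsto w(x)$ is evaluation, and joins and finite meets in $\cal{O}(E(F)) = \nm{Set}_{B_\cal{J}}(F,\cal{O}L)$ are computed pointwise. Pointwise joins are again homomorphisms, because the relations $u \equiv_b v \iff b\wedge u = b \wedge v$ are preserved under joins. So evaluation at $x$ is a frame map into $\cal{O}L$, and composing with $p$ gives a frame map.

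Second, independence of the representative. Suppose $x \equiv_b y$ with $b \in p$. Since $w$ is a homomorphism, $w(x) \equiv_b w(y)$, that is, $b \wedge w(x) = b \wedge w(y)$. Primeness of the filter $p$ then gives $w(x) \in p \iff w(y) \in p$.

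\textbf{Step 3: bijectivity.} For injectivity, suppose $[x]_p$ and $[y]_q$ induce the same point. Comparing $\sigma\inv$ gives $p = q$. Evaluating at the open $\hat{x}$ gives $\hat{x}(y) = \bigvee\{b \mid x\equiv_b y\} \in p$. Because $p$ is completely prime, some $b \in p$ satisfies $x \equiv_b y$, so $[x]_p = [y]_p$.

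For surjectivity, take an arbitrary point $\phi$ and set $p := \phi\circ\sigma\inv$. Lemma~\ref{lemma:basis-of-etale-space} gives $\top = \bigvee_x \hat{x}$, since $\top = \nm{const}_\top$ and $\hat{x}(x) = \top$. Complete primeness of $\phi$ then yields an $x$ with $\phi(\hat{x}) = \top$. We claim $\phi = \phi_{[x]_p}$. Using the decomposition $w = \bigvee_y \hat{y} \wedge \nm{const}_{w(y)}$, we have
\[ \phi(w) = \phi(\hat{x} \wedge w) = \phi\Bigl(\bigvee_y \hat{x}\wedge\hat{y}\wedge \nm{const}_{w(y)}\Bigr). \]
Here $(\hat{x}\wedge\hat{y})$ is bounded by $\nm{const}_{c}$ with $c = \hat x(y)$, and $c \wedge w(y) = c\wedge w(x)$, because $w$ respects $\equiv_c$ and $x\equiv_c y$ by Prop.~2.6(iii) together with joins over a partition refining. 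Hence $\hat{x} \wedge w = \hat{x}\wedge \nm{const}_{w(x)}$, and therefore $\phi(w) = p(w(x))$.

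\textbf{Main obstacle.} The key step is this identity $\hat{x}\wedge w = \hat{x} \wedge \nm{const}_{w(x)}$. Proving it requires showing that $x \equiv_{\hat{x}(y)} y$ holds, i.e.\ that the set of $b$ with $x\equiv_b y$ is closed under the joins present in $B$. We expect to prove this by refining the cover to a partition in $\cal{J}$ (ultraparacompactness) and then using the gluing uniqueness in (iii). Since $\hat{x}(y)$ is in general only an open of $L$ rather than an element of $B$, the argument has to be carried out with joins of complemented opens, working locally on each piece.

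\textbf{Step 4: the topology.} The opens of $\pt E(F)$ are $\{\phi \mid \phi(w)=\top\}$. Using the basis $\hat{x}\wedge \nm{const}_b$, the corresponding open set is $\{[y]_p \mid \hat{x}(y)\wedge b \in p\}$, which equals $\{[x]_p \mid p \in b\} = [x|b]$ by the argument above. Hence the sets $[x|b]$ generate the topology.
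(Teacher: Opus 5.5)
Your proposal is correct and is essentially the paper's argument. It uses the same germ-to-point assignment ($w \mapsto \top$ iff $w(x) \in p$), the same choice of a representative $x$ with $\phi(\hat{x}) = \top$ via the cover $\{\hat{x}\}_x$, and the same decomposition from Lemma~\ref{lemma:basis-of-etale-space}. Your injectivity/surjectivity split plays the role of the paper's two round-trip checks together with its coherence check on $\hat{q}$.

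The step you flag as the main obstacle needs neither ultraparacompactness nor gluing. Each $b$ with $x \equiv_b y$ gives $b \wedge w(x) = b \wedge w(y)$. Taking the join over all such $b$ and using frame distributivity then gives $\hat{x}(y) \wedge w(x) = \hat{x}(y) \wedge w(y)$ directly.
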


\subsection{The Locale of Morphisms}

Let $B_{\cal{J}}$ denote the Grothendieck Boolean algebra of complemented opens
in $\cal{O}\nm{LB}_0T$. We can now construct the $B_{\cal{J}}$-set which will
eventually become our locale of morphisms $\nm{LB}_1T$. We know that this
$B_{\cal{J}}$-set should be generated by computations in $T$, but it
should not be generated
\emph{freely}: we can see in $\bb{B}T$ that if a term factors as $t
\bind u$, then at each $\beta \in [t \mapsto a]$ we have $t \bind u
\sim_\beta t \rskip u(a)$, so the global sections $\eta(t \bind u)$ and
$\eta(t \rskip u(a))$ are equal when restricted to the region $[t \mapsto a]$.
Here the advantage of the $B_{\cal{J}}$-set approach
becomes clear: the condition relating $t \bind u$ with $t \rskip
u(a)$ above can be expressed purely equationally, as the following congruence
relation on the $B_{\cal{J}}$-set freely generated by $T1$.

\begin{definition} \label{defn:sheaf-of-transitions}
Let $T$ be a monad. The \emph{sheaf of transitions} $F_T$
associated to $T$ is a quotient of the free $B_\cal{J}$-set
$T1[B]^{\cal{J}}$ with generators $T1$, by the smallest
$B_\cal{J}$-congruence identifying
$t \bind u \approx P^{(t)}(\lambda [t \mapsto a]. t \rskip u(a)) $
where $t \in TA$, $u \colon A \to T1$ and $P^{(t)} = \set{[t
\mapsto a] | a \in A}^-$ is the partition canonically associated to $T$.
\end{definition}

By our previous considerations, we expect the local homeomorphism
constructed from $F_T$ to be
the locale of morphisms for our localic behaviour category. But it is
not at all obvious what its relationship is with the morphisms of the behaviour
category introduced in definition \ref{defn:behaviour-category}. To
see the connection, we prove that two elements $x, y \in
T1[B]^{\cal{J}}$ are related by $\approx$ just when they are
pointwise trace-equivalent, as expressed by the following
definition and accompanying lemma.

\begin{definition}
Let $T$ be a monad of rank $\kappa$. Define the
$\nm{LB}_0T$-valued relation of \emph{trace equivalence} on $T1$:
\begin{alignat}{2}
  \deno{m \sim m'} & = \bigvee_{k \geq 1} \deno{m \sim_k m'}
  \quad \text{ where } \quad \deno{m_1 \sim_k m_k} = \bigvee
  \set{\textstyle{\bigwedge^{k-1}_{i = 1}} \deno{m_i \sim_1
  m_{i+1}} | m_2 \ldots m_{k - 1} \in T1} \\
  \deno{m \sim_1 m'} & = \bigvee \left\{ [t \mapsto a] ~\middle|~
    \begin{matrix}
      A \in \Set, |A| \leq \kappa, t : TA, u, u' \colon A \to T1,
      a \in A, \\
      u(a) = u'(a), m = t \bind u, m' = t \bind u'
  \end{matrix}\right\}
\end{alignat}
\end{definition}
If $u \in \cal{O}\nm{LB}_0T$ is a complemented open, we define $m
\sim_u m'$ to be true whenever $u \leq \deno{m \sim m'}$ (if $u =
\top$, we simply write $m \sim m')$. This definition seems
complicated, but it is just the point-free transliteration of the
definition for $\beta$-equivalence. As $\beta$-equivalence is an
equivalence relation, so is $\deno{- \sim -}$, as the following lemma shows.

\begin{lemma}
$\deno{- \sim -}$ is an equivalence relation, in the sense that
for all $m_1, m_2, m_3 \in T1$,
\[ \deno{m_1 \sim m_1} = \top \qquad \deno{m_1 \sim m_2} \leq
  \deno{m_2 \sim m_1} \qquad \deno{m_1 \sim m_2} \wedge \deno{m_2
\sim m_3} \leq \deno{m_1 \sim m_3}. \]
Consequentially, all the $\sim_u$ are equivalence relations in
the usual sense.
\end{lemma}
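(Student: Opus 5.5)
The plan is to prove the three properties of $\deno{- \sim -}$ directly from its definition as a join over chains, using reflexivity and symmetry of the one-step relation $\deno{- \sim_1 -}$. The statement about the $\sim_u$ then follows formally.

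\emph{Reflexivity.} Take $A = 1$ and $t = \return \star \in T1$, with $u = u' = \lambda\star. m_1$. Then $m_1 = t \bind u = t \bind u'$ and $u(\star) = u'(\star)$, so the join defining $\deno{m_1 \sim_1 m_1}$ contains $[\return \star \mapsto \star]$. Writing $\return \star = \return\star \rskip \return \star$, axiom \eqref{eqn:LB-return} shows that this open is $\top$. Hence
\[ \top \le \deno{m_1 \sim_1 m_1} \le \deno{m_1 \sim m_1}. \]

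\emph{Symmetry.} The index set defining $\deno{m \sim_1 m'}$ is symmetric under exchanging $(m,u)$ with $(m',u')$. So $\deno{m \sim_1 m'} = \deno{m' \sim_1 m}$. A $k$-chain $m_1, \ldots, m_k$ reversed is again a $k$-chain, and finite meets are commutative, so $\deno{m_1 \sim_k m_k} = \deno{m_k \sim_k m_1}$. Taking the join over $k$ gives symmetry.

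\emph{Transitivity.} This is where the frame structure is actually used. By the infinite distributive law,
\[ \deno{m_1 \sim m_2} \wedge \deno{m_2 \sim m_3} = \bigvee_{k,l} \bigvee_{\text{chains}} \Big(\bigwedge_i \deno{c_i \sim_1 c_{i+1}}\Big) \wedge \Big(\bigwedge_j \deno{d_j \sim_1 d_{j+1}}\Big). \]
Here distributivity is applied twice: once across the join over $k$ and $l$, and once across the inner joins over the intermediate terms. Each summand comes from a chain $c$ from $m_1$ to $m_2$ of length $k$ and a chain $d$ from $m_2$ to $m_3$ of length $l$. Concatenating them at the shared $m_2$ gives a chain from $m_1$ to $m_3$ of length $k + l - 1$. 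That summand is exactly one of the meets in the join defining $\deno{m_1 \sim_{k+l-1} m_3}$, so it lies below $\deno{m_1 \sim m_3}$. The only point needing care is the indexing convention: $\sim_k$ uses $k-1$ links with endpoints $m_1$ and $m_k$, so the lengths combine as $k+l-1$. I expect this bookkeeping to be the only real obstacle; everything else is frame algebra.

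\emph{The relations $\sim_u$.} Let $u$ be complemented. Reflexivity gives $u \le \top = \deno{m \sim m}$, so $m \sim_u m$. If $u \le \deno{m \sim m'}$, then by symmetry $u \le \deno{m' \sim m}$. If $u \le \deno{m_1 \sim m_2}$ and $u \le \deno{m_2 \sim m_3}$, then $u$ lies below their meet, and hence below $\deno{m_1 \sim m_3}$ by transitivity. So each $\sim_u$ is an ordinary equivalence relation.
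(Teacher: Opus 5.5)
Your proof is correct. The paper gives no written proof of this lemma; it only remarks that $\deno{-\sim-}$ is the point-free transliteration of $\beta$-equivalence, and your argument is the routine verification that remark points to: reflexivity from the witness $t = \return \star$ together with \eqref{eqn:LB-return}, symmetry of the one-step relation under swapping $(m,u')$ with $(m',u)$ and then reversing chains, and transitivity by distributing the meet over both joins and concatenating chains. The indexing clash you noticed is harmless: the chain formula makes $\sim_k$ a chain of $k-1$ links, which conflicts with the separately defined one-step $\sim_1$, but the join runs over all $k$, so every concatenated chain lands in some $\deno{m_1 \sim_j m_3}$.
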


Now, we come to the central lemma relating $\approx$ to the more familiar
$\beta$-equivalence. The proof of this lemma is quite involved, which
we relegate
to \cref{subsec:proof-of-pointwise-trace-equivalence}.

\begin{lemma} \label{lemma:pointwise-trace-equivalence}
Let $x, y \in T1[B]^{\cal{J}}$. Then $x \approx y$ iff for each
$m, n \in T1$, we have $m \sim_{x(m) \wedge y(n)} n$. Moreover,
$m \equiv_b n \iff m \sim_b n$.
\end{lemma}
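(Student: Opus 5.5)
Throughout, write $\cong$ for the relation on $T1[B]^{\cal{J}}$ defined by $x \cong y$ iff $m \sim_{x(m)\wedge y(n)} n$ for all $m,n \in T1$. The strategy is to show that $\cong$ is a $B_\cal{J}$-congruence containing the generating pairs, hence ${\approx} \subseteq {\cong}$. Conversely, we show that every pair related by $\cong$ is forced to be related by $\approx$. The statement about $\equiv_b$ will drop out along the way, since it concerns $\delta_m \equiv_b \delta_n$ in the quotient $F_T$: unwinding the $\equiv_b$ relations of the free $B_\cal{J}$-set of \cref{prop:free-BJ-sets}, $b(\delta_m,\delta_n)=\delta_n$ in $F_T$ is exactly $b(\delta_m,\delta_n) \approx \delta_n$. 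So the ``moreover'' is the special case of the first claim applied to $x=b(\delta_m,\delta_n)$ and $y=\delta_n$.

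\textbf{Step 1: $\cong$ is an equivalence relation.} Write $x\cong y$ equivalently as
\[ \bigvee_{m,n} x(m)\wedge y(n)\wedge \neg\deno{m\sim n}=\bot, \]
reading the meet with $\deno{m\sim n}$ inside the frame. Reflexivity uses that $x$ is a partition, so $x(m)\wedge x(n)=\bot$ for $m\neq n$, together with $\deno{m\sim m}=\top$. Symmetry is immediate from the preceding lemma. For transitivity, given $x\cong y\cong z$, we use that $\{y(k)\}_k$ covers $\top$. Then
\[ x(m)\wedge z(n)=\bigvee_k x(m)\wedge y(k)\wedge z(n)\le \bigvee_k \deno{m\sim k}\wedge\deno{k\sim n}\le\deno{m\sim n}. \]

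\textbf{Step 2: $\cong$ is a congruence containing the generators.} For each partition operation $P$, and in particular the binary ones $b(-,-)$, if $x_c\cong y_c$ for all $c\in P$ then
\[ P(x)(m)\wedge P(y)(n)=\bigvee_{c} c\wedge x_c(m)\wedge y_c(n), \]
using disjointness of the $c\in P$; this is $\le\deno{m\sim n}$. For a generator, take $x=\delta_{t\bind u}$ and $y=P^{(t)}(\lambda[t\mapsto a].\delta_{t\rskip u(a)})$. Then $y(n)=\bigvee\{[t\mapsto a] \mid t\rskip u(a)=n\}$. Writing $t\bind u=t\bind u'$ with $u'=\lambda\_.\,u(a)$ relabelled via $t\rskip u(a)=t\bind(\lambda\_.u(a))$, the definition of $\deno{-\sim_1-}$ gives $[t\mapsto a]\le\deno{t\bind u\sim_1 t\rskip u(a)}$. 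Hence $x\cong y$, and by minimality ${\approx}\subseteq{\cong}$.

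\textbf{Step 3: the converse, which is the main obstacle.} Suppose $x\cong y$. First we reduce to generators. Refining the partitions $x$ and $y$ by their common refinement $Q=\{x(m)\wedge y(n)\}^-$, which lies in $\cal{J}$ by axiom (ii), we have $x=Q(\lambda c.\delta_{m_c})$ and $y=Q(\lambda c.\delta_{n_c})$. Since $\approx$ is a congruence, it suffices to show $c(\delta_m,z)\approx c(\delta_n,z)$ whenever $c\le\deno{m\sim n}$; that is, that the quotient satisfies $\delta_m\equiv_c\delta_n$. This is exactly the left-to-right half of the ``moreover''. The set $\{c \mid \delta_m\equiv_c\delta_n \text{ in } F_T\}$ is downward closed, and it is closed under $\cal{J}$-joins by the sheaf/gluing property (iii) of the $\equiv$ relations. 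Since $\deno{m\sim n}$ is a join of finite meets of $\deno{-\sim_1-}$-terms, and by ultraparacompactness every such join can be refined to a $\cal{J}$-partition of complemented opens, it reduces to two checks. The first is the basic case $c\le[t\mapsto a]$ with $m=t\bind u$, $n=t\bind u'$ and $u(a)=u'(a)$. Here both $m$ and $n$ are $\approx$-identified with $P^{(t)}$-switches, and these agree on the $a$-component. The second is chains of such steps, which are handled by transitivity of $\equiv_c$.

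The delicate point is the refinement of an arbitrary open $\deno{m\sim n}$ into complemented pieces lying under single $\sim_1$-witnesses. I would handle this by invoking the description of $\cal{O}\nm{LB}_0T$ via $B_\cal{J}$ from \cref{prop:LB0T-ultraparacompact}: each $[t\mapsto a]$ is complemented, and a complemented $c$ below a join of such opens is covered by a $\cal{J}$-partition subordinate to them.
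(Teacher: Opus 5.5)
Your proof is correct. For the forward direction you take a genuinely different and more economical route than the paper.

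The paper proves $x\approx y \Rightarrow m\sim_{x(m)\wedge y(n)} n$ proof-theoretically, in two stages:
\begin{enumerate}
\item It normalises every derivation of $\approx$ into a single infinitary \textsc{cong} step over chains of finitary one-step rewrites $\leftrightsquigarrow^\omega$ (its auxiliary lemmas on rewrites and normal forms).
\item Along each such chain it takes a common refinement of the partitions attached to the rewrite steps, and builds a $\sim_d$-chain piece by piece.
\end{enumerate}
You instead show that the pointwise relation $\cong$ has all the closure properties of a congruence:
\begin{itemize}
\item it is an equivalence relation, using exactly the reflexivity, symmetry and transitivity of $\deno{-\sim-}$ from the preceding lemma together with the covering property of $y$;
\item it is closed under every $P$-ary operation, since $P(x)(m)\wedge P(y)(n)=\bigvee_c c\wedge x_c(m)\wedge y_c(n)$;
\item it contains the generating pairs.
\end{itemize}
Minimality of $\approx$ then finishes the argument. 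This avoids all derivation surgery and isolates exactly what is needed, namely that $\deno{-\sim-}$ is an equivalence relation. The paper's route buys an explicit normal form for $\approx$, which this lemma does not actually require.

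Your converse is essentially the paper's argument, repackaged through the gluing property (iii) of the $\equiv_b$ relations rather than explicit switch computations. It runs: common refinement $Q$, reduction to $\delta_m\equiv_c\delta_n$, refinement of $\{\text{chain meets}\}\cup\{\neg c\}$ by a partition, the basic $\sim_1$ case via two $P^{(t)}$-switches agreeing at $a$, then transitivity. Obtaining the ``moreover'' as the instance $x=b(\delta_m,\delta_n)$, $y=\delta_n$ of the main claim is cleaner than anything the paper spells out.

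Two small repairs are needed.
\begin{itemize}
\item The ``equivalent'' form $\bigvee_{m,n}x(m)\wedge y(n)\wedge\neg\deno{m\sim n}=\bot$ is not equivalent. With the frame pseudocomplement it only says $x(m)\wedge y(n)\le\neg\neg\deno{m\sim n}$. You never use it, so just drop it.
\item $\deno{-\sim_1-}$ only ranges over $t\in TA$ with $|A|\le\kappa$, so checking the generators for arbitrary $t$ needs the rank. Write $t=t'\bind\lambda i.\return f(i)$ with $t'$ $\kappa$-ary, and use $[t\mapsto a]=\bigvee_{f(i)=a}[t'\mapsto i]$, with each $[t'\mapsto i]$ a $\sim_1$-witness for $t\bind u$ and $t\rskip u(a)$.
\end{itemize}
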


Be warned that we abuse notation by confusing elements of $T1$ with
the induced element of $F_T$, even though the unit map $\delta
\colon T1 \to |F_T|$ from proposition \ref{prop:free-BJ-sets} is in
general not injective.

Recall from proposition \ref{prop:corresponding-local-homeo} that
the corresponding local homeomorphism is the locale of
homomorphisms $\nm{Set}_{B_{\cal{J}}}(F_T, \nm{LB}_0T)$. By the
universal property of free algebras, such homomorphisms correspond
to functions $w \colon T1 \to \nm{LB}_0T$ respecting the generating
equation $w(t \bind u) = P^{(t)}(\lambda [t \mapsto a]. t \rskip
u(a))$. We can restate this in terms of trace equivalence between
generators, as follows (proof of correspondence can be found in
\cref{subsec:proof-of-locale-of-transitions}).

\begin{definition} \label{defn:locale-of-transitions}
The \emph{locale of transitions} $\nm{LB}_1T$ is the
pointwise-ordered poset of functions $w \colon T1 \to \cal{O}(\nm{LB}_0T)$
for which $m_1 \sim_b m_2$ implies $w(m_1) \equiv_b w(m_2)$ for
any $m_1, m_2 \in T1$ and $b \in B$.
\end{definition}

We are now in the position to introduce the localic behaviour
category, but before we do so we specialize lemma
\ref{lemma:basis-of-etale-space} and proposition
\ref{prop:germs-of-local-homeo} to our locale of transitions, which
relates the localic behaviour category back to the usual behaviour category.

\begin{lemma} \label{lemma:basis-of-LB1T}
Every open $w \in \cal{O}(\nm{LB}_1T)$ can be expressed as $w =
\bigvee_{m} \hat{m} \wedge \nm{const}_{w(m)}$, so the frame
$\cal{O}(\nm{LB}_1T)$ is generated by opens of the form $[m | b]
:= \lambda n. \deno{m \sim n} \wedge [b]$ for $m \in T1$ and $b \in T2$.
\end{lemma}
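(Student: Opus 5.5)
The plan is to specialise \cref{lemma:basis-of-etale-space} to $F = F_T$ and $L = \nm{LB}_0T$, and then rewrite the generating opens $\hat{m}$ and $\nm{const}_u$ in terms of trace equivalence.

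By \cref{defn:locale-of-transitions} (and the correspondence with $\nm{Set}_{B_{\cal{J}}}(F_T, \cal{O}\nm{LB}_0T)$), an open $w$ is a $B_\cal{J}$-homomorphism $F_T \to \cal{O}\nm{LB}_0T$. It is therefore determined by its values on the generators $T1$. \Cref{lemma:basis-of-etale-space} gives $w = \bigvee_{x \in |F_T|} \hat{x} \wedge \nm{const}_{w(x)}$.

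The first step is to reduce the join to generators $m \in T1$. Every $x \in |F_T|$ is (the class of) an element of $T1[B]^{\cal{J}}$, that is, a partition $P(\lambda b. m_b)$ of generators over $P \in \cal{J}$. For $b \in P$ we have $x \equiv_b m_b$. Using the definition $\hat{x} = \lambda y.\bigvee\{c \mid x \equiv_c y\}$, I would check that
\[ \hat{x} \wedge \nm{const}_{w(x)} = \bigvee_{b \in P} \hat{m_b} \wedge \nm{const}_{b \wedge w(m_b)}. \]
Here $w(x) = \bigvee_b b \wedge w(m_b)$, because $w$ preserves the $P$-operation into $\Omega$. This shows the $x$-indexed join equals a join of terms $\hat{m} \wedge \nm{const}_u$ with $u \leq w(m)$, and each such term lies below $\hat{m} \wedge \nm{const}_{w(m)}$. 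The other inequality is immediate since generators are among the $x$. This gives the first claim, $w = \bigvee_m \hat{m} \wedge \nm{const}_{w(m)}$.

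Next I would compute $\hat{m}$ explicitly. For $n \in T1$, $\hat{m}(n) = \bigvee\{c \mid m \equiv_c n\}$. By the second part of \cref{lemma:pointwise-trace-equivalence}, $m \equiv_c n \iff m \sim_c n \iff c \leq \deno{m \sim n}$, so $\hat{m}(n)$ is the join of complemented opens below $\deno{m\sim n}$. Since $\nm{LB}_0T$ is zero-dimensional by \cref{prop:LB0T-ultraparacompact}, this join equals $\deno{m \sim n}$. So $\hat m = \lambda n.\deno{m\sim n}$ on generators, and meets in $\cal{O}(\nm{LB}_1T)$ are pointwise.

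For generation, note that each open $w(m)$ of $\nm{LB}_0T$ is a join of complemented opens. The complemented opens are generated under joins by the basic $[b]$ with $b \in T2$: by the $B_\cal{J}$-presentation, every complemented open is a join over a partition of basic ones, and finite meets of $[b]$'s are again of the form $[b']$ via $\bind$-composition in $T2$. Distributing the meet $\hat m \wedge \nm{const}_{(-)}$ over these joins (frames distribute) expresses $w$ as a join of $[m\mid b] = \hat m \wedge \nm{const}_{[b]}$.

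I expect the main obstacle to be the reduction from arbitrary $x \in |F_T|$ to generators, together with confirming that $\hat{x}\wedge\nm{const}$ decomposes along $P$. This needs the sheaf property of $\Omega$, i.e.\ that $\hat{x} \wedge \nm{const}_b = \hat{m_b} \wedge \nm{const}_b$ whenever $x \equiv_b m_b$. I would verify this pointwise from the $\equiv$-axiom (i) and transitivity.
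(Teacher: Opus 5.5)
You handle the first claim correctly. One remark before the gap: you could skip the detour through $|F_T|$ and Lemma~\ref{lemma:pointwise-trace-equivalence} by checking $w(n)=\bigvee_m\deno{m\sim n}\wedge w(m)$ directly for $n\in T1$. Take $m=n$ for $\leq$. For $\geq$, write $\deno{m\sim n}$ as a join of complemented $b$ with $m\sim_b n$, so that $b\wedge w(m)=b\wedge w(n)$ by Definition~\ref{defn:locale-of-transitions}. This is the same pattern as the proofs of Lemmas~\ref{lemma:basis-of-etale-space} and~\ref{lemma:basis-of-pullback-with-LB1T}.

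The gap is in the generation step, which rests on a false claim. For general $T$, a finite meet $[b_1]\wedge[b_2]$ need not be of the form $[b']$. By \eqref{eqn:LB-bind} and Proposition~\ref{prop:alt-axioms}, $[b_1\bind(0\mapsto\return 0;\,1\mapsto b_2)]=[b_1]\wedge[b_1\rskip b_2]$, not $[b_1]\wedge[b_2]$, because the side effect of $b_1$ moves the state before $b_2$ runs. The two agree when $b_1$ is hyperaffine, since then $b_1\rskip b_2=b_2$; that is the setting of Lemma~\ref{lemma:H2-of-hyperaffine-unary-monad}, not the general one.

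For a concrete counterexample, let $T$ be free on two binary operations $\nm{f},\nm{g}$. It is finitary, so $\nm{LB}_0T$ is spatial with points $\bb{B}_0T\cong 2^{\{\nm{f},\nm{g}\}^{+}}$; a behaviour records the bit returned by the last letter of each nonempty word once the earlier letters have run. Every non-$\return$ $b'\in T2$ begins with $\nm{f}$ or with $\nm{g}$, so $[b']$ depends only on coordinates of words starting with that letter. The coordinate at the other one-letter word is independent of these, so $[b']\leq[\nm{f}]\wedge[\nm{g}]$ forces $[b']=\bot$, even though $[\nm{f}]\wedge[\nm{g}]\neq\bot$. Hence $[\nm{f}]\wedge[\nm{g}]$ is a complemented open that is not a join of $[b]$'s.

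The obstruction carries over to $\nm{LB}_1T$. If $[m|b]\leq[\return|\nm{f}]\wedge[\return|\nm{g}]$, evaluating at $n=m$ gives $[b]\leq[\nm{f}]\wedge[\nm{g}]$, so $[m|b]=\bot$, while $[\return|\nm{f}]\wedge[\return|\nm{g}]\neq\bot$. So your conclusion that every $w$ is a join of opens $[m|b]$ is false in general: they form a subbasis, not a basis.

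The repair is easy, because the lemma only asserts that the $[m|b]$ generate $\cal{O}(\nm{LB}_1T)$ as a frame. Since the $[b]$ generate $\cal{O}(\nm{LB}_0T)$, each $w(m)$ is a join of finite meets $[b_1]\wedge\cdots\wedge[b_k]$; no detour through complemented opens is needed. Meets in $\cal{O}(\nm{LB}_1T)$ are pointwise, so $\hat{m}\wedge\nm{const}_{[b_1]\wedge\cdots\wedge[b_k]}=[m|b_1]\wedge\cdots\wedge[m|b_k]$. Distributivity applied to your first claim then writes $w$ as a join of finite meets of the $[m|b]$.
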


\begin{proposition} \label{prop:germs-of-LB1T}
The set of points $\pt(\nm{LB}_1T)$ is bijective with $\bb{B}_1T$.
\end{proposition}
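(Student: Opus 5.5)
We derive this from the germ description of Proposition~\ref{prop:germs-of-local-homeo}, applied to the sheaf of transitions $F_T$. By Proposition~\ref{prop:corresponding-local-homeo}, $\nm{LB}_1T = E(F_T)$. So $\pt(\nm{LB}_1T) \cong \sum_{p \in \pt \nm{LB}_0T} |F_T|/_{\equiv_p}$. By Proposition~\ref{prop:compact-and-ultraparacompact-is-stone} and the remark preceding it, $\pt \nm{LB}_0T \cong \bb{B}_0T$, with $p$ corresponding to the behaviour $\beta_p$ such that $\beta_p \in [t \mapsto a]$ iff $p$ lies in that open. It then suffices to give, for each $\beta$, a bijection $|F_T|/_{\equiv_\beta} \cong T1/_{\sim_\beta}$ over $\beta$. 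This matches the description of $\bb{B}_1T$ as pairs $(\beta, [t]_\beta)$ from Definition~\ref{defn:behaviour-category}.

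\emph{Surjectivity of generators.} Every element of $F_T$ is a class of some $x \in T1[B]^{\cal J}$. Since $\{x(m)\}^-$ is a partition, the point $p$ lies in exactly one block $x(m)$. On that block, $x \equiv_{x(m)} \delta_m$ by the $B_{\cal J}$-set axioms, i.e.\ $b(P(x),x_b)=x_b$. Hence every germ at $p$ is the germ of a generator $m \in T1$, and the map $T1 \to |F_T|/_{\equiv_p}$, $m \mapsto [\delta_m]_p$, is surjective.

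\emph{Identifying the kernel.} By Lemma~\ref{lemma:pointwise-trace-equivalence}, $m \equiv_b n$ iff $m \sim_b n$, i.e.\ iff $b \le \deno{m \sim n}$. So $m \equiv_p n$ iff $p$ lies in some complemented $b \le \deno{m\sim n}$. By zero-dimensionality, $\deno{m\sim n}$ is the join of such $b$, so this holds iff $p \in \deno{m \sim n}$.

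\emph{Main step.} It remains to show that $p \in \deno{m \sim n}$ iff $m \sim_{\beta_p} n$. Since points preserve arbitrary joins and finite meets, $p \in \deno{m\sim n}$ iff there is a chain $m=m_1,\dots,m_k=n$ with $p \in \deno{m_i \sim_1 m_{i+1}}$ for each $i$. For a single link, $p \in \deno{m_i \sim_1 m_{i+1}}$ iff there exist $t\in TA$ (with $|A|\le\kappa$), $u,u'$ and $a$ such that $u(a)=u'(a)$, $m_i=t\bind u$, $m_{i+1}=t\bind u'$ and $\beta_p(t)=a$.
\begin{itemize}
\item \emph{Forward direction.} Given such data, $t\bind u \sim_\beta t\rskip u(a) = t \rskip u'(a) \sim_\beta t\bind u'$.
\item \emph{Backward direction.} The generating pairs $(t\bind u, t\rskip u(\beta t))$ of $\sim_\beta$ are single links: take $u' = \lambda\_.u(\beta t)$. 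This requires reducing $t$ to arity $\le\kappa$ using the rank of $T$, which is harmless since $u$ can be precomposed accordingly. Symmetry and transitivity of $\sim_\beta$ are then matched by reversing and concatenating chains.
\end{itemize}

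Finally, check that the source assignment matches. Definition~\ref{defn:etale-space-of-BJ-set} gives $\sigma\inv = \nm{const}$, so the germ $[m]_p$ lies over $p$. The topology is not at issue, since only a bijection of sets is claimed.

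The expected obstacle is the rank reduction in the backward direction of the main step, together with making sure that $\pt \nm{LB}_0T \cong \bb{B}_0T$ is available for arbitrary ranked $T$, not only finitary $T$. For the latter I would rely on Proposition~\ref{prop:loc-top-adjunction-for-comodels}: the terminal $\Top$-comodel is the spatialization of the terminal $\Loc$-comodel.
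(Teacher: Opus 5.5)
Your proposal is correct and follows the same route as the paper: you apply the germ description of Proposition~\ref{prop:germs-of-local-homeo} to $F_T$, identify $\pt\nm{LB}_0T$ with $\bb{B}_0T$, use $x = P(\lambda b.\,m_b)$ to see that every germ is the germ of a generator, and then identify $\equiv_\beta$ on $T1$ with $\sim_\beta$. The paper only asserts this last identification, whereas you justify it via Lemma~\ref{lemma:pointwise-trace-equivalence}, the fact that points preserve joins and finite meets, and the rank reduction that matches the generating pairs of $\sim_\beta$ with single links of $\deno{-\sim_1-}$; all of these steps check out.
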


Finally, the following lemma is useful for we will often
have to consider various pullbacks with the source map, such as
when we define the composition map of the localic
behaviour category in definition \ref{defn:behaviour-category} below.

\begin{lemma} \label{lemma:pullback-with-LB1T}
The pullback $L \times_{\nm{LB}_0T} \nm{LB}_1T$ of a locale map
$f \colon L \to \nm{LB}_0T$ along the source map $\sigma \colon
\nm{LB}_1T \to \nm{LB}_0T$ has frame of opens given by the
pointwise-ordered poset of functions $h \colon T1 \to \cal{O}L$
for which $m_1 \sim_b m_2$ implies $h(m_1) \wedge f\inv b =
h(m_2) \wedge f\inv b$ for any $m_1, m_2 \in T1$ and $b \in B$.
\end{lemma}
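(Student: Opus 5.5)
We will prove Lemma~\ref{lemma:pullback-with-LB1T} by reducing the pullback along a local homeomorphism to a base change of the corresponding sheaf. The key fact is that pulling back an étale space $E(F)\to \nm{LB}_0T$ along $f\colon L\to\nm{LB}_0T$ yields the étale space over $L$ of the inverse-image sheaf $f^\ast F$. Its opens are the subsheaves of $f^\ast F$, equivalently the sheaf maps $f^\ast F\to\Omega_L$. By the adjunction $f^\ast\dashv f_\ast$, these correspond to sheaf maps $F\to f_\ast\Omega_L$ over $\nm{LB}_0T$. So the frame $\cal{O}(L\times_{\nm{LB}_0T}\nm{LB}_1T)$ is the poset of sheaf morphisms $F_T\to f_\ast\Omega_L$, ordered pointwise. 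The order statement uses the fact that the correspondence is natural in the ordered object $\Omega_L$.

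The next step is to express $f_\ast\Omega_L$ as a $B_{\cal{J}}$-set. Its global sections are $\cal{O}L$, and it has a global section (e.g.\ $\top$). Restricting along $b\in B$ corresponds to meeting with $f\inv b$. So the induced relations are $u\equiv_b v\iff u\wedge f\inv b=v\wedge f\inv b$, and the partition operations are $P(\lambda b.u_b)=\bigvee_{b\in P}u_b\wedge f\inv b$. We must check conditions (i)--(iii) of the characterisation from \cite{Garner2025-hr}. Conditions (i) and (ii) are immediate. Condition (iii) holds because $f\inv$ carries a partition $P\in\cal{J}$ to a partition of $L$, since $f\inv$ preserves joins, meets and complements. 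This matches the description of $\Omega$ given before Definition~\ref{defn:etale-space-of-BJ-set} in the case $f=\id$.

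We then unwind homomorphisms out of $F_T$, exactly as in the passage from Definition~\ref{defn:etale-space-of-BJ-set} to Definition~\ref{defn:locale-of-transitions}. Since $F_T$ is a quotient of the free $B_{\cal{J}}$-set $T1[B]^{\cal{J}}$, Proposition~\ref{prop:free-BJ-sets} shows that a homomorphism $F_T\to f_\ast\Omega_L$ is a function $h\colon T1\to\cal{O}L$ that respects the congruence $\approx$. By Lemma~\ref{lemma:pointwise-trace-equivalence}, respecting $\approx$ amounts to the condition that $m_1\sim_b m_2$ implies $h(m_1)\equiv_b h(m_2)$. In $f_\ast\Omega_L$, the relation $h(m_1)\equiv_b h(m_2)$ reads $h(m_1)\wedge f\inv b=h(m_2)\wedge f\inv b$, which is the stated condition. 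For this we reuse the argument of \cref{subsec:proof-of-locale-of-transitions} verbatim, replacing $\Omega$ by $f_\ast\Omega_L$, since that argument only used the $B_{\cal{J}}$-set structure of the target.

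The main obstacle is the first step: justifying that pullback along the étale map $E(F_T)\to\nm{LB}_0T$ corresponds to $f^\ast$ on sheaves. This is a standard result of topos theory: local homeomorphisms are stable under pullback, and the equivalence $\nm{Sh}(X)\simeq\nm{LH}/X$ is natural in $X$. We would cite it from Johnstone's Elephant (C1.3). If a self-contained argument is preferred, we would instead verify the universal property directly. The candidate frame admits frame maps from $\cal{O}L$ (constants $u\mapsto\lambda m.u$) and from $\cal{O}(\nm{LB}_1T)$ (via $w\mapsto f\inv\circ w$). These maps agree on $\cal{O}\nm{LB}_0T$. Given a cocone, it factors uniquely because opens are joins of the form $\bigvee_m \hat m\wedge \nm{const}_{h(m)}$, as in Lemma~\ref{lemma:basis-of-LB1T}.
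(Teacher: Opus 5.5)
Your argument is correct, but your main route differs from the paper's.

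\textbf{The paper's route.} The paper checks the universal property of the pushout of frames directly. It sets $\pi_1\inv \colon u \mapsto \nm{const}_u$ and $\pi_2\inv \colon w \mapsto \lambda m.\, f\inv w(m)$. It then proves the decomposition $h = \bigvee_m m^\ast \wedge \nm{const}_{h(m)}$ with $m^\ast = \lambda n.\, f\inv\deno{m \sim n}$ (Lemma~\ref{lemma:basis-of-pullback-with-LB1T}). From this it concludes that a mediating map for a cone $(i,j)$ is forced on these generators, and it leaves existence as ``straightforward''. Your fallback is essentially that proof, except that the $\hat m$ in your decomposition should be read as its image $f\inv \circ \hat m = m^\ast$ in the pullback frame.

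\textbf{Your route.} You identify the pullback with the \'etale space of $f^\ast F_T$ and transpose along $f^\ast \dashv f_\ast$. Everything is then computed over the ultraparacompact base $\nm{LB}_0T$, where the only new input is the $B_{\cal{J}}$-set $f_\ast\Omega_L$. After that, the argument of \cref{subsec:proof-of-locale-of-transitions} applies verbatim.

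\textbf{Trade-offs.} Your route gives existence and uniqueness at once, so there is no unverified well-definedness step. It also explains why the condition is exactly that of Definition~\ref{defn:locale-of-transitions} with $\Omega$ replaced by $f_\ast\Omega_L$. The price is reliance on external facts:
\begin{itemize}
\item pullback-stability of the sheaf/local-homeomorphism correspondence;
\item opens of an \'etale space being its subsheaves;
\item full faithfulness of the $B_{\cal{J}}$-set description of sheaves with a global section.
\end{itemize}
You also obtain the frame only up to these identifications.

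\textbf{One addition needed.} Later constructions use the explicit projections: the composition map $\mu$, and the functoriality and adjunction proofs. You should therefore record what the projections become under the transposition, which follows by naturality:
\begin{itemize}
\item $\pi_1$ corresponds to $u \mapsto \nm{const}_u$, the transpose of the global section $u$;
\item $\pi_2$ corresponds to $w \mapsto f\inv \circ w$, i.e.\ postcomposition with $\Omega_{\nm{LB}_0T} \to f_\ast\Omega_L$, $V \mapsto f\inv V$.
\end{itemize}
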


In terms of points, such a function $h$ contains all the points
$(x, [m]_\beta)$ for which $x \in h(m)$ and $f(x) = \beta$.

\begin{definition} \label{defn:localic-behaviour-category}
Let $T$ be a monad. Then the \emph{localic behaviour
category} $\nm{LB}T$ has:
\begin{itemize}
  \item locale of objects $\nm{LB}_0T$ given by the terminal
    localic $T$-comodel;
  \item source map $\sigma \colon \nm{LB}_1T \to \nm{LB}_0T$
    given by $\sigma\inv(u) := \nm{const}_u$;
  \item target map $\tau \colon \nm{LB}_1T \to \nm{LB}_0T$ given by
    $\tau\inv(u) := \lambda m. \codeno{m}\inv u$;
  \item identity map $\iota \colon \nm{LB}_0T \to \nm{LB}_1T$ given by
    $\iota\inv(w) := w(\return)$;
  \item composition map $\mu \colon \nm{LB}_1T
    \times_{\nm{LB}_0T} \nm{LB}_1T \to \nm{LB}_1T$ given by
    $\mu\inv \colon w \mapsto \lambda m, n. w(m \rskip n)$,
    where, by lemma \ref{lemma:pullback-with-LB1T} we identify
    $\cal{O}(\nm{LB}_1T \times_{\nm{LB}_0T} \nm{LB}_1T)$ with the
    poset of functions $h \colon T1 \times T1 \to
    \cal{O}(\nm{LB}_0T)$ for which $m_1 \sim_b m_2$ implies $h(m_1,
    n) \equiv_b h(m_2, n)$ and $n_1 \sim_b n_2$ implies $h(m,
    n_1) \equiv_{\codeno{m}\inv b} h(m, n_2)$.
\end{itemize}
\end{definition}

A function $h \colon T1 \times T1 \to
\cal{O}(\nm{LB}_0T)$ as above corresponds to the open set
containing pairs of germs
$([m]_\beta, [n]_{\partial_m \beta})$ with $\beta \in h(m, n)$.
For the verification that this is a localic category, see
\cref{subsec:proof-of-localic-behaviour-category}.

\subsection{Topological Behaviour Category \& Finitary Monads}

We have already seen that $\pt \nm{LB}_0T \cong \bb{B}_0T$
(proposition \ref{prop:compact-and-ultraparacompact-is-stone}) and
$\pt \nm{LB}_1T \cong \bb{B}_1T$ (proposition
\ref{prop:germs-of-LB1T}). Being a right adjoint, the functor $\pt$ preserves
limits and hence lifts to internal categories. Hence we get a \emph{topological
behaviour category} $\bb{B}T := \nm{pt}(\nm{LB}T)$, which is just an
appropriately topologized
version of the behaviour category.

\begin{definition} (Topological behaviour category)
Let $T$ be a monad. The \emph{operational topology} on
$\bb{B}_1T$ is generated by subbasic opens of the form $[m|b] :=
\set{[m]_\beta | \beta \in [b]}$. Taking $\bb{B}_0T$ and $\bb{B}_1T$
with their operational topologies makes the structure maps of
the behaviour category continuous, yielding the \emph{topological
behaviour category} $\bb{B}T$.
\end{definition}

We have also shown that for finitary monads $T$, $\nm{LB}_0T$ is spatial and
hence corresponds to $\bb{B}_0T$. We ought to now establish also the spatiality
of $\nm{LB}_1T$, in order to conclude that it suffices to consider
the topological
behaviour category in the finitary monad case. Now, essentially $\nm{LB}_1T$ is
spatial because it is determined
by a sheaf $F_T$ over a spatial locale $\nm{LB}_0T$, and sheaves only
depend on the lattice of opens of its base space. More precisely, we
observe in the following lemma that the source map
$\sigma \colon \bb{B}_1T \to \bb{B}_0T$ of the topological
behaviour category is also a local homeomorphism,
and it has the same sections as the (source
map of the) localic behaviour category, and is thus the same sheaf.

\begin{lemma} \label{lemma:global-sections-of-B1T-is-F_T}
Let $T$ be a finitary monad. Then for any global section $s$ of
$\sigma \colon \bb{B}_1T \to \bb{B}_0T$, there is a finite family
of pairs $\set{(b_i \in \nm{BB}_0T, m_i \in T1)}_{i \in I}$ such
that $\set{b_i}_{i \in I}$ is a finite partition and $s$ maps
$\beta \in b_i \mapsto [m_i]_\beta$. Moreover, this family is
unique up to trace equivalence: if we have two such families
$\set{(b_i, m_i)}_{i \in I}$ and $\set{(b_j, n_j)}_{j \in J}$
then for all $i, j$ we have $m_i \sim_{b_i \wedge b_j} n_j$.
\end{lemma}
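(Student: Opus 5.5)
The proof is a compactness argument in $\bb{B}_0T$ (a Stone space by proposition \ref{prop:terminal-top-comodel-is-Stone}). It rests on the operational topology on $\bb{B}_1T$, whose subbasic opens are $[m|b] = \set{[m]_\beta | \beta \in [b]}$.

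\textbf{Step 1: local constancy.} Fix a (continuous) global section $s$ and a behaviour $\beta \in \bb{B}_0T$. Then $s(\beta) = [m_\beta]_\beta$ for some $m_\beta \in T1$. Consider the open $\hat{m}_\beta := [m_\beta|\top] = \set{[m_\beta]_\gamma | \gamma \in \bb{B}_0T}$, which is the image of the continuous section $\eta(m_\beta)$; it contains $s(\beta)$. The preimage $s\inv(\hat{m}_\beta)$ is therefore an open neighbourhood of $\beta$. On it, $s(\gamma)$ and $[m_\beta]_\gamma$ are germs over the same point $\gamma$ lying in the same sheet, so $s(\gamma) = [m_\beta]_\gamma$. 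This is the step where one needs to know that the sets $[m|b]$ behave as sheets, i.e.\ that $\sigma$ is a local homeomorphism with local sections $\eta(m)$. I expect this to be the main technical point, and I would justify it via proposition \ref{prop:germs-of-LB1T} together with the identification of basic opens in proposition \ref{prop:germs-of-local-homeo}. Since $\bb{B}_0T$ has a basis of clopens, I then shrink the neighbourhood to a basic clopen $[b_\beta]$ with $b_\beta \in T2$ and $\beta \in [b_\beta]$.

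\textbf{Step 2: finite partition.} The sets $[b_\beta]$ cover $\bb{B}_0T$, which is compact, so finitely many of them cover it. I then disjointify in the usual way: set $c_1 = b_1$ and $c_k = b_k \wedge \neg b_1 \wedge \dots \wedge \neg b_{k-1}$, using the Boolean algebra of clopens. Discarding empty pieces gives a finite partition $\set{b_i}$ with elements $m_i$ such that $s(\beta) = [m_i]_\beta$ for every $\beta \in b_i$.

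\textbf{Step 3: uniqueness up to trace equivalence.} Suppose two such families are given. For every $\beta \in b_i \wedge b_j$ we have $[m_i]_\beta = s(\beta) = [n_j]_\beta$, that is, $m_i \sim_\beta n_j$. To pass to $m_i \sim_{b_i \wedge b_j} n_j$, meaning $b_i \wedge b_j \leq \deno{m_i \sim n_j}$ in $\cal{O}\nm{LB}_0T$, I would use two facts. First, $\pt \deno{m \sim n} = \set{\beta | m \sim_\beta n}$; this follows because $\deno{-\sim-}$ is the pointwise transliteration of $\beta$-equivalence, with points of a join being the union of points, and each $\sim_k$ chain matching the generating relation of $\sim_\beta$. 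Second, $\nm{LB}_0T$ is spatial in the finitary case by proposition \ref{prop:compact-and-ultraparacompact-is-stone}. Together these turn the pointwise inclusion into the required inequality of opens. An alternative route to the same conclusion is lemma \ref{lemma:pointwise-trace-equivalence}.
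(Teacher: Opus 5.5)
Your proposal is correct and matches the paper's own proof: neighbourhoods $s\inv[m_\beta|\return 1]$ on which $s$ agrees with the germs of $m_\beta$, refined to a finite clopen partition using the Stone property of $\bb{B}_0T$, and uniqueness from $[m_i]_\beta = s(\beta) = [n_j]_\beta$, turned into an inequality of opens by spatiality of $\nm{LB}_0T$; your identification of the points of $\deno{m \sim n}$ with $\set{\beta | m \sim_\beta n}$ spells out the step the paper leaves implicit in that appeal. Two small remarks: the ``sheet'' step you flag as the main difficulty is immediate from the definitions, since $s(\gamma) \in [m_\beta|\return 1]$ and $\sigma(s(\gamma)) = \gamma$ already force $s(\gamma) = [m_\beta]_\gamma$; and in Step 1 you should take an arbitrary clopen neighbourhood, because for a general monad a finite meet of generators $[b]$ need not itself have the form $[b]$ with $b \in T2$ (this does not affect the argument).
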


It is not hard to see that a family as in the lemma above defines
an element of the free $\nm{BB}_0T$-set, and that the uniqueness
translates to the same condition as in lemma
\ref{lemma:pointwise-trace-equivalence}. That is, the global
sections of $\bb{B}T$ correspond to the sheaf of transitions over
$\nm{BB}_0T$, and hence $\cal{O}(\bb{B}_1T) \cong \nm{LB}_1T$. Hence we have:

\begin{proposition} \label{prop:finitary-implies-spatiality}
For finitary $T$, $\nm{LB}_0T$ and $\nm{LB}_1T$ are
spatial, i.e., $\nm{LB}T$ has enough states and transitions.
\end{proposition}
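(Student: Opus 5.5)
The statement has two halves, and I would treat them separately: spatiality of $\nm{LB}_0T$, then spatiality of $\nm{LB}_1T$.

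\textbf{Spatiality of $\nm{LB}_0T$.} This is already in hand. Proposition \ref{prop:LB0T-ultraparacompact} says that for finitary $T$ the locale $\nm{LB}_0T$ is both compact and ultraparacompact. Proposition \ref{prop:compact-and-ultraparacompact-is-stone} then makes it spatial, with $\pt\nm{LB}_0T \cong \bb{B}_0T$ a Stone space. We may therefore identify $\nm{LB}_0T$ with $\cal{O}(\bb{B}_0T)$. Its Grothendieck Boolean algebra $B_{\cal{J}}$ is then just the Boolean algebra of clopens of $\bb{B}_0T$. Because of compactness, $\cal{J}$ is the topology of finite partitions.

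\textbf{Spatiality of $\nm{LB}_1T$: the sheaf-theoretic route.} Spatiality here means that the counit $\cal{O}(\pt \nm{LB}_1T) \to \nm{LB}_1T$ is an isomorphism. Proposition \ref{prop:germs-of-LB1T} identifies $\pt\nm{LB}_1T$ with $\bb{B}_1T$, carrying the operational topology. I would argue in three steps.
\begin{enumerate}
\item Show that the topological source map $\sigma\colon \bb{B}_1T \to \bb{B}_0T$ is a local homeomorphism. The subbasic opens $[m|b]$ cover $\bb{B}_1T$. On each one, $\sigma$ restricts to an open injection onto $[b]$, with inverse $\beta \mapsto [m]_\beta$. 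Continuity of this inverse should follow because $\{\beta \in [b] \mid [m]_\beta \in [n|c]\}$ equals $[b] \wedge [c] \wedge \deno{m \sim n}$, which is open.
\item Use the equivalence between local homeomorphisms over the spatial locale $\bb{B}_0T$ and sheaves on it. Since this equivalence depends only on the frame $\cal{O}(\bb{B}_0T) = \nm{LB}_0T$, it suffices to show that the sheaf of sections of $\sigma$ is isomorphic to the sheaf determined by $F_T$.
\item Because $\bb{B}_0T$ is ultraparacompact and $\sigma$ has a global section (namely $\eta(\return)$), this sheaf is captured by its $B_{\cal{J}}$-set of global sections. So the task reduces to an isomorphism of $B_{\cal{J}}$-sets between $\Gamma(\sigma)$ and $F_T$.
\end{enumerate}

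\textbf{Comparing $\Gamma(\sigma)$ with $F_T$.} Lemma \ref{lemma:global-sections-of-B1T-is-F_T} does the work. It says that every global section comes from a finite partition $\{b_i\}_{i\in I}$ labelled by terms $m_i \in T1$, and that the labels are unique up to $m_i \sim_{b_i\wedge b_j} n_j$. Such labelled partitions are exactly the elements of the free $B_{\cal{J}}$-set $T1[B]^{\cal{J}}$, since $\cal{J}$ consists of the finite partitions. Lemma \ref{lemma:pointwise-trace-equivalence} says that $\approx$ is precisely this pointwise trace equivalence. Together these give a bijection $\Gamma(\sigma) \cong F_T$.

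It remains to check that this bijection preserves the relations $\equiv_b$. Two sections agree over $[b]$ iff their labels are $\sim_b$-related, and by Lemma \ref{lemma:pointwise-trace-equivalence} that is the same as $\equiv_b$ in $F_T$. Applying Proposition \ref{prop:corresponding-local-homeo}, the étale locale of $F_T$, which is $\nm{LB}_1T$, then coincides with the locale of $\sigma\colon\bb{B}_1T\to\bb{B}_0T$, which is $\cal{O}(\bb{B}_1T)$. Hence $\nm{LB}_1T$ is spatial.

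\textbf{Main obstacle.} I expect the hard part to be the first bullet of Step 1 together with its interaction with Lemma \ref{lemma:global-sections-of-B1T-is-F_T}. One has to confirm that the operational topology on $\bb{B}_1T$ is exactly the étale topology, so that continuous sections are locally of the form $\beta\mapsto[m]_\beta$. This is where finitarity enters: compactness turns a local representation into a finite clopen partition. If the direct argument proves awkward, I would fall back on a more concrete check. Namely, compare the canonical frame map $\nm{LB}_1T \to \cal{O}(\bb{B}_1T)$, given by $w \mapsto \{[m]_\beta \mid \beta \in w(m)\}$, on the generators $[m|b]$ from Lemma \ref{lemma:basis-of-LB1T}, and show it is injective using that both sides are determined by these basic opens.
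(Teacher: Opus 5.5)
Your proposal is correct and follows the paper's route. The paper gets spatiality of $\nm{LB}_0T$ from compactness plus ultraparacompactness, then identifies the global sections of the topological source map $\sigma\colon\bb{B}_1T\to\bb{B}_0T$ with $F_T$ via Lemma \ref{lemma:global-sections-of-B1T-is-F_T} and Lemma \ref{lemma:pointwise-trace-equivalence}, concluding $\cal{O}(\bb{B}_1T)\cong\nm{LB}_1T$; you go further than the paper, which only asserts that $\sigma$ is a local homeomorphism, by checking this on the opens $[m|b]$ using $\beta\in\deno{m\sim n}\iff m\sim_\beta n$.
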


\section{The Stone Adjunction for Monads}
\label{sec:stone-adjunction-for-monads}

In this section, we functorialize the construction of the localic
behaviour category and prove that it has a right adjoint by taking
sections of the source map. Here, the correct morphisms between
localic categories are \emph{retrofunctors}, not functors like usual.
Just as we view topological/localic categories as transition
systems, we can view a retrofunctor $\bb{C} \rightsquigarrow \bb{D}$
as a \emph{simulation} of transition systems.

\begin{definition} \textup{\cite[Example 2.9]{Clarke2020-tm}}
Let $\bb{C}$ and $\bb{D}$ be small categories. A \emph{retrofunctor} $F
\colon \bb{C} \rightsquigarrow \bb{D}$ consists of two functions
$F_0 \colon \bb{C}_0 \to \bb{D}_0$ and $F_1 \colon \bb{C}_0
\times_{\bb{D}_0} \bb{D}_1 \to \bb{C}_1$, where $\bb{C}_0
\times_{\bb{D}_0} \bb{D}_1$ is the pullback of $F_0$ along
$\sigma_{\bb{D}}$. In other words, given $c \in \bb{C}_0$ and $f
\in \bb{D}(F_0c, d)$ we get a \emph{lift} $F_1(c, f) \colon c \to
c'$ such that $F_0c' = d$. These are further required to respect identity
and composition:
\begin{equation}
  F_1(c, \id_{F_0c}) = \id_{c} \qquad F_1(c, g \circ f) = F_1(c',
  g) \circ F_1(c, f) \quad \text{ where } \quad F_1(c, f) \colon c \to c'
\end{equation}
If $\bb{C}$ and $\bb{D}$ are internal categories, then there is
an appropriate notion of internal retrofunctor which make the
appropriate diagrams commute, as described by Clarke \cite[definition
2.10]{Clarke2020-tm}. Write $\LocRetro$ and $\TopRetro$ for the
categories of internal categories and retrofunctors in $\Loc$ and
$\Top$ respectively.
\end{definition}

\begin{proposition} \label{prop:functoriality-of-LB}
The assignment $T \mapsto \nm{LB}T$ extends contravariantly to a
functor $\nm{LB} \colon \Mnd_r(\Set) \to \LocRetro^\op$, and
similarly a functor $\bb{B} \colon \Mnd_\omega(\Set) \to \TopRetro^\op$.
\end{proposition}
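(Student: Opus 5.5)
Given a monad map $\gamma \colon T \to S$, I will build a localic retrofunctor $\nm{LB}\gamma \colon \nm{LB}S \rightsquigarrow \nm{LB}T$. Then I check identities and composites; both are immediate because every piece of the construction is precomposition with $\gamma$.

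\textbf{The object part.} I define $(\nm{LB}\gamma)_0 \colon \nm{LB}_0S \to \nm{LB}_0T$ by $[b] \mapsto [\gamma(b)]$ for $b \in T2$. To see that this is a frame map, I only need the relations \eqref{eqn:LB-bot}, \eqref{eqn:LB-return} and \eqref{eqn:LB-bind} for $T$ to be sent to valid equations in $\nm{LB}_0S$.
\begin{itemize}
\item Since $\gamma$ preserves $\bind$ and $\return$, it sends $[t \mapsto a]$ (which is $[t \bind \delta_a]$) to $[\gamma t \mapsto a]$.
\item It sends $t \rskip u(a)$ to $\gamma t \rskip \gamma u(a)$.
\item So each relation for $T$ becomes the corresponding relation for $S$.
\end{itemize}
A cleaner way to say this: restricting the cointerpretations of the terminal $S$-comodel $\nm{LB}_0S$ along $\gamma$ makes it a $T$-comodel. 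The terminal property (Proposition~\ref{prop:terminal-localic-comodel}) then gives the unique comodel map $\nm{LB}_0S \to \nm{LB}_0T$. Uniqueness makes functoriality on objects automatic. This map is also a $T$-comodel map, so $\codeno{\gamma m}_S \circ$ agrees with $\codeno{m}_T$ under $(\nm{LB}\gamma)_0$. I will need this fact for the target condition below.

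\textbf{The morphism part.} Write $f := (\nm{LB}\gamma)_0$. I need a locale map $\nm{LB}_0S \times_{\nm{LB}_0T} \nm{LB}_1T \to \nm{LB}_1S$, where the pullback is of $f$ along $\sigma_T$. By Lemma~\ref{lemma:pullback-with-LB1T}, the frame of the pullback is the poset of functions $h \colon T1 \to \cal{O}\nm{LB}_0S$ such that $m_1 \sim^T_b m_2$ implies $h(m_1)\wedge f\inv b = h(m_2) \wedge f\inv b$. I define the inverse image by
\[ w \mapsto w \circ \gamma_1, \]
for $w \colon S1 \to \cal{O}\nm{LB}_0S$ in $\nm{LB}_1S$ (Definition~\ref{defn:locale-of-transitions}).

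\textbf{Main obstacle: well-definedness of the morphism part.} The step I expect to be hardest is showing that $w\circ\gamma_1$ satisfies the compatibility condition. This reduces to the claim
\[ f\inv\deno{m_1 \sim^T m_2} \le \deno{\gamma m_1 \sim^S \gamma m_2}. \]
I will check it on generators.
\begin{itemize}
\item A $\sim_1$-witness $m = t\bind u$, $m' = t \bind u'$ with $u(a)=u'(a)$ maps to the $S$-witness $\gamma t \bind \gamma u$, $\gamma t \bind \gamma u'$.
\item It contributes $[\gamma t \mapsto a] = f\inv[t\mapsto a]$.
\item Because $f\inv$ preserves joins and finite meets, the inequality passes to $\sim_k$ and then to $\sim$.
\end{itemize}
One subtlety needs care. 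The rank bound $\kappa$ in $\deno{-\sim_1-}$ may differ between $T$ and $S$. I will use ranks large enough for both, and note that the relation does not depend on the choice of rank once the rank exceeds the monad's rank.

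Preservation of joins and finite meets by $w\mapsto w\circ\gamma_1$ is clear, since both are computed pointwise.

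\textbf{Retrofunctor axioms.} Next I verify the internal axioms in the sense of Clarke: compatibility with source, compatibility with target, preservation of identities, and preservation of composition. Each is a direct check on inverse images using the formulas in Definition~\ref{defn:localic-behaviour-category}.
\begin{itemize}
\item \emph{Source:} $\mathrm{const}_u \circ\gamma_1 = \mathrm{const}_u$.
\item \emph{Identity:} use $\gamma(\return)=\return$.
\item \emph{Composition:} use $\gamma(m\rskip n)=\gamma m\rskip\gamma n$.
\item \emph{Target:} this is where I use $\codeno{\gamma m}_S\inv f\inv = f\inv\codeno{m}_T\inv$, the comodel-map property from the object part.
\end{itemize}

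\textbf{Functoriality.} This holds because $(\delta\gamma)_1 = \delta_1\gamma_1$ and identities go to identities.

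\textbf{The topological case.} For finitary monads, I apply $\pt$ to obtain $\bb{B}$. This is legitimate because $\pt$ preserves the pullbacks used in defining retrofunctors, and because Proposition~\ref{prop:finitary-implies-spatiality} identifies $\pt\nm{LB}T$ with the topological behaviour category.
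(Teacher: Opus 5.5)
Your proposal follows the paper's proof: the same inverse images $[b]\mapsto[\gamma(b)]$ on objects and $w\mapsto w\circ\gamma_1$ on morphisms (via the pullback frame of Lemma~\ref{lemma:pullback-with-LB1T}), the same source, target, identity and composition checks, and the same functoriality check. Your explicit well-definedness argument via $f\inv\deno{m_1\sim m_2}\le\deno{\gamma m_1\sim\gamma m_2}$ and your terminal-comodel derivation of $f$ are correct additions that the paper leaves implicit.

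One step needs more care than your sketch gives it. The inverse image along an induced map into a pullback is not literally evaluation at $\return$ or precomposition with $\gamma_1$. This applies to $\langle\id,\iota_T f\rangle$ in the identity axiom and to the reindexing map in the composite of two retrofunctors. As the paper notes, you need the decomposition of Lemma~\ref{lemma:basis-of-pullback-with-LB1T} to identify these inverse images that way.
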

\begin{proof}
A monad morphism $\varphi \colon T \to S$ induces a retrofunctor
whose action on
objects $\nm{LB}_0\varphi \colon \nm{LB}_0S \to \nm{LB}_0T$ is given on
generating opens by $(\nm{LB}_0\varphi)\inv \colon [b] \mapsto
[\varphi(b)]$.
For the action on morphisms $(\nm{LB}_1\varphi) \colon \nm{LB}_0S
\times_{\nm{LB}_0T} \nm{LB}_1T \to \nm{LB}_1S$, by lemma
\ref{lemma:pullback-with-LB1T} we can identify
$\cal{O}(\nm{LB}_0S \times_{\nm{LB}_0T} \nm{LB}_1T)$ with an
appropriate poset of functions $h \colon T1 \to
\cal{O}(\nm{LB}_0S)$. The action is then simply given by
$(\nm{LB}_1\varphi)\inv
\colon w \mapsto w \circ \varphi_1$. We refer to
\cref{subsec:proof-of-functoriality-of-LB} for more details and
the verification of functoriality.
\end{proof}

On the other hand, if we view a localic category $\nm{LC}$ as a
transition system, what is a computation on $\nm{LC}$? Well, a
computation (of output type $A$) should specify, for each state $c
\in \nm{LC}_0$, a transition out of $c$ (the ``side-effect'' of the
computation) along with an output in $A$. In other words, the
computations are \emph{global sections} of the source map, or more
precisely a disjoint, $A$-indexed, jointly global family of partial
sections. Indeed, we get a monad of such sections---notice that this
looks very much like a state monad except we also keep track which
transitions are taken, not just the end state.

\begin{proposition} \label{prop:monad-of-sections}
Let $\nm{LC}$ be a localic category. Then the endofunctor
\[ \Gamma{\nm{LC}}(A) = \set{ s \colon \nm{LC}_0 \to A \cdot
    \nm{LC}_1 | \id_{\nm{LC}_0} = \nm{LC_0} \overset{s}{\to} A
    \cdot \nm{LC}_1 \overset{\pi_{\nm{LC}_1}}\to \nm{LC}_1
\overset{\sigma}{\to} \nm{LC}_0} \]
on $\Set$ (with the action on $f \colon A \to B$ given by
post-composing $f \cdot \nm{LC}_1$) admits a monad structure
given, for arbitrary $a \in A$, $s \in \Gamma\nm{LC}A, u \colon A
\to \Gamma\nm{LC}B$ by
$
\begin{tikzcd}[ampersand replacement=\&,cramped]
  {\return a = \nm{LC}_0} \& {\nm{LC}_1} \& {A \cdot \nm{LC}_1}
  \arrow["\iota", from=1-1, to=1-2]
  \arrow["{\upsilon_a}", from=1-2, to=1-3]
\end{tikzcd}$ and
\[
  \begin{tikzcd}[ampersand replacement=\&,cramped]
    {s \bind u = \nm{LC}_0} \& {A \cdot \nm{LC}_1} \&\&
    {\nm{LC}_1 \times (B \cdot \nm{LC}_1)} \& {B \cdot (\nm{LC}_1
    \times \nm{LC}_1)} \& {B \cdot \nm{LC}_1}
    \arrow["s", from=1-1, to=1-2]
    \arrow["{\braket{\pi, u \circ (A \cdot \tau)}}", from=1-2, to=1-4]
    \arrow["\cong", from=1-4, to=1-5]
    \arrow["\mu", from=1-5, to=1-6]
\end{tikzcd}.\]
In terms of points, we have $\return a \colon c \mapsto (a,
\id_c)$ and $s \bind u \colon c \mapsto (b, g \circ f)$ where
$(a, f) =: s(c)$ and $(b, g) =: u(a)(\tau(f))$, while in terms of
opens we have
\begin{alignat*}{3}
  &(\return a)\inv &&: \braket{a' \mapsto w} &&\mapsto \textup{if
  } a = a' \textup{ then } \iota\inv w \textup{ else } \bot \\
  &(s \bind u)\inv &&: \braket{b \mapsto w} &&\mapsto \bigvee_{a
  \in A} s\inv \Braket{a \mapsto \bigvee \set{v_1 \wedge \tau\inv
  u(a)\inv\braket{b \mapsto v_2} | v_1 \times v_2 \leq \mu\inv(w)} }.
\end{alignat*}
Moreover, the
assignment $\nm{LC} \mapsto \Gamma{\nm{LC}}$ defines a
contravariant functor $\Gamma \colon \LocRetro^\op \to \Mnd_r(\Set)$.
\end{proposition}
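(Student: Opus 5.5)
The plan is to prove the statement in three stages: $\Gamma\nm{LC}$ is a monad, it is ranked, and $\Gamma$ is functorial on retrofunctors.

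\textbf{Monad laws.} Rather than checking the laws on frames, I would rewrite $\Gamma\nm{LC}$ as a composite of two structures. First, $A \cdot \nm{LC}_1$ is the pullback of $A\cdot\nm{LC}_0$ along $\sigma$, since copowers in $\Loc$ are disjoint and pullback-stable. Hence a section $s$ in $\Gamma\nm{LC}A$ is the same thing as a pair consisting of an $A$-indexed clopen partition of $\nm{LC}_0$, i.e.\ a map $\nm{LC}_0 \to A\cdot\nm{LC}_0$ over the codiagonal, together with a section of $\sigma$. In this form, $\bind$ is the Kleisli composite in the internal category. Concretely:
\begin{itemize}
\item the section of $s \bind u$ is $\mu\circ\langle f, g\circ\tau\circ f\rangle$, where $f$ is the section underlying $s$ and $g$ is assembled from the $u(a)$ using the partition;
\item the output label is read off in the same way.
\end{itemize}
Each monad law then reduces to one category axiom for $\nm{LC}$:
\begin{itemize}
\item $\return a \bind u = u(a)$ uses $\tau\iota=\id$ together with the left unit law $\mu\circ\langle \iota\tau, \id\rangle=\id$;
\item $s\bind\return = s$ uses the right unit law;
\item associativity uses associativity of $\mu$ and $\tau\mu = \tau\pi_2$.
\end{itemize}
In each case I also need the bookkeeping fact that copowers commute with the pullbacks involved, which holds because $\Loc$ is extensive. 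The stated formulas in terms of points and in terms of opens then just unpack these composites, using the frame description of $\nm{LC}_1\times_{\nm{LC}_0}\nm{LC}_1$.

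\textbf{Rank.} I expect this to be the main obstacle. The point is that an $A$-indexed family of disjoint complemented opens covering $\nm{LC}_0$ has at most $|\cal{O}(\nm{LC}_0)|$ nonzero members. So any $s$ factors through the subset $A'\subseteq A$ of labels whose component is nonzero, and $|A'|$ is bounded by $\kappa = |\cal{O}\nm{LC}_0|$.

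To make this precise I would show $s = s'\bind\lambda a'.\return(a')$ with $s'\in\Gamma\nm{LC}A'$. This requires that the component $\braket{a\mapsto -}$ pulled back along $s$ be $\bot$ outside $A'$, so that $s$ factors through the sub-copower $A'\cdot\nm{LC}_1$; this again uses that copowers in $\Loc$ are products of frames. I would then take $\lambda$ to be a regular cardinal greater than $\kappa$.

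\textbf{Functoriality.} Given an internal retrofunctor $F\colon \nm{LC}\rightsquigarrow\nm{LD}$ with components $F_0\colon\nm{LC}_0\to\nm{LD}_0$ and $F_1\colon \nm{LC}_0\times_{\nm{LD}_0}\nm{LD}_1\to\nm{LC}_1$, I define $\Gamma F\colon\Gamma\nm{LD}\to\Gamma\nm{LC}$ by sending $s$ to
\[ \nm{LC}_0 \xrightarrow{\langle \id, s F_0\rangle} \nm{LC}_0\times_{\nm{LD}_0}(A\cdot\nm{LD}_1)\cong A\cdot(\nm{LC}_0\times_{\nm{LD}_0}\nm{LD}_1)\xrightarrow{A\cdot F_1} A\cdot\nm{LC}_1 . \]
The steps are then as follows:
\begin{itemize}
\item this is a section of $\sigma$ because $F_1$ preserves sources;
\item naturality in $A$ is immediate;
\item preservation of $\return$ is the retrofunctor identity axiom;
\item preservation of $\bind$ follows from the composition axiom together with the condition that $F_0\tau F_1 = \tau\pi_2$, which lets $u(a)$ be evaluated at the correct target;
\item preservation of identities and composites of retrofunctors is a routine diagram chase.
\end{itemize}
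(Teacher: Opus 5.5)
Your proposal is correct in substance, but it is organised differently from the paper and covers more.

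\textbf{Comparison with the paper.} On the monad laws you agree with the paper's one-line sketch: each law is inherited from one axiom of $\nm{LC}$ by a diagram chase. The paper never writes down the action of $\Gamma$ on retrofunctors. Instead it appeals to the fact that a right adjoint's action on morphisms is forced by the adjunction of Theorem~\ref{thm:stone-adjunction-loc}.

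\begin{itemize}
\item \emph{Functoriality.} Your explicit action $s\mapsto (A\cdot F_1)\circ{\cong}\circ\langle \id, sF_0\rangle$ is checked against the identity axiom, the composition axiom and $F_0\tau F_1=\tau\pi_2$. This makes the proposition independent of that adjunction, whose appendix proof only argues uniqueness of transposes. The price is that you must later check your action agrees with the one the adjunction induces.
\item \emph{Rank.} Your argument (at most $|\cal{O}\nm{LC}_0|$ nonzero pieces, so every section factors through a small sub-copower) fills a gap shared by both routes. The paper asserts that $\Gamma$ lands in $\Mnd_r(\Set)$ but never proves it, and the universal-arrow argument presupposes it.
\item \emph{Decomposition.} Splitting a section, via extensivity, into an $A$-indexed clopen partition plus a section of $\sigma$ is not in the paper. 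It does not decouple the monad structure, since the partition of $s\bind u$ depends on $f$ through $\tau$. It does, however, make Proposition~\ref{prop:global-sections-monad-is-haffun} nearly immediate: the affine sections are exactly those whose section part is $\iota$.
\end{itemize}

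\textbf{Two details to correct.}
\begin{itemize}
\item \emph{The concrete description of $\bind$.} The continuation must be chosen by the label of $s$ at the source $c$, and then evaluated at $\tau f(c)$. So the second component is obtained by gluing the composites $g_a\circ\tau\circ f$ over the pieces of the partition of $s$. That is, it is the section part of $[u(a)]_a\circ(A\cdot\tau)\circ s$, exactly as in the paper's $\langle \pi, u\circ(A\cdot\tau)\rangle$. If you first glue the $g_a$ into a single section $g$ and then form $g\circ\tau\circ f$, as written, the label is read at $\tau f(c)$ instead. This already breaks $s\bind\return=s$ on the indiscrete category with two objects.
\item \emph{The unit laws are attached to the wrong monad laws.} $\return a\bind u=u(a)$ amounts to $g\circ\id_{\sigma g}=g$, which involves $\iota\sigma$. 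The law involving $\iota\tau$, namely $\id_{\tau f}\circ f=f$, is the one needed for $s\bind\return=s$.
\end{itemize}
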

\begin{proofsketch}
A straightforward diagram chase reveals that unitality and
associativity of the monad structure is inherited from the
unitality and associativity of $\nm{LC}$.
It is a basic theorem of category theory that the functorial action
of a right adjoint functor is determined by the adjunction, so functoriality
is automatically obtained when we prove the
adjunction of theorem \ref{thm:stone-adjunction-loc}. Since we
don't use the functorial action of $\Gamma$, we leave
it as an exercise to define the action.
\end{proofsketch}

Any computation $t \in TA$ defines a global section $\eta(t) \colon
\nm{LB}_0T \to A \cdot \nm{LB}_1T$ of the behaviour category,
defined by $\eta(t)\inv \colon \braket{a \mapsto w} \mapsto [t
\mapsto a] \wedge w(t \rskip \return)$ on generating opens. This
defines the unit map of an adjunction between $\nm{LB}$ and
$\Gamma$, which brings us to the main adjunction of this paper.

\begin{theorem} \label{thm:stone-adjunction-loc}
$
\begin{tikzcd}[ampersand replacement=\&,cramped]
  {\nm{LB} : \Mnd_r(\Set)} \&\& {\LocRetro^\op : \Gamma}
  \arrow[""{name=0, anchor=center, inner sep=0}, shift left=2,
  from=1-1, to=1-3]
  \arrow[""{name=1, anchor=center, inner sep=0}, shift left=2,
  from=1-3, to=1-1]
  \arrow["\dashv"{anchor=center, rotate=-90}, draw=none, from=0, to=1]
\end{tikzcd}$.
\end{theorem}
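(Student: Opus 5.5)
The plan is to establish the adjunction via its unit: for every ranked monad $T$, the monad map $\eta \colon T \to \Gamma\nm{LB}T$, $t \mapsto \eta(t)$, is universal from $T$ to $\Gamma$. Concretely, for every localic category $\nm{LC}$ and every monad map $\varphi \colon T \to \Gamma\nm{LC}$, we need a unique internal retrofunctor $F \colon \nm{LC} \rightsquigarrow \nm{LB}T$ with $\Gamma(F) \circ \eta = \varphi$. Since $\Gamma$'s functorial action is not yet specified, we instead prove the natural bijection
\[ \Mnd_r(\Set)(T, \Gamma\nm{LC}) \;\cong\; \LocRetro(\nm{LC}, \nm{LB}T) \]
directly, and then read off $\Gamma$ on morphisms from it.

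\textbf{Step 1 (object part).} Given $\varphi$, make $\nm{LC}_0$ a localic $T$-comodel by setting $\codeno{t} := (A\cdot\tau)\circ\varphi(t) \colon \nm{LC}_0 \to A\cdot\nm{LC}_0$. The monad laws for $\varphi$ and the formulas of proposition \ref{prop:monad-of-sections} give $\codeno{\return a}=\upsilon_a$ (using $\tau\iota=\id$) and $\codeno{t\bind u}=\codeno{u}\circ\codeno{t}$ (using $\tau\mu = \tau\pi_2$). Proposition \ref{prop:terminal-localic-comodel} then yields a unique comodel map $F_0 \colon \nm{LC}_0 \to \nm{LB}_0T$. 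Explicitly, $F_0\inv[t\mapsto a] = \varphi(t)\inv\braket{a\mapsto\top}$.

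\textbf{Step 2 (morphism part).} Construct $F_1 \colon \nm{LC}_0\times_{\nm{LB}_0T}\nm{LB}_1T \to \nm{LC}_1$. By lemma \ref{lemma:pullback-with-LB1T}, the opens of the domain are functions $h \colon T1 \to \cal{O}\nm{LC}_0$ compatible with trace equivalence, so define
\[ F_1\inv(w) := \lambda m.\, \varphi(m)\inv\braket{\ast\mapsto w}, \]
i.e.\ "lift the transition $m$ using the section $\varphi(m)$." The main obstacle is showing this is well defined, namely that $m_1\sim_b m_2$ implies $F_1\inv(w)(m_1)\wedge F_0\inv b = F_1\inv(w)(m_2)\wedge F_0\inv b$.
\begin{itemize}
  \item By the definition of $\deno{-\sim-}$ it suffices to treat the generating case $m=t\bind u$, $m'=t\bind u'$ with $u(a)=u'(a)$, restricted to $[t\mapsto a]$.
  \item There, $\varphi(t\bind u)=\varphi(t)\bind\varphi\circ u$ and $\varphi(t\bind u')=\varphi(t)\bind\varphi\circ u'$.
  \item On $F_0\inv[t\mapsto a] = \varphi(t)\inv\braket{a\mapsto\top}$ both sections factor through the $a$-summand, so they agree there by the open-level formula for $\bind$ in proposition \ref{prop:monad-of-sections}.
\end{itemize}
Transitivity over chains $\sim_k$ and joins then follows formally. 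Equivalently, $F_1$ is induced on the sheaf side: $\varphi$ gives a $B_\cal{J}$-set map $F_T \to F_0^{*}(\text{sections of }\sigma_{\nm{LC}})$ respecting the congruence $\approx$ of definition \ref{defn:sheaf-of-transitions}, and lemma \ref{lemma:pointwise-trace-equivalence} is what makes the check equational.

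\textbf{Step 3 (retrofunctor axioms).} Check the axioms internally:
\begin{itemize}
  \item $\sigma F_1=\pi_1$ holds because $\varphi(m)$ is a section.
  \item $F_0\tau F_1=\tau\pi_2$ follows from $F_0$ being a comodel map, i.e.\ $\tau\inv(u)(m) = \codeno{m}\inv u$.
  \item Preservation of identities follows from $\varphi(\return)=\iota$.
  \item Preservation of composition follows from $\varphi(m\rskip n)=\varphi(m)\bind\lambda\_.\varphi(n)$ together with $\mu\inv w = \lambda m,n.\,w(m\rskip n)$.
\end{itemize}
Each is a computation on generating opens $[m|b]$ from lemma \ref{lemma:basis-of-LB1T}.

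\textbf{Step 4 (inverse map and uniqueness).} Conversely, a retrofunctor $F$ gives $\varphi := \Gamma(F)\circ\eta$, which on points sends $t$ to $c\mapsto (a, F_1(c,[t]))$ where $a$ is the output of $t$ at $F_0c$; on opens it is $\varphi(t) = (A\cdot F_1)\circ\langle \id, \eta(t)\circ F_0\rangle$. Checking that this is a monad map uses the same identities as Step 3 read backwards.

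The two assignments are mutually inverse:
\begin{itemize}
  \item $\varphi\mapsto F\mapsto\varphi$ is immediate from the formula for $F_1\inv$.
  \item $F\mapsto\varphi\mapsto F$ requires $F_0$ to be the unique comodel map (terminality) and $F_1$ to be determined by its values on generating opens $[m|b]$, which lemma \ref{lemma:basis-of-LB1T} provides.
\end{itemize}
Naturality in $T$ (using proposition \ref{prop:functoriality-of-LB}) and in $\nm{LC}$ is routine on generators. This gives the adjunction $\nm{LB}\dashv\Gamma$ with unit $\eta$.
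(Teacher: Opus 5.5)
Your proposal is correct in outline, but it is organised differently from the paper's proof.

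The paper argues via the counit. It writes down $\varepsilon \colon \nm{LC} \rightsquigarrow \nm{LB}\Gamma\nm{LC}$, with $\varepsilon_0\inv[s] = s\inv\braket{1\mapsto\top}$ and $\varepsilon_1\inv u = \lambda m.\, m\inv u$. It then shows only uniqueness: the equation $\nm{LB}\varphi\circ\varepsilon = F$ forces $\varphi(t)\inv\braket{a\mapsto w} = F_0\inv[t\mapsto a]\wedge (F_1\inv w)(t\rskip\return)$. The work there is computing the composite $\nm{LB}\varphi\circ\varepsilon$, using the pullback decomposition lemma and a one-step trace-equivalence witness argument.

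You instead build the hom-set bijection explicitly in both directions. Your map $F\mapsto\varphi$ is exactly the paper's derived formula. Your map $\varphi\mapsto F$ is what $\nm{LB}\varphi\circ\varepsilon$ turns out to be, and the case $T=\Gamma\nm{LC}$, $\varphi=\id$ recovers the paper's counit.

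Comparing the two routes:
\begin{itemize}
\item Your route makes existence explicit. You check that $\lambda m.\,\varphi(m)\inv w$ is well defined in the sense of Lemma~\ref{lemma:pullback-with-LB1T}, and that the retrofunctor axioms hold; the paper leaves this implicit, even for $\varepsilon$.
\item Your check on generating witnesses $[t\mapsto a]$ carries the same content as the paper's witness computation, but you never have to compose retrofunctors.
\item The paper's route gets the functorial action of $\Gamma$ and both naturalities for free from the universal property. You must verify naturality in $T$ by hand.
\end{itemize}

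Two points to tighten:
\begin{itemize}
\item The round trip $F\mapsto\varphi\mapsto F$, and naturality in $T$, need the decomposition of opens of the pullback, Lemma~\ref{lemma:basis-of-pullback-with-LB1T}, not Lemma~\ref{lemma:basis-of-LB1T}.
\item Showing that $F\mapsto\varphi$ lands in monad maps needs more than the retrofunctor axioms ``read backwards''. It also needs $\eta\colon T\to\Gamma\nm{LB}T$ to be a monad map, i.e.\ that composition in $\nm{LB}T$ matches $\bind$ up to trace equivalence. The paper also takes this for granted.
\end{itemize}
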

\begin{proofsketch}
The counit map $\varepsilon \colon \nm{LC} \rightsquigarrow
\nm{LB}\Gamma\nm{LC}$ is given by $\varepsilon_0\inv \colon [s]
\mapsto s\inv\braket{1 \mapsto \top}$, and $\varepsilon_1\inv
\colon u \mapsto \lambda m \in T1. m\inv u$. See
\cref{subsec:proof-of-stone-adjunction-loc} for the verification
of the adjunction.
\end{proofsketch}

We also have a functor $\bb{B} := \pt \nm{LB} \colon \Mnd_r(\Set)
\to \TopRetro^\op$, and this similarly admits a right adjoint
$\Gamma$, but $\bb{B}T$ loses too
much information about the infinitary monad $T$, as exemplified by
example \ref{ex:locale-of-injective-state}. However, by proposition
\ref{prop:finitary-implies-spatiality}, if we restrict to $T \in
\Mnd_\omega(\Set)$ then $\nm{LB}T$ is spatial and corresponds to
the topological category $\bb{B}T$. The right adjoint of
the functor $\bb{B} \colon \Mnd_\omega(\Set) \to
\TopRetro^\op$ is given by taking the monad of \emph{finitary
sections} $\Gamma_\omega \bb{C}$ for a topological category $\bb{C}$.

\begin{theorem} \label{thm:stone-adjunction-top}
\begin{tikzcd}[ampersand replacement=\&,cramped]
  {\bb{B} : \Mnd_\omega(\Set)} \&\& {\TopRetro^\op : \Gamma_\omega.}
  \arrow[""{name=0, anchor=center, inner sep=0}, shift left=2,
  from=1-1, to=1-3]
  \arrow[""{name=1, anchor=center, inner sep=0}, shift left=2,
  from=1-3, to=1-1]
  \arrow["\dashv"{anchor=center, rotate=-90}, draw=none, from=0, to=1]
\end{tikzcd}
\end{theorem}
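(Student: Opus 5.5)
The plan is to deduce Theorem \ref{thm:stone-adjunction-top} from Theorem \ref{thm:stone-adjunction-loc}. We transport along the adjunction $\cal{O} \dashv \pt \colon \Loc \to \Top$, which lifts to internal categories and retrofunctors, and then correct for size using finitary sections.

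First, $\cal{O}$ preserves finite limits only on nice objects, while $\pt$ preserves all limits. So the safe direction is the right adjoint $\pt$: it sends internal categories in $\Loc$ to internal categories in $\Top$, and internal retrofunctors to internal retrofunctors, since the latter are built from pullbacks along source maps. For a finitary monad $T$ we have $\bb{B}T = \pt\nm{LB}T$. By Proposition \ref{prop:finitary-implies-spatiality}, $\nm{LB}_0T$ and $\nm{LB}_1T$ are spatial, and $\nm{LB}_0T$ is Stone by Proposition \ref{prop:compact-and-ultraparacompact-is-stone}. Moreover the source map is a local homeomorphism (Proposition \ref{prop:corresponding-local-homeo}), so pullbacks along it are preserved by $\cal{O}$. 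Hence $\cal{O}\bb{B}T \cong \nm{LB}T$ as localic categories.

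Next, define $\Gamma_\omega\bb{C}A$ as the set of continuous sections $s \colon \bb{C}_0 \to A\cdot \bb{C}_1$ of $\sigma$ that factor through $I\cdot\bb{C}_1 \to A\cdot\bb{C}_1$ for some finite $I$. Equivalently, these are sections whose image-partition $\set{s\inv(\set{a}\times\bb{C}_1)}$ of $\bb{C}_0$ is finite, and this set is closed under the monad operations of Proposition \ref{prop:monad-of-sections}. In fact $\Gamma_\omega\bb{C}$ is the finitary coreflection $\int^{I \text{ fin}} \Gamma\bb{C}I \times A^I$ of $\Gamma\bb{C}$. For finitary $T$, maps $T \to \Gamma\bb{C}$ therefore correspond to maps $T \to \Gamma_\omega\bb{C}$. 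Here $\Gamma\bb{C}$ means sections in $\Top$, which agree with sections of $\cal{O}\bb{C}$ in $\Loc$ on spatial data.

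The chain of bijections, natural in $T$ and $\bb{C}$, is then
\[ \TopRetro(\bb{C}, \bb{B}T) \cong \TopRetro(\bb{C}, \pt\nm{LB}T) \cong \LocRetro(\cal{O}\bb{C}, \nm{LB}T) \cong \Mnd_r(T, \Gamma\cal{O}\bb{C}) \cong \Mnd_\omega(T, \Gamma_\omega\bb{C}). \]
The second step is the lifted $\cal{O} \dashv \pt$ adjunction, the third is Theorem \ref{thm:stone-adjunction-loc}, and the last is finitary coreflection together with the identification of localic sections of $\cal{O}\bb{C}$ with continuous sections of $\bb{C}$. For that identification, a locale map $\cal{O}\bb{C}_0 \to A\cdot\cal{O}\bb{C}_1$ is a continuous map of sobrifications. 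To avoid sobriety issues, I would instead phrase the last step directly: a finitary section is determined by a finite clopen partition and, on each block, a continuous section. Alternatively, I would bypass the localic route and repeat the unit/counit of Theorem \ref{thm:stone-adjunction-loc} verbatim in $\Top$. The unit is $\eta(t)(\beta) = (\beta(t), [t\rskip\return]_\beta)$, which is continuous and has finite support because $t$ is finitary. The counit sends $c$ to the behaviour $s \mapsto \pi_A s(c)$ and lifts transitions $[m]_{\beta}$ to $m(c)$.

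The main obstacle is the second bijection. $\cal{O}$ need not preserve the pullbacks $\bb{C}_0\times_{\bb{D}_0}\bb{D}_1$ needed to make $\cal{O}$ act on internal categories and retrofunctors when $\bb{C}$ is an arbitrary topological category. I would handle this by working only with retrofunctors into $\bb{B}T$. The relevant pullback is along the local homeomorphism $\sigma \colon \bb{B}_1T \to \bb{B}_0T$, and by Lemma \ref{lemma:global-sections-of-B1T-is-F_T} pullbacks along it are computed sheaf-theoretically and hence preserved by $\cal{O}$ and $\pt$. This mirrors Lemma \ref{lemma:pullback-with-LB1T}. The triangle identities then follow from those of Theorem \ref{thm:stone-adjunction-loc}.
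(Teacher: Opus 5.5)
Your last step is exactly the paper's proof. The paper composes an adjunction $\bb{B} = \pt\nm{LB} \dashv \Gamma$ between $\Mnd_r(\Set)$ and $\TopRetro^\op$, with $\Gamma\bb{C}$ the monad of continuous sections of $\sigma_{\bb{C}}$, with the coreflection $i \dashv \lan_j(-\circ j)$. This makes $\Gamma_\omega\bb{C} = \lan_j(\Gamma\bb{C}\circ j)$, which is your monad of finitely supported sections. The paper takes the ranked topological adjunction to hold ``similarly'' to Theorem \ref{thm:stone-adjunction-loc}, i.e.\ by rerunning that argument in $\Top$.

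Your main route to that adjunction, transporting Theorem \ref{thm:stone-adjunction-loc} along $\cal{O} \dashv \pt$, has a genuine gap, because the statement quantifies over \emph{all} topological categories $\bb{C}$. There are two problems.
\begin{enumerate}
\item $\cal{O}\bb{C}$ need not be an internal category in $\Loc$. A composition $\cal{O}\bb{C}_1 \times_{\cal{O}\bb{C}_0} \cal{O}\bb{C}_1 \to \cal{O}\bb{C}_1$ would require inverting the canonical comparison $\cal{O}(\bb{C}_1 \times_{\bb{C}_0} \bb{C}_1) \to \cal{O}\bb{C}_1 \times_{\cal{O}\bb{C}_0} \cal{O}\bb{C}_1$. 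That is possible when, say, $\sigma_{\bb{C}}$ is a local homeomorphism, but not in general, and then $\LocRetro(\cal{O}\bb{C}, \nm{LB}T)$ and $\Gamma\cal{O}\bb{C}$ are simply undefined. Your proposed fix only controls the pullback $\bb{C}_0 \times_{\bb{B}_0T} \bb{B}_1T$ in the retrofunctor data, not this one.
\item Even when $\cal{O}\bb{C}$ is a localic category, retrofunctors are mixed-variance. The component $F_1$ maps \emph{into} $\bb{C}_1$, and a locale map into $\cal{O}\bb{C}_1$ is a continuous map into the sobrification $\pt\cal{O}\bb{C}_1$ rather than into $\bb{C}_1$. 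So $\cal{O} \dashv \pt$ does not transpose that component. The same problem affects your identification $\Gamma\cal{O}\bb{C} \cong \Gamma\bb{C}$. Without sobriety hypotheses on $\bb{C}$, neither your second bijection nor your last one is justified.
\end{enumerate}

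The repair is to promote your fallback, the unit/counit you describe, to the proof, working pointwise in $\Top$.
\begin{itemize}
\item Given a retrofunctor $F \colon \bb{C} \rightsquigarrow \bb{B}T$, put $\varphi(t)(c) := (F_0(c)(t), F_1(c, [t \rskip \return]_{F_0c}))$. This is continuous because each $F_0\inv[t \mapsto a]$ is clopen and $c \mapsto (c, [t \rskip \return]_{F_0c})$ is continuous into the pullback.
\item $\varphi \colon T \to \Gamma\bb{C}$ is a monad map by admissibility of $F_0(c)$, the codomain condition on lifts, and preservation of identities and composition by $F_1$.
\item Conversely, given $\varphi$, set $F_0(c) := (t \mapsto \pi_A\varphi(t)(c))$ and $F_1(c, [m]_{F_0c}) := \varphi(m)(c)$ for $m \in T1$. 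The latter is well defined on $\sim_\beta$-classes because $\varphi$ is a monad map.
\item $F_1$ is continuous because the pullback $\bb{C}_0 \times_{\bb{B}_0T} \bb{B}_1T$ is covered by the opens $\pi_2\inv[m|\return 1]$, on each of which $F_1 = \varphi(m) \circ \pi_1$.
\end{itemize}
This gives $\pt\nm{LB} \dashv \Gamma$ on $\Mnd_r(\Set)$ with no spatiality or sobriety hypotheses. Your coreflection step then finishes exactly as in the paper.
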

\begin{proof}
The inclusion $i \colon \Mnd_\omega(\Set) \hookrightarrow
\Mnd_r(\Set)$ has a right adjoint given by $\lan_j(- \circ j)$
where $j \colon \Set_\omega  \hookrightarrow \Set$ is the usual
full inclusion of the category of finite sets (as follows from
relative monad theory \cite{Altenkirch2010-ub}). Then we have the
desired adjunction by composing $\bb{B} \dashv \Gamma$ with $i
\dashv \lan_j(- \circ j)$, noting that $\Gamma_\omega :=
\lan_j(\Gamma(-) \circ j)$.
\end{proof}

\section{The Stone Duality for Hyperaffine-Unary Monads}
\label{sec:stone-duality-for-haffun-monads}

This final section is devoted to proving that adjunctions
\ref{thm:stone-adjunction-loc} and \ref{thm:stone-adjunction-top} are
idempotent, and to characterize
their fixed points. On the monad side, the fixed points correspond to
those monads with a Cartesian-closed category of Eilenberg-Moore algebras.
These were first syntactically characterized by
Johnstone \cite{Johnstone1990-dc}, but the first-named author
provided an improved characterization \cite{Garner2024-yc}
as those monads
which admit a \emph{hyperaffine-unary decomposition}. On the side
of localic categories, the fixed points are the \emph{ample localic
categories}, i.e., whose source maps are local homeomorphisms and
whose locales of objects are ultraparacompact.

In fact, this equivalence between hyperaffine-unary monads and
ample localic categories is originally due to the first-named
author \cite{Garner2025-hr}. In addition to filling in details
about the equivalence,
our contribution is to envelope the equivalence in an adjunction,
which yields a
process of \emph{hyperaffine-unary completion} for monads on one
hand, and a process of \emph{amplification} for localic categories
on the other.

\subsection{Hyperaffine-Unary Monads}

What exactly is the difference between $T$ and $\Gamma\nm{LB}T$?
Because of the unit map, all computations in $T$ live inside
$\Gamma\nm{LB}T$. The answer is that $T$ adds additional operations
$\overline{t}$ which predict the output of $t$ without performing
the side effect of $t$. Computationally, this can be thought of as
performing $t$ and then rolling the state back, which we will refer
to as \emph{scrying}. In the case where $T$ encodes computations
consuming from a stream\footnote{A particularly potent analogy is to think of
the environment as an (infinite) deck of playing cards, and of the
program as the player, in which case a scry allows the player to
look at the top $n \in \bb{N}$ cards of their deck before putting
it back in the same order. Players of a certain popular trading card game
will recognize this, but note that ``scrying'' there allows you to
reorder the cards (and to put them at the bottom of the deck).}, the
newly added scrying operations look ahead into the stream without
consuming from it, as the following example demonstrates.

\begin{example}(Binary Input)
Let $T$ be the monad of binary input, whose terminal topological comodel
$\bb{B}_0T$ is the cantor space of example \ref{ex:cantor-space}. By
continuity and
compactness of $\bb{B}_0T$, any global section $s \in
\Gamma\bb{B}T(A)$ is therefore described (non-uniquely) by a pair
$(B, |s| \colon B \to \bb{N} \times A)$ where $B$ is a finite set
of finite strings $B \subseteq 2^{< \omega}$ that jointly cover
all infinite streams, and $|s|$ assigns to each element of $B$ a
pair $(n, a)$ consisting of the number $n$ of digits to consume
from the stream, and the output $a$. As an example, the binary tree
$t = b(a_0, b(b(a_1, a_2), a_3)) \in TA$, induces a section
$\eta(t)$ which is described by the assignments $\set{0 \mapsto
(1, a_0); 100 \mapsto (3,a_1); 101 \mapsto (3, a_2); 11 \mapsto (2, a_3)}$.

In general, an assignment $\frk{b} \mapsto (n, a)$ for a section
of the form $\eta(t)$ must satisfy $n \geq \nm{length}(\frk{b})$.
That is, to use information about the first $k =
\nm{length}(\frk{b})$ digits of the stream you must consume at
least $k$ digits. However, in general sections do not need to
respect this: the assignments $\set{0 \mapsto (0, a_0); 10
\mapsto (1, a_1); 11 \mapsto (1, a_2)}$ describe a perfectly
acceptable section $s$. We can think of $s$ as \emph{looking
ahead at} or \emph{scrying} the first two digits of the stream,
before deciding what to do. Indeed, for any section $s$, we have
a corresponding $\overline{s} \in \Gamma\bb{B}T(A)$ which outputs
the same values as $s$, but only makes identity transitions. Then
$s$ factors as $s = \overline{s} \bind \lambda a. s \rskip \return a$.
\end{example}

It is easy to see in general that monads of the form
$\Gamma\nm{LC}$ always have this factorization property, since a
section $s$ always admits a cousin $\overline{s}$ which sends
objects to the same output, but replaces the morphism by identity
morphisms---the \emph{scry} corresponding to $s$.
Monads satisfying this factorization property are called
\emph{hyperaffine-unary theories} in \cite{Garner2024-yc}, and they suffice
to characterize the fixed points of adjunction \ref{thm:stone-adjunction-loc}.

\begin{definition} (Hyperaffine-unary)
Let $T$ be a monad. A computation $h \in TA$ is \emph{hyperaffine} if
\begin{equation}
  h \rskip \return a = \return a \qquad\text{ and }\qquad h \bind
  \lambda a_1. h \bind \lambda a_2. \return (a_1, a_2) = h \bind
  \lambda a. \return (a, a). \label{eqn:hyperaffine-defn}
\end{equation}
The monad $T$ is \emph{hyperaffine-unary} if for every
computation $t \in TA$, there is a unique hyperaffine
$\overline{t} \in TA$ such that $t = \overline{t} \bind \lambda
a. (t \rskip \return a)$. Any hyperaffine-unary monad admits a
submonad $H$ of hyperaffine operations (\cite[Proposition
6.1]{Garner2024-yc}).
\end{definition}

\begin{proposition} \label{prop:global-sections-monad-is-haffun}
Let $\nm{LC}$ be a localic category. Then $\Gamma\nm{LC}$ is
hyperaffine-unary.
\end{proposition}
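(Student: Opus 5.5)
Given $s \in \Gamma\nm{LC}(A)$, i.e.\ a locale map $s \colon \nm{LC}_0 \to A \cdot \nm{LC}_1$ with $\sigma \circ \pi \circ s = \id$, I define the scry $\overline{s} := (A \cdot \iota)\circ(A\cdot\sigma)\circ s \colon \nm{LC}_0 \to A \cdot \nm{LC}_1$. Here $A\cdot\sigma$ is the copower of the source map. On opens, $s\inv\braket{a \mapsto -}$ decomposes $\top$ into the disjoint family $s_a := s\inv\braket{a\mapsto\top}$, and $\overline{s}\inv\braket{a \mapsto w} = s\inv\braket{a \mapsto \sigma\inv\iota\inv w}$. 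Since $\sigma\iota = \id$, $\overline s$ is again a section, so $\overline s \in \Gamma\nm{LC}(A)$. In terms of points, $\overline{s}(c) = (a,\id_c)$ where $s(c) = (a,f)$.

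\emph{Step 1: the factorization.} I check $s = \overline{s} \bind \lambda a.(s \rskip \return a)$ by unwinding the composite in \cref{prop:monad-of-sections}. The composite first applies $\overline s$, landing in summand $a$ with an identity arrow whose target is $c$ again. It then applies $s\rskip\return a$, which in summand $a'$ gives $(a, f)$ when $s(c)=(a',f)$, and finally composes with $\mu$. On the region $s_a$ the two summands agree, and the unit law $\mu\circ(\id\times_{\nm{LC}_0}\iota\sigma)$-type identity collapses the identity arrow. Pointfree, I verify this summand by summand: restricted to the open sublocale $s_a$, both sides equal $\upsilon_a \circ \pi \circ s|_{s_a}$. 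Here I use that $A\cdot\nm{LC}_0$ is a disjoint sum, so maps out of $\nm{LC}_0$ are determined by their restrictions to the complemented partition $\{s_a\}$.

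\emph{Step 2: $\overline s$ is hyperaffine.} Any section of the form $\overline{p} = (A\cdot\iota)\circ p'$ with $p' \colon \nm{LC}_0 \to A\cdot\nm{LC}_0$ a partition, i.e.\ $\pi\circ p' = \id$, is hyperaffine. First, $\overline p \rskip \return a = \return a$, because $\mu(\id,\id)=\id$ on each summand and the summands are glued over a partition of $\top$. Second, $\overline p \bind \lambda a_1.\overline p\bind\lambda a_2.\return(a_1,a_2)$ lands in the $A\times A$ summand, and on $p'^{-1}(a_1)\wedge p'^{-1}(a_2)$ it is nonzero only when $a_1=a_2$, since the target of an identity is the same point, so the second query reads the same region. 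Both sides are therefore identities on the diagonal summands indexed by the regions $p'^{-1}(a)$.

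\emph{Step 3: uniqueness.} This is where I expect the real work, since hyperaffine elements are defined equationally rather than geometrically. Let $h \in \Gamma\nm{LC}(A)$ be hyperaffine. I plan to show $h = (A\cdot\iota)\circ(A\cdot\sigma)\circ h$, i.e.\ that $h$ only makes identity transitions. The first equation gives $\mu(f, \ldots)$ information only up to $\rskip\return a$; the useful one is $h\rskip\return a=\return a$ with $A$ replaced by $1$, so $h\rskip\return{*} = \return{*}$ says the underlying unary section $\pi\circ h$ equals $\iota$. From this, $h = (A\cdot\iota)\circ(A\cdot\sigma)\circ h$ follows, because a point of $A\cdot\nm{LC}_1$ over $c$ is determined by its $A$-label and its $\nm{LC}_1$-component, and pointfree by the copower presentation $\bigvee_a\braket{a\mapsto u_a}$. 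Then if $s = h\bind\lambda a.(s\rskip\return a)$, I compare labels: the output of $h\bind u$ lies in the image of $u$, and the label of $s\rskip\return a$ is $a$. Hence $h\inv\braket{a\mapsto\top} = s_a$, which forces $h = \overline{s}$. The delicate point is making this "label" argument pointfree. I would express it via the frame maps $s\inv\braket{a\mapsto\top}$ and use \eqref{eqn:LB-bind}-style distributivity of $(h\bind u)\inv$ from the displayed formula in \cref{prop:monad-of-sections}.
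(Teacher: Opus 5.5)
Your proof is correct and follows essentially the paper's route. The paper also characterizes affine sections by $\pi\circ h=\iota$ (in its notation $\bigvee_a h\inv\braket{a\mapsto w}=\iota\inv w$), observes that the second hyperaffine equation is then automatic, and uses the factorization to force $\overline{s}\inv\braket{a\mapsto w}=s\inv\braket{a\mapsto\top}\wedge\iota\inv w$, which equals your composite $(A\cdot\iota)\circ(A\cdot\sigma)\circ s$. Presenting $\overline{s}$ as that composite makes its well-definedness immediate, and your Steps 1--2 supply the existence check that the paper leaves implicit.
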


Hyperaffine-unary monads admit a particularly nice presentation of
the localic behaviour category, which greatly aids us in proving
the characterization of the fixed points.

\begin{lemma} \label{lemma:H2-of-hyperaffine-unary-monad}
Let $T$ be a hyperaffine-unary monad with submonad of hyperaffines
$H$. Then $\cal{O}(\nm{LB}_0T)$ is generated by $H2_\cal{J}$
where $H2$ admits a Boolean algebra structure and $\cal{J}$ is a
strongly zero-dimensional topology defined by $\set{P^{(h)} | h
\in HA}$ with $P^{(h)} = \set{[h \mapsto a] | a \in A}$. Here, we
abuse notation by writing $[h \mapsto a]$ for $h \bind \lambda
a'. \delta_a(a') \in H2$. Moreover, the map $\delta \colon T1 \to
F_T$ is an isomorphism, with $T1$ admitting a $H2_{\cal{J}}$-set
structure whose $P^{(h)}$-ary operations are defined by
$P^{(h)}(\lambda a. m_a) := h \bind \lambda a. m_a$.
\end{lemma}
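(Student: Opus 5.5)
Three things need proving: $H2$ is a Boolean algebra whose complemented opens $[h]$ generate $\cal{O}(\nm{LB}_0T)$; the partitions $P^{(h)}$ form a strongly zero-dimensional topology presenting that locale; and $\delta\colon T1\to F_T$ is an isomorphism. I would treat these in that order.

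\textbf{Generators and the Boolean algebra.} By the hyperaffine-unary decomposition, any $b\in T2$ satisfies $b=\overline b\bind\lambda i.\,(b\rskip\return i)$. Axiom \eqref{eqn:LB-bind}, together with \eqref{eqn:LB-return}, then gives $[b]=[b\mapsto 1]=[\overline b\mapsto 1]$. More generally $[t\mapsto a]=[\overline t\mapsto a]$ for every $t\in TA$. Hence the opens $[h]$ with $h\in H2$ generate $\cal{O}(\nm{LB}_0T)$, and each of them is complemented by the first two identities of Proposition~\ref{prop:alt-axioms}.

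Next I would put the Boolean structure on $H2$ directly:
\begin{itemize}
\item $\top=\return 1$ and $\bot=\return 0$;
\item $\neg h=h\bind(\return\circ\neg)$;
\item $h\wedge k=h\bind\lambda i.\,k\bind\lambda j.\,\return(i\wedge j)$.
\end{itemize}
Hyperaffinity makes $\wedge$ idempotent, because of the duplication equation in \eqref{eqn:hyperaffine-defn}. Commutativity of hyperaffines, as in \cite{Garner2024-yc}, makes $\wedge$ commutative. So the Boolean algebra axioms reduce to the standard fact that hyperaffine operations commute and are copyable. The map $h\mapsto[h]$ is then a Boolean homomorphism by \eqref{eqn:LB-bind}.

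\textbf{The topology.} I would check axioms (i)–(iii) for $\cal{J}=\{P^{(h)}\}$:
\begin{itemize}
\item finite partitions are $P^{(h)}$ for $h\in H\mathbf{n}$, built by nesting binary hyperaffines;
\item composition $P;Q$ comes from $h\bind\lambda a.\,k_a$;
\item coarsening along $f$ comes from $h\bind(\return\circ f)$.
\end{itemize}
To show that $H2_{\cal J}$ presents $\nm{LB}_0T$, I would compare presentations. The relations of Definition~\ref{defn:behaviour-locale}, restricted to hyperaffine generators, become exactly the Boolean relations plus the covering relations $\bigvee_a[h\mapsto a]=\top$. Conversely, every relation for a general $t$ reduces to one for $\overline t$ via the decomposition. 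Injectivity of $H2\to\frk B(\nm{LB}_0T)$ would follow from Van Name's duality (Theorem~\ref{thm:ultraparacompact-duality}), once the frame of $\cal J$-ideals is shown to satisfy the $\nm{LB}_0$ axioms. That is done by mapping $[t]\mapsto{\downarrow}\overline t$.

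\textbf{The isomorphism $\delta$.} I would define the $H2_{\cal J}$-set structure on $T1$ by $P^{(h)}(\lambda a.\,m_a)=h\bind\lambda a.\,m_a$. The axioms \eqref{eqn:BJ-set-axioms} follow from the hyperaffine equations; for instance, $b(P(x),x_b)=x_b$ uses copying $h$.

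By freeness there is a homomorphism $T1[B]^{\cal J}\to T1$. It respects the congruence $\approx$ because $P^{(t)}=P^{(\overline t)}$ and $t\bind u=\overline t\bind\lambda a.\,t\rskip u(a)$. This yields $\epsilon\colon F_T\to T1$ with $\epsilon\delta=\id$.

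For surjectivity of $\delta$, every element of the free algebra is $\approx$ to some $P^{(h)}(\lambda a.\,m_a)$. Applying the generating relation to $h\in HA$ and $u=\lambda a.\,m_a$ shows this equals $\delta(h\bind\lambda a.\,m_a)$. So $\delta$ is a bijection.

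\textbf{Main obstacle.} I expect the hardest step to be the faithful identification of the frame presentation: checking that the only relations on the $[h]$ are the Boolean and $\cal J$-covering ones, so that $H2\cong\frk B(\nm{LB}_0T)$ rather than a proper quotient. I would attack it through the terminal-comodel universal property (Proposition~\ref{prop:terminal-localic-comodel}). The idea is to build a localic $T$-comodel on the locale $L$ presented by $H2_{\cal J}$, with $\codeno{t}^{-1}\braket{a\mapsto[k]}=[\overline t\mapsto a]\wedge[\overline{t\rskip k}]$. The resulting unique map $L\to\nm{LB}_0T$ and the evident map back then give inverse isomorphisms.
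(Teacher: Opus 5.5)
Your proposal is correct and follows essentially the same route as the paper. The paper takes the Grothendieck Boolean algebra structure on $H2$ from \cite{Garner2024-yc}, as you largely do. It then matches the generators $[t]$ with $\overline{t}$ and notes that the only $\nm{LB}_0$ relations not already holding in $H2$ are the $\cal J$-covers. Finally it inverts $\delta$ by $x\mapsto h\bind\lambda b.\,x^{-1}b$, which is your $\epsilon$, and checks the other composite via the generating relation at $t=h$ together with $h\rskip m=m$.

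Your version is more careful than the paper's in two places. You verify that $\epsilon$ respects $\approx$, and you give an explicit faithfulness argument through ${\downarrow}\overline{t}$ into the $\cal J$-ideal frame. The computation that argument needs is $\overline{t\bind u}=\overline{t}\bind\lambda a.\,\overline{t\rskip u(a)}$, which relies on hyperaffines commuting.
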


\begin{proposition} \label{prop:left-fixpoints-are-haffun}
A monad $T$ is hyperaffine-unary iff the unit map $\eta_T
\colon T \to \Gamma\nm{LB}T$ of adjunction
\ref{thm:stone-adjunction-loc} is an isomorphism.
\end{proposition}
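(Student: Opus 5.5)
The ``only if'' direction is the substantive one; the ``if'' direction is immediate. If $\eta_T$ is an isomorphism, then $T \cong \Gamma\nm{LB}T$ as monads. By proposition \ref{prop:global-sections-monad-is-haffun}, $\Gamma\nm{LB}T$ is hyperaffine-unary, and this property is defined purely in terms of $\bind$ and $\return$, so it transfers along monad isomorphisms.

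For the converse, let $T$ be hyperaffine-unary with submonad of hyperaffines $H$. I would first show that $\eta_T$ is injective, then that it is surjective.

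For injectivity, I would decompose a computation $t \in TA$ as $t = \overline{t} \bind \lambda a.(t \rskip \return a)$. This reduces $t$ to two pieces of data: its hyperaffine part $\overline{t} \in HA$, and its unary part $t \rskip \return \in T1$.
\begin{itemize}
\item The section $\eta(t)$ determines the partition $\set{[t \mapsto a]}_a$ of $\nm{LB}_0T$, since $\eta(t)\inv\braket{a \mapsto \top} = [t \mapsto a]$. By lemma \ref{lemma:H2-of-hyperaffine-unary-monad}, $[t \mapsto a] = [\overline{t} \mapsto a]$ lies in the Boolean algebra $H2$, and $\overline{t}$ is recovered as the $P^{(\overline t)}$-amalgam of the $\return a$. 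So $\eta(t)$ determines $\overline{t}$.
\item The composite of $\eta(t)$ with the codiagonal is the section $\eta(t \rskip \return)$. This corresponds to the image of $t \rskip \return$ under $\delta \colon T1 \to F_T$, and $\delta$ is an isomorphism by lemma \ref{lemma:H2-of-hyperaffine-unary-monad}. So $\eta(t)$ determines $t \rskip \return$.
\end{itemize}
Hence $\eta(t) = \eta(t')$ forces $\overline t = \overline{t'}$ and $t \rskip \return = t' \rskip \return$.

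One subtlety remains: we need $t \rskip \return a = t' \rskip \return a$ for each $a$, not merely equality of the unary parts. I would handle this by working over the region $[\overline t \mapsto a]$ via the identity $t = \overline t \bind \lambda a. t \rskip \return a$, which is equivalent to $t = P^{(\overline t)}(\lambda a.\, t\rskip\return a)$ in the $H2_{\cal J}$-set $T1$ after composing with $\return a$.

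For surjectivity, take a section $s \in \Gamma\nm{LB}T(A)$.
\begin{enumerate}
\item Its preimages $s\inv\braket{a \mapsto \top}$ form an extended partition of $\nm{LB}_0T$ by complemented opens. By lemma \ref{lemma:H2-of-hyperaffine-unary-monad} and ultraparacompactness (proposition \ref{prop:LB0T-ultraparacompact}), every partition of $\nm{LB}_0T$ lies in $\cal{J}$ and is therefore $P^{(h)}$ for some $h \in HA$. I would need to check this carefully, since it depends on the Grothendieck Boolean algebra of $\nm{LB}_0T$ being exactly $H2_{\cal J}$.
\item On each piece, the component of $s$ is a local section of $\sigma$. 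It extends to a global section, which is an element $m_a \in F_T \cong T1$.
\item The candidate preimage is then $t := h \bind \lambda a. m_a \rskip \return a$. I would verify $\eta(t) = s$ on generating opens $\braket{a \mapsto w}$, using \eqref{eqn:LB-bind} together with hyperaffineness of $h$, which gives $\partial_h = \id$, so the target of $h$'s transition is trivial.
\end{enumerate}

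Finally, $\eta_T$ is a monad map, being the unit of the adjunction. A bijective monad map is an isomorphism, which completes the argument.

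The main obstacle I expect is step 1 of surjectivity: showing that every partition of $\nm{LB}_0T$ by complemented opens comes from a single hyperaffine $h$, i.e.\ that the generated topology $\cal{J}$ is already closed under the refinement and coarsening operations (ii) and (iii). I would attack this by composing and mapping hyperaffines, using that $H$ is a submonad: $h \bind \lambda a. h_a$ realizes $P;Q$, and $h \bind \lambda a.\return f(a)$ realizes $f\inv P$.
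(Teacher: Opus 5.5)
Your proposal is correct and follows essentially the same route as the paper. The converse transports the hyperaffine-unary factorization of $\Gamma\nm{LB}T$ along $\eta_T$, and the forward direction uses lemma \ref{lemma:H2-of-hyperaffine-unary-monad} in two places: for surjectivity, to realize the partition $\{s\inv\braket{a \mapsto \top}\}_a$ of a section by a hyperaffine; for injectivity, to show that $\eta(t)$ determines $\overline{t}$ through its partition (which the paper checks by a short hyperaffine computation) and determines $t \rskip \return$ through $F_T \cong T1$.

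The ``subtlety'' you raise in the injectivity argument is vacuous. Since $t \rskip \return a = (t \rskip \return) \rskip \return a$, the term $t = \overline{t} \bind \lambda a.\,(t \rskip \return) \rskip \return a$ is already determined by the pair $(\overline{t},\, t \rskip \return)$.
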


\subsection{Ample Localic Categories and Stone Duality}

On the other side of the adjunction, what is the relationship
between a localic category $\nm{LC}$ and the behaviour category
$\nm{LB}\Gamma\nm{LC}$? Well, $\Gamma$ only considers the
partitioning sections of $\nm{LC}$, so is only sensitive to the
\emph{ultraparacompact quotient} of $\nm{LC}_0$, i.e., whose frame
of opens is the ultraparacompact frame generated by taking the
zero-dimensional topology of partitions on the Boolean algebra
$\frk{B}(\nm{LC}_0)$. Moreover, $\nm{LB}$ then reconstructs the
locale of morphisms from only the sections over this
ultraparacompact quotient. So this reconstruction is perfect if in
the first place $\nm{LC}_0$ is ultraparacompact and the source map
is a local homeomorphism. These are called \emph{ample localic
categories} \cite{Garner2025-hr}.

\begin{definition} \label{defn:ample-category}
A localic category $\nm{LC}$ is \emph{ample} if
$\sigma_{\nm{LC}}$ is a local homeomorphism and $\nm{LC}_0$ is
ultraparacompact. A topological category $\bb{C}$ is \emph{ample}
if $\sigma_{\bb{C}}$ is a local homeomorphism and $\bb{C}_0$ is a
Stone space. Write $\AmpLocRetro$ (resp. $\AmpTopRetro$) for the full
subcategory of $\LocRetro$ (resp. $\TopRetro$) containing the ample
localic (resp. topological) categories.
\end{definition}

\begin{proposition} \label{prop:right-fixpoints-are-ample}
A localic category $\nm{LC}$ is ample iff the counit map
$\varepsilon_{\nm{LC}} \colon \nm{LC} \rightsquigarrow
\nm{LB}\Gamma\nm{LC}$ is an isomorphism.
\end{proposition}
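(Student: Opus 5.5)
The ``if'' direction is the short one. Since $\nm{LB}_0T$ is ultraparacompact for every monad $T$ (proposition \ref{prop:LB0T-ultraparacompact}), the locale of objects $\nm{LB}_0\Gamma\nm{LC}$ is ultraparacompact. Its source map is $\sigma \colon E(F_{\Gamma\nm{LC}}) \to \nm{LB}_0\Gamma\nm{LC}$, which is a local homeomorphism by proposition \ref{prop:corresponding-local-homeo}. So $\nm{LB}\Gamma\nm{LC}$ is always ample. Ampleness is invariant under isomorphism of internal categories, so if $\varepsilon_{\nm{LC}}$ is invertible then $\nm{LC}$ is ample. The one point to check is that an isomorphism in $\LocRetro$ gives isomorphisms of object and morphism locales compatible with the source maps. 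This holds because an invertible retrofunctor has invertible $\varepsilon_0$, and its $\varepsilon_1$ is then an isomorphism $\nm{LC}_1 \cong \nm{LB}_1\Gamma\nm{LC}$ over the objects.

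For the ``only if'' direction, assume $\nm{LC}$ is ample and show separately that $\varepsilon_0$ and $\varepsilon_1$ are isomorphisms.

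\textbf{Objects.} Write $T = \Gamma\nm{LC}$, which is hyperaffine-unary by proposition \ref{prop:global-sections-monad-is-haffun}. By lemma \ref{lemma:H2-of-hyperaffine-unary-monad}, $\cal{O}(\nm{LB}_0T)$ is the ultraparacompact locale presented by the Grothendieck Boolean algebra $H2_{\cal{J}}$. The hyperaffine elements of $\Gamma\nm{LC}(A)$ are the sections factoring through $\iota$, namely the $\overline{s}$. These correspond exactly to $A$-indexed partitions of $\nm{LC}_0$ by complemented opens: a hyperaffine $h$ corresponds to $\set{h\inv\braket{a\mapsto \top}}_a$. Hence $H2 \cong \frk{B}(\nm{LC}_0)$, and $\cal{J}$ is the topology of all partitions of $\nm{LC}_0$. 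By theorem \ref{thm:ultraparacompact-duality}, since $\nm{LC}_0$ is ultraparacompact it is the locale presented by $\frk{B}(\nm{LC}_0)$ with this partition topology. The map $\varepsilon_0\inv \colon [s] \mapsto s\inv\braket{1\mapsto\top}$ is exactly this identification, so $\varepsilon_0$ is an isomorphism.

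\textbf{Morphisms.} Both $\sigma_{\nm{LC}}$ and the source of $\nm{LB}T$ are local homeomorphisms over the same base, now identified via $\varepsilon_0$. It therefore suffices to compare the corresponding sheaves as $B_{\cal{J}}$-sets, and, since both have global sections, to compare their sets of global sections. The global sections of $\sigma_{\nm{LC}}$ form $\Gamma\nm{LC}(1) = T1$. By lemma \ref{lemma:H2-of-hyperaffine-unary-monad}, $\delta \colon T1 \to F_T$ is an isomorphism, and the $B_{\cal{J}}$-operations on $T1$ given there, $P^{(h)}(\lambda a.m_a) = h \bind \lambda a. m_a$, are precisely amalgamation of sections along the partition $P^{(h)}$. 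So $F_T$ is isomorphic to the sheaf of sections of $\sigma_{\nm{LC}}$. Proposition \ref{prop:corresponding-local-homeo} then gives $\nm{LC}_1 \cong E(F_T) = \nm{LB}_1T$ over $\nm{LC}_0$.

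The remaining step is to check that this isomorphism is the map $\varepsilon_1\inv \colon u \mapsto \lambda m.\, m\inv u$. That map sends an open of $\nm{LC}_1$ to the family of its pullbacks along sections, which is the standard sheaf-to-étale-space comparison, so the check should be direct. Since $\varepsilon$ is already a retrofunctor, inverting its components yields an inverse retrofunctor with no further verification.

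I expect the main obstacle to be the identification of the $B_{\cal{J}}$-set structures, specifically matching the equivalences $\equiv_b$ on $T1$ with equality of sections restricted to $b$. Lemma \ref{lemma:pointwise-trace-equivalence} gives $m\equiv_b n \iff m\sim_b n$. For the monad $T = \Gamma\nm{LC}$, I would show directly from the definition of $\deno{m\sim n}$ that $\deno{m\sim n}$ is the largest complemented open on which the sections $m$ and $n$ agree. The key input here is that a local homeomorphism's sections agreeing on an open is detected by a partition refinement, and ultraparacompactness of $\nm{LC}_0$ supplies that refinement.
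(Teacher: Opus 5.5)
Your proposal is correct and is essentially the paper's argument. Ampleness transfers along $\varepsilon$ because $\nm{LB}T$ is always ample. Conversely, $\varepsilon_0$ is invertible because $H2\cong\frk{B}(\nm{LC}_0)$ with $\cal{J}$ the topology of all partitions (the paper writes the inverse explicitly as $b\mapsto[b^+]$), and $\varepsilon_1$ is the comparison between $\sigma_{\nm{LC}}$ and the \'etale space of $\Gamma\nm{LC}1\cong F_T$. Your remark that a retrofunctor with invertible components is already invertible replaces the paper's detour through an internal functor $\mathcal{E}$ and its lemma equating retrofunctorial with functorial isomorphisms.

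One small correction: the locus where two sections agree need not be complemented, because the source map may be non-Hausdorff. So the fact to prove is that $\deno{m\sim n}$ is the open agreement locus, i.e.\ $b\le\deno{m\sim n}$ iff $m$ and $n$ agree on $b$, which is all your argument actually uses.
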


The combination of propositions
\ref{prop:left-fixpoints-are-haffun} and
\ref{prop:right-fixpoints-are-ample} allow us to derive the titular
Stone duality for monads.

\begin{theorem} \label{thm:stone-duality}
The adjunction of theorem \ref{thm:stone-adjunction-loc} is
idempotent and its fixed points are the equivalent categories
$\nm{HUMnd_r} \simeq \AmpLocRetro$. Furthermore, this equivalence
restricts to $\nm{HUMnd_\omega \simeq \AmpTopRetro}$.
\end{theorem}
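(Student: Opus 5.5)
The plan is to use the standard fact that an adjunction $L \dashv R$ is idempotent iff $RL\eta = \eta RL$-type conditions hold, or more conveniently iff $R\varepsilon$ (equivalently $\eta R$) is invertible. The adjunction then restricts to an equivalence between the full subcategories of objects at which the unit, resp.\ counit, is invertible. We will show that for every localic category $\nm{LC}$, the unit $\eta_{\Gamma\nm{LC}} \colon \Gamma\nm{LC} \to \Gamma\nm{LB}\Gamma\nm{LC}$ is an isomorphism. This is immediate from the results already stated: by proposition \ref{prop:global-sections-monad-is-haffun}, $\Gamma\nm{LC}$ is hyperaffine-unary, and by proposition \ref{prop:left-fixpoints-are-haffun} the unit at any hyperaffine-unary monad is invertible. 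Hence $\eta\Gamma$ is invertible, so the adjunction is idempotent. Consequently $\varepsilon\nm{LB}$ is also invertible, which independently shows that each $\nm{LB}T$ is ample via proposition \ref{prop:right-fixpoints-are-ample}.

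For idempotent adjunctions the fixed points on the left are exactly the objects in the essential image of $\Gamma$, and dually on the right. Propositions \ref{prop:left-fixpoints-are-haffun} and \ref{prop:right-fixpoints-are-ample} identify these full subcategories as $\nm{HUMnd_r}$ (all hyperaffine-unary monads are ranked here, since we work in $\Mnd_r(\Set)$) and $\AmpLocRetro$. The restricted functors $\nm{LB}$ and $\Gamma$ are then mutually inverse up to the invertible unit and counit, giving $\nm{HUMnd_r} \simeq \AmpLocRetro^\op$ (the contravariance being as in the statement).

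For the finitary restriction, I would check that the equivalence carries $\nm{HUMnd_\omega}$ onto $\AmpTopRetro$. First, if $T$ is finitary and hyperaffine-unary, then $\nm{LB}T$ is ample. By proposition \ref{prop:finitary-implies-spatiality} it is spatial, so it corresponds to $\bb{B}T$. Here $\bb{B}_0T$ is Stone by proposition \ref{prop:terminal-top-comodel-is-Stone}, and $\sigma$ is a local homeomorphism by lemma \ref{lemma:global-sections-of-B1T-is-F_T}, so $\bb{B}T$ is an ample topological category. Conversely, given an ample topological category $\bb{C}$, the locale $\cal{O}\bb{C}$ is ample: its object locale is compact and ultraparacompact by proposition \ref{prop:compact-and-ultraparacompact-is-stone}, and $\cal{O}$ preserves local homeomorphisms and the relevant pullbacks because the locale of a Stone space is spatial and étale spaces over sober spaces are sober. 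By compactness, every partition of $\bb{C}_0$ into clopens is refined by a finite one, so every global section factors through a finite partition. Hence $\Gamma\cal{O}\bb{C} = \Gamma_\omega\bb{C}$ is finitary.

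The main obstacle is this last compatibility. We must verify that $\cal{O}$ sends internal categories and retrofunctors in $\Top$ between ample objects to ample localic categories, that it is fully faithful there, and that the finitary coreflection $\lan_j(-\circ j)$ acts trivially on $\Gamma$ of a compact-object category. I would handle this by arguing that sections over a compact base use only finitely many outputs, which makes every element finitary.
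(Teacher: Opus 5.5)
Your argument is correct, but it gets idempotency from the opposite side of the adjunction to the paper.

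The paper shows that $\varepsilon_{\nm{LB}T}$ is invertible for every $T$. It first observes directly that each $\nm{LB}T$ is ample: the object locale is ultraparacompact by Proposition~\ref{prop:LB0T-ultraparacompact}, and the source map is \'etale by construction from the sheaf of transitions. It then applies the ($\Leftarrow$) direction of Proposition~\ref{prop:right-fixpoints-are-ample}.

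You instead show that $\eta_{\Gamma\nm{LC}}$ is invertible for every $\nm{LC}$, using Proposition~\ref{prop:global-sections-monad-is-haffun} and the ($\Rightarrow$) direction of Proposition~\ref{prop:left-fixpoints-are-haffun}. Ampleness of every $\nm{LB}T$ then comes out as a by-product rather than an input. Both fixed-point propositions must be used in full anyway to identify the fixed points, so neither route is more economical. Yours is the algebraic mirror of the paper's geometric route, and it spares you the separate observation that $\nm{LB}T$ is ample.

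For the finitary restriction you follow the paper: spatiality of $\nm{LB}T$ for finitary $T$, and compactness of $\bb{C}_0$ forcing every section to be finitary. (A partition of a compact space into nonzero clopens is already finite, so no refinement step is needed.) You are more careful than the paper in two respects:
\begin{itemize}
\item You make explicit that $\cal{O}$ must be fully faithful on ample topological categories, which holds because their object and morphism spaces are sober. The paper's ``factors through and is essentially surjective'' step uses this silently.
\item You record the contravariance, $\nm{HUMnd_r}\simeq\AmpLocRetro^{\op}$, which the statement suppresses.
\end{itemize}
Your sobriety-based justification that $\cal{O}$ preserves the relevant pullbacks is valid. It is not needed for that purpose, though, since $\cal{O}$ preserves pullbacks along local homeomorphisms for arbitrary spaces; sobriety matters only for full faithfulness.
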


\begin{example}
Every Grothendieck Boolean algebra $B_{\cal{J}}$ presenting a
locale $L$ is associated to a distributions monad $T_{B_{\cal{J}}}(A) :=
A[B]^\cal{J}$ (\ref{remark:equational-presentation-of-TB}) whose
computations are all hyperaffine, and hence
hyperaffine-unary. Under the equivalence of theorem
\ref{thm:stone-duality}, $T_{B_{\cal{J}}}$ corresponds to the localic
category $\nm{LB}T_{B_{\cal{J}}}$ with $\nm{LB}_0T_{B_{\cal{J}}}
\cong \nm{LB}_1T_{B_{\cal{J}}} \cong L$ and source map $\sigma = \id
\colon L \to L$. Of course, if $\cal{J}$
consists of only finite partitions, then $L = \nm{Spec}(B)$ is
the Stone dual of $B$, and so our equivalence subsumes the
classical Stone duality.
\end{example}

\newpage
\bibliographystyle{./entics}
\bibliography{refs}

\newpage

\appendix

\section{Omitted Proofs}

\subsection{Proof of proposition
\ref{prop:terminal-comodel-of-partitions-monad}}

This admits a rather lengthy ``elementary'' proof, but we can also prove it
from other results in this paper, namely via the following chain of
isomorphisms:
\begin{equation*}
\bb{B}_0T_B \overset{\text{prop
\ref{prop:loc-top-adjunction-for-comodels}}}{\cong} \pt \nm{LB}_0T
=     \Loc(1, \nm{LB}_0T_B)
\overset{\text{lem.
\ref{lemma:H2-of-hyperaffine-unary-monad}}}{\cong} \Frm(\nm{Idl}(B), S)
\cong \nm{BA}(B, S)
\cong \nm{\nm{Spec}(B)}
\end{equation*}
where $S = \cal{O} 1$ is the initial frame $\set{\bot \leq \top}$,
the penultimate
isomorphism follows by universal property of the ideal construction
as the left adjoint
to the inclusion $BA \to \Frm$, and the last
isomorphism identifies a map $p : B \to S$ with the ultrafilter
$\frk{p} = \set{b | p(b) = \top}$.

\subsection{Proof of proposition \ref{prop:terminal-localic-comodel}}
We first check that $\codeno{-}$ respects $\bind$ and $\return$,
making $\nm{LB}_0T$ a comodel. For $\bind$ we have
\begin{alignat*}{2}
& \codeno{t \bind u}\inv\braket{b_0 \mapsto [t_0]} \\
= & [t \bind u \mapsto b_0] \wedge [t \bind u \rskip t_0] &&
\text{by definition of } \codeno{t \bind u} \\
= & \bigvee_{a \in A} [t \mapsto a] \wedge [t \rskip u(a) \mapsto
b_0] \wedge [t \rskip u(a) \rskip t_0] \qquad && \text{apply
\eqref{eqn:LB-bind} twice, and simplify} \\
= & \bigvee_a ([t \mapsto a] \wedge [t \rskip u(a) \mapsto b_0])
\wedge ([t \mapsto a] \wedge [t \rskip u(a) \rskip t_0]) \qquad && \\
= & \bigvee_{a} \codeno{t}\inv\braket{a \mapsto [u(a) \mapsto
b_0]} \wedge \codeno{t}\inv \braket{a \mapsto [u(a) \rskip t_0]}
&& \text{by definition of } \codeno{t} \\
= & \codeno{t}\inv \bigvee_{a} \braket{a \mapsto [u(a) \mapsto
b_0] \wedge [u(a) \rskip t_0]} && \text{by definition of } \codeno{u} \\
= & \codeno{t}\inv\codeno{u}\inv \braket{b_0 \mapsto [t_0]},
\end{alignat*}
whereas for $\return$, we compute
\[ \codeno{\return a} \braket{a_0 \mapsto [t_0]} = [\return a
\mapsto a_0] \wedge [\return a \rskip t_0] =
\begin{cases}
  [t_0] & \text{if } a = a_0 \\
  \bot & \text{otherwise,}
\end{cases} \]
but this is just $\upsilon_{a}$. Next, we show that this comodel is
terminal, so let $L$ be an arbitrary localic comodel. If a map $h
\colon L \to \nm{LB}_0T$ exists, then $h$ being a comodel map
implies $h\inv [t_0] = h\inv\codeno{t_0}\inv\braket{1 \mapsto \top}
= \codeno{t_0}_L\inv (2 \cdot h)\inv \braket{1 \mapsto \top} =
\codeno{t_0}_L\inv \braket{1 \mapsto \top}$, and hence this
uniquely determines $h$. We leave the verification that this map is
well-defined as an exercise to the reader.

\subsection{Proof of correspondence for example
\ref{ex:locale-of-injective-state}} \label{subsec:proof-of-injective-state}

\begin{definition} \label{defn:locale-of-injective-functions}
The locale of injective functions $\bb{R} \rightarrowtail \bb{N}$
is presented by generators $\braket{x \mapsto n}$ for $x \in
\bb{R}$ and $n \in \bb{N}$, required to satisfy, for $x \neq y$
and $m \neq n$, the equations
\[ \bigvee_{x \in \bb{R}} \braket{x \mapsto n} = \top \qquad
  \braket{x \mapsto n} \wedge \braket{x \mapsto m} = \bot \qquad
\braket{x \mapsto n} \wedge \braket{y \mapsto n} = \bot \]
\end{definition}

We must prove that this presentation is bi-interpretable with the
behaviour locale of injective state. In one direction, the
generator $\braket{x \mapsto n}$ is interpreted as $[\nm{get}_x
\mapsto n]$, in which case the first two axioms straightforwardly
follow. The third axiom can be proven as follows:
\begin{alignat*}{2}
& [\nm{get}_x \mapsto n] \wedge [\nm{get}_y \mapsto n] \\
=~& [\nm{get}_x \mapsto n] \wedge [\nm{get}_x \rskip \nm{get}_y
\mapsto n] \\
=~& [\nm{get}_x \bind \lambda m_1. \nm{get}_y \bind \lambda m_2.
\return (m_1, m_2) \mapsto (n, n)] \\
=~& [\nm{get}_x \bind \lambda m_1. \nm{get}_y \bind \lambda m_2.
\return (m_1 \overset{?}{=} n \overset{?}{=} m_2)] \\
=~& [\nm{get}_x \bind \lambda m_1. \nm{get}_y \bind \lambda m_2.
\return 0] && ~~~\text{(by injectivity eqn)} \\
=~& [\return 0] = \bot
\end{alignat*}
In the other direction, by recursion we define a map $h$
interpreting the generators of the behaviour locale:
\[ h \colon [\return 0] \mapsto \bot \qquad [\return 1] \mapsto
\top \qquad [\nm{get}_x \bind \lambda n. t_n] \mapsto  \bigvee_{n
\in \bb{N}}
\braket{x \mapsto n} \wedge h([t_n]) \]
We leave the reader to verify that this respects the usual equations satisfied
by terms in the theory of state. As an example, we will verify just
the injectivity equation from example \ref{ex:locale-of-injective-state}:
\begin{alignat*}{2}
& h[\nm{get}_x \bind \lambda n. \nm{get}_y \bind \lambda m.
\return f(n, m) \mapsto (n_0, m_0)] \\
= & \bigvee_{n} \bigvee_{m} \braket{x \mapsto n} \wedge \braket{y
\mapsto m} \wedge f(n, m) \overset{?}{=} (n_0, m_0) \\
= & \left(\bigvee_{n \neq m} \braket{x \mapsto n} \wedge
  \braket{y \mapsto m} \wedge (n, m) \overset{?}{=} (n_0, m_0)
\right)\vee \bigvee_{n} \braket{x \mapsto n} \wedge \braket{y
\mapsto n} \wedge f(n, m) \overset{?}{=} (n_0, m_0) && \text{
property of } f \\
= & \left(\bigvee_{n \neq m} \braket{x \mapsto n} \wedge
  \braket{y \mapsto m} \wedge (n, m) \overset{?}{=} (n_0, m_0)
\right) \vee \bot && \text{ by def.
\ref{defn:locale-of-injective-functions}} \\
= & \left(\bigvee_{n \neq m} \braket{x \mapsto n} \wedge
  \braket{y \mapsto m} \wedge (n, m) \overset{?}{=} (n_0, m_0)
\right)\vee \bigvee_{n} \bot \wedge (n, m) \overset{?}{=} (n_0, m_0) \\
= & \left(\bigvee_{n \neq m} \braket{x \mapsto n} \wedge
  \braket{y \mapsto m} \wedge (n, m) \overset{?}{=} (n_0, m_0)
\right)\vee \bigvee_{n} \braket{x \mapsto n} \wedge \braket{y
\mapsto n} \wedge (n, m) \overset{?}{=} (n_0, m_0) \\
= & \bigvee_{n} \bigvee_{m} \braket{x \mapsto n} \wedge \braket{y
\mapsto m} \wedge (n, m) \overset{?}{=} (n_0, m_0) \\
= & h[\nm{get}_x \bind \lambda n. \nm{get}_y \bind \lambda m.
\return (n, m) \mapsto (n_0, m_0)]
\end{alignat*}

Finally, $h$ respects the equations of the behaviour
locale: for all three axioms \eqref{eqn:LB-bot},
\eqref{eqn:LB-return}, and \eqref{eqn:LB-bind} it follows by a
straightforward induction on the syntax of $t$.

\subsection{Proof of proposition \ref{prop:LB0T-ultraparacompact}}
\label{subsec:proof-of-LB0T-ultraparacompact}

We follow a similar line of argument to \cite[Section
2]{Johnstone1997-ry}. First, notice that we can construct
$\nm{LB}_0T$ in two steps. Begin by constructing the meet
semi-lattice $\nm{MB_0T}$ generated by opens $[b]$ subject to
equations $[t \rskip \return a \mapsto a] = \top$ and $[t \mapsto
a] \wedge [t \bind u \mapsto b] = [t \mapsto a] \wedge [t \rskip
u(a) \mapsto b]$. Then we can generate the frame
$\cal{O}\nm{LB}_0T$ from $\nm{MB}_0T$ subject to the equations $[t
\mapsto a] \wedge [t \mapsto a'] = \bot$ and $\bigvee_{a \in A} [t
\mapsto a] = \top$. Following \cite[II 2.11]{Johnstone1982-ei},
this can be presented as a covering system instead. The covering
system $\cal{J}$ is generated by the following rules:
\begin{mathpar}
\inferrule{t \in TA, a \neq a' \in A}{\emptyset \in \cal{J}([t
\mapsto a] \wedge [t \mapsto a'])} \quad
\inferrule{t \in TA}{\set{[t\mapsto a] | a \in A} \in
\cal{J}(\top)} \quad \inferrule{ }{\set{u} \in \cal{J}(u)} \\
\inferrule{J \in \cal{J}(u)}{\set{j \wedge v | j \in J} \in
\cal{J}(u \wedge v)} \quad \inferrule{J \in \cal{J}(u) \quad K_j
\in \cal{J}(j) \;\; \forall j \in J}{\bigcup_{j \in J} K_j \in \cal{J}(u)}
\end{mathpar}
Notice that the three base cases are pairwise-disjoint covers, and
the inductive cases preserve the pairwise-disjoint property. Hence
all the covers in this system are pairwise-disjoint. The frame
presented by this system consists of all the $\cal{J}$-ideals, i.e.,
downwards closed subsets $I \subseteq \nm{MB}_0T$ such that for any
$J \in \cal{J}(u)$, $J \subseteq I$ implies $u \in I$. For any
subset $S \subseteq \nm{MB}_0T$, the smallest $\cal{J}$-ideal
containing $S$ is $\overline{S} = \set{u \in \nm{MB}_0T | \exists J
\in \cal{J}(u). J\subseteq S}$, and the join of a family of
$\cal{J}$-ideals $\set{I_{k}}_k$ is the $\overline{\bigcup_k I_k}$.
Also, every $u \in \nm{MB}_0T$ induces an ideal $\downarrow u$.

We now prove that every open cover is refined by a partition, so
consider an open cover $\set{I_k}_k$. It is covering iff $\top \in
\overline{\bigcup_k I_k}$ iff there is $J \in \cal{J}(\top)$ such
that $J \subseteq \bigcup_k I_k$. But if we now consider the family
$\set{\downarrow j | j \in J}$, then this is precisely a partition
(pairwise-disjoint because $J$ is) refining $\set{I_k}_k$.

\subsection{Proof of proposition
\ref{prop:compact-and-ultraparacompact-is-stone}}

Suppose $L$ is compact and ultraparacompact.
This ensures that $L$ is freely generated
by the Boolean algebra $\frak{B}(L)$, and locales
freely generated from a distributive lattice in this way are
well-known to be spatial \cite[II 3.4]{Johnstone1982-ei}. $\pt L$
inherits zero-dimensionality
and compactness from $L$, and it is Hausdorff because for any two
points $\frak{p} \neq \frak{q} \in \pt L$, there must be a $b \in
\frak{B}(L)$ such that $\frak{p} \in [b]$ and $\frak{q} \in [\neg
b]$, thus separating the two points.

On the other hand, if $L =
\cal{O}X$ for a Stone space $X$, then the compactness ensures that
every open cover has a finite subcover, but any finite
cover is further refinable into a finite partition by
zero-dimensionality. This establishes
the desired ultraparacompactness.

\subsection{Proof of lemma \ref{lemma:basis-of-etale-space}}
\label{subsec:proof-of-basis-of-etale-space}

Each open $\hat{x} = \lambda y. \bigvee \set{b | x \equiv_b y}$ is
a $B_\cal{J}$-set homomorphism since if $y_1 \equiv_c y_2$ then
\[c \wedge \hat{x}(y_1) = \bigvee\set{c \wedge b | x \equiv_b y_1}
\overset{\Diamond}{=} \bigvee \set{c \wedge b | x \equiv_{c \wedge
b} y_1} = \bigvee \set{c \wedge b | x \equiv_{c \wedge b} y_2} = c
\wedge \hat{x}(y_2), \]
where $\Diamond$ holds from right-to-left because we can take the
$b$ on the LHS to be the $b \wedge c$ on the RHS. The assignment $x
\mapsto \hat{x}$ is injective, as we now show. Let $x, y \in |F|$
and assume that $\hat{x} = \hat{y}$. Then we have $\bigvee_b \set{b
| x \equiv_b y} = \hat{x}(y) = \hat{y}(y) = \top$, which implies
there is a (cover $P$, WLOG refinable into a) partition $P$ such
that for each $b \in P$ we have $x \equiv_b y$. But then $y =
P(\lambda b. y) = P(\lambda b. b(x, y)) = P(\lambda b. x) = x$.

Finally, let us show that $w = \bigvee_{x \in |F|} \hat{x} \wedge
\nm{const}_{w(x)}$. Take an arbitrary $y \in |F|$. Then $w(y) =
\top \wedge w(y) = \hat{y}(y) \wedge w(y) \leq \bigvee_{x \in |F|}
\hat{x}(y) \wedge \nm{const}_{w(x)}(y)$. On the other hand, for the
right-to-left inequality, it suffices to show for any $b \in B$
with $x \equiv_b y$ that $b \wedge w(x) \leq w(y)$. But this is
clearly true since $x \equiv_b y$ implies $w(x) \equiv_b w(y) \iff
b \wedge w(x) = b \wedge w(y)$.

\subsection{Proof of proposition
\ref{prop:corresponding-local-homeo}}
\label{subsec:proof-of-corresponding-local-homeo}

The map $\sigma \colon E(F) \to L$ is a local homeomorphism by
taking the family $\set{\hat{x}}_{x \in |F|}$, which is covering by
\ref{lemma:basis-of-etale-space}. Each such open $\hat{x}$ is
homeomorphic onto the whole base space $\cal{O}L$.

The set of global sections to $E(F) \to L$ forms a $B_\cal{J}$-set,
$s_1 \equiv_b s_2 \iff \forall w \in \cal{O}E(F). s_1(w) \wedge b =
s_2(w) \wedge b$. Every element $x \in |F|$ corresponds to a global
section $s_x \colon L \to E(F)$ given by $s_x\inv \colon w \mapsto
w(x)$. This defines a $B_\cal{J}$-set homomorphism since $x
\equiv_b y$ implies, for all $w \in \cal{O}E$ that $s_y(w) \wedge b
= w(y) \wedge b = w(b(x, y)) \wedge b = (b \wedge w(x) \vee \neg b
\wedge w(y)) \wedge b = b \wedge w(x)$, i.e., that $s_x \equiv_b
s_y$. Moreover, this assignment is \emph{injective} by the
injectivity of $x \mapsto \hat{x}$ (lemma \ref{lemma:basis-of-etale-space}).

On the other hand, to see that this assignment is
\emph{surjective}, notice that a global section $s$ given by a
frame homomorphism $s\inv \colon \cal{O}E(F) \to \cal{O}L$ induces
a map $C \colon x \mapsto s\inv(\hat{x}) \colon |F| \to \cal{O} L$.
The image of this map covers $\cal{O}L$ because when we take all
the opens $\hat{x}$ together, they cover $\cal{O}E(F)$. Since $L$
is ultraparacompact, we can refine $C$ to an (extended) partition
$P \colon |F| \to \cal{O}L$ (not uniquely, but we don't care which
one we choose). This yields an element $p \in |F|$ by taking the
amalgamation $p := P[|F|]^{-}(\lambda b. P\inv b)$. The element $p$
induces the section $s_p$, so to see that $s_p = s$:
\begin{alignat*}{1}
s\inv_p(w) = &\; w(p) = w(P[|F|]^{-}(\lambda b. P\inv b)) \\
= & \bigvee_{b \in P} b \wedge w(P\inv b) = \bigvee_{x \in |F|}
P(x) \wedge w(x) \\
= & \bigvee_{x \in |F|} C(x) \wedge w(x) = \bigvee_{x \in |F|}
s\inv(\hat{x}) \wedge w(x) = s\inv(w)
\end{alignat*}
whereby the last equality follows from lemma \ref{lemma:basis-of-etale-space}.

\subsection{Proof of proposition \ref{prop:germs-of-local-homeo}}
\label{subsec:proof-of-germs-of-local-homeo}
Given a germ $[x]_p$, we can define a point $q \in \pt E(F) \cong
\Set_{B_\cal{J}}(F, L) \to \cal{O}1$ by letting $q(w) = \top$ iff
$p \in w(x)$. This is coherent with respect to the choice of $x$,
because if $x \equiv_p y$ then $x \equiv_b y$ for some $b \ni p$,
so $w(x) \equiv_b w(y)$ which entails $b \wedge w(x) = b \wedge
w(y)$ by definition. But then $p \in w(x) \iff p \in b \wedge w(x)
\iff p \in b \wedge w(y) \iff p \in w(y)$.

On the other hand, suppose we have a point $q \colon 1 \to E(F)$.
This defines a subset $\hat{q} \subseteq |F| = \set{x | q \in
\hat{x}}$, which has to be non-empty because otherwise $q \not\in
\hat{x}$ for all $x \in |F|$, and hence $q\inv(\top) = \bigvee_{x
\in |F|} q\inv(\hat{x}) = \bot$ contradicting $q\inv$ being a frame
homomorphism. $q$ also induces a point $p := \sigma q$, and the
germ induced by $q$ is $[x]_p$ for any $x \in \hat{q}$. This is
coherent: for any $x, y \in \hat{q}$ we have $q \in \hat{x} \wedge
\hat{y}$, but by lemma \ref{lemma:basis-of-etale-space} there is
some $z \in |F|$ such that $q \in \hat{z}$ and $p \in (\hat{x}
\wedge \hat{y})(z)$. Then
\begin{alignat*}{2}
p \in (\hat{x} \wedge \hat{y})(z) &\iff  p \in \set{b \wedge b' |
x \equiv_b z, z \equiv_{b'} y} \\
& \implies \exists b \wedge b' \ni p. x \equiv_{b \wedge b'} z
\text{ and } z \equiv_{b \wedge b'} y \\
& \implies \exists b'' \ni p. x \equiv_b'' y \\
& \iff x \equiv_p y
\end{alignat*}
From here it is a routine unfolding of definitions to see that a
germ induces itself by going back-and-forth. On the other hand,
given a point $q$, going forth-and-back produces point $q'$ with
$q' \in w$ iff $q \in \nm{const}_{w(x)}$ for some $x \in \hat{q}$
iff $q \in \bigvee_{x} \hat{x} \wedge \nm{const}_{w(x)} = w$, and
hence $q' = q$.

The subbasic opens $[x | b]$ on the set of germs is induced by the
opens $\hat{x} \wedge \nm{const}_b$, which generate all other opens
by lemma \ref{lemma:basis-of-etale-space}.

\subsection{Proof of lemma \ref{lemma:pointwise-trace-equivalence}}
\label{subsec:proof-of-pointwise-trace-equivalence}

First, for self-containedness we lay out precisely the definition
of $B_\cal{J}$-congruence.
\begin{definition}
Let $X$ be a $B_{\cal{J}}$-set. An equivalence relation
$\mathop{\approx} \subseteq X \times X$ is a $B_{\cal{J}}$-set
congruence if for every partition $P \in \cal{J}$, and two
families $x, x' \colon P \to X$ such that for each $b$, $x_b
\approx x'_b$, we have
\[ P(x) ~R~ P(x'). \]
Given a set of pairs $G \subseteq X \times X$, The congruence
$\approx_G$ generated by $G$ is given by the following inference rules
\begin{mathpar}
  \inferrule[gen]{(x_1, x_2) \in G}{x_1 \approx_G x_2} \qquad
  \inferrule[refl]{x \in X}{x \approx_G x} \qquad
  \inferrule[trans]{x_1 \approx_G x_2 \\ x_2 \approx_G x_3}{x_1
  \approx_G x_3} \qquad \inferrule[symm]{x_1 \approx_G x_2}{x_2
  \approx_G x_1} \\
  \inferrule[cong-$P$]{x_b \approx_G x'_b \quad \forall b \in
  P}{P(x) \approx_G P(x')}
\end{mathpar}
We also define $\rightsquigarrow_G$ as the relation derivable by
exactly one use of \textsc{gen}, any use of \textsc{cong}-$P$ for
any \emph{finite} partition $P$, and any use of \textsc{refl}.
\end{definition}

If $\cal{J}$ only contains finite partitions, then the algebraic
theory of $B_{\cal{J}}$-sets only has finite operations and we can
easily show that $\approx_G$ is the symmetric transitive closure of
$\rightsquigarrow_G$. However, if $\cal{J}$ has infinite
partitions, then this is no longer the case, but we can still prove
that $\approx_G$ is always derivable by one congruence applied to
$\leftrightsquigarrow_G^\omega$, i.e., the transitive symmetric
closure of the relation $\rightsquigarrow_G$.

\begin{lemma} \label{lemma:partition-associated-to-rewrite}
If $x_1 \rightsquigarrow_G x_2$, then this can be derived with
exactly one application of the \textsc{cong} rule.
\end{lemma}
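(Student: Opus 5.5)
We argue by induction on the derivation of $x_1 \rightsquigarrow_G x_2$. Such a derivation uses exactly one instance of \textsc{gen}, together with \textsc{refl} and \textsc{cong}-$P$ for finite partitions $P$. We aim for a normal form: a single finite partition $Q$ and families $y, y' \colon Q \to X$ with $x_1 = Q(y)$ and $x_2 = Q(y')$. For one index $b_0 \in Q$ the pair $(y_{b_0}, y'_{b_0})$ lies in $G$. For every other $b \neq b_0$ we have $y_b = y'_b$, obtained by \textsc{refl}. One application of \textsc{cong}-$Q$ then derives $x_1 \rightsquigarrow_G x_2$.

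\textbf{Base cases.} If the derivation is a single \textsc{gen} step $(x_1, x_2) \in G$, take the trivial partition $Q = \set{\top}$. By the first axiom of \eqref{eqn:BJ-set-axioms} (or $\top(x, y) = x$), $Q(\lambda b.\, x_i) = x_i$, and the required form follows. Derivations consisting of \textsc{refl} alone can also be put in this form, through $\set{\top}$ and \textsc{refl}.

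\textbf{Inductive case.} The last rule is \textsc{cong}-$P$ with $P$ finite, and the premises are $x_b \rightsquigarrow_G x'_b$ for each $b \in P$. Since only one \textsc{gen} occurs in the whole derivation, exactly one premise, say the one at $b_1$, contains it. Every other premise uses only \textsc{refl} and \textsc{cong}, so an easy sub-induction shows it is an equality: $x_b = x'_b$. Apply the induction hypothesis to the $b_1$ premise. This gives a finite partition $Q$ with distinguished index $c_0$ and families $y, y'$ satisfying $x_{b_1} = Q(y)$ and $x'_{b_1} = Q(y')$. Now form the composite finite partition $R = P;Q'$, where $Q'_{b_1} = Q$ and $Q'_b = \set{\top}$ for $b \neq b_1$. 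Its elements are the nonzero $b_1 \wedge c$ for $c \in Q$, together with the $b \in P \setminus \set{b_1}$. Define families $z, z'$ on $R$ by $z_{b_1 \wedge c} = y_c$, $z'_{b_1 \wedge c} = y'_c$, and $z_b = z'_b = x_b$ otherwise. The distinguished index is $b_1 \wedge c_0$. If this meet is $\bot$, the \textsc{gen} step is discarded and both sides are equal; \textsc{refl} then handles it, so the form trivially holds.

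\textbf{Main obstacle.} The key step is the algebraic identity $P(\lambda b.\, x_b) = R(z)$, where $x_{b_1} = Q(y)$; the same identity is needed with primes. It expresses associativity of amalgamation. The cleanest route is the alternative characterization of $B_{\cal{J}}$-sets by the relations $\equiv_b$, in particular uniqueness of amalgams (property (iii)). It suffices to check that $P(x) \equiv_r z_r$ for each $r \in R$. For $r = b \in P \setminus \set{b_1}$ this is the defining property of $P(x)$. For $r = b_1 \wedge c$, property (iii) gives $P(x) \equiv_{b_1} Q(y)$ and $Q(y) \equiv_c y_c$. Downward closure (i) restricts both to $b_1 \wedge c$, and transitivity then yields $P(x) \equiv_{b_1 \wedge c} y_c$. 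By uniqueness, $R(z) = P(x)$, which completes the induction.
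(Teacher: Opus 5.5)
Your proof is correct and follows the paper's route: induction on the derivation, wrapping the base case in a \textsc{cong} over $\{\top\}$, and in the inductive step flattening the nested \textsc{cong} applications into a single one over the composite partition $P;Q$. The paper applies the induction hypothesis to every premise rather than only to the one containing \textsc{gen}. You are more careful than the paper in two places: you justify $P(x) = (P;Q')(z)$ explicitly via uniqueness of amalgams, and you notice the degenerate case $b_1 \wedge c_0 = \bot$, where the \textsc{gen} step disappears.
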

\begin{proof}
By induction on derivation of $x_1 \rightsquigarrow_G x_2$. In
the base case, we clearly have zero applications, but we can add
an application of the $\textsc{cong}$ rule over the one-element
partition $\set{\top}$. In the inductive case, we have a
derivation which looks like
\begin{mathpar}
  \inferrule{x_b \rightsquigarrow_G y_b \quad \forall b \in
  P}{P(x) \rightsquigarrow_G P(y)}.
\end{mathpar}
By the inductive hypothesis, each subderivation of $x_b
\rightsquigarrow_G y_b$ can be rewritten to use exactly one
\textsc{cong} rule, so each subderivation is associated with a
partition $Q_b$, defining a map $Q \colon P \to \cal{J}$. The
derivation now looks like the derivation on the left-hand side
below, which is derivable as on the right-hand side.
\begin{mathpar}
  \inferrule*[Left={cong}]{
    \inferrule*[Left={cong}]{
      \inferrule*[Left={gen or refl}]{
        \ldots
      }{x^c_b \rightsquigarrow_G y^c_b} \quad \forall c \in Q_b
    }{x_b = Q_b(\lambda c. x^c_b) \rightsquigarrow_G Q_b(\lambda
    c.y^c_b) = y_b} \quad \forall b \in P
  }{P(x) \rightsquigarrow_G P(y)}
  \qquad \qquad \quad
  \inferrule*[Left={cong}]{
    \inferrule*[Left={gen or refl}]{
      \ldots
    }{x^c_b \rightsquigarrow_G y^c_b} \quad \forall (b \wedge c) \in P;Q
  }{P;Q(\lambda b \wedge c. x^c_b) \rightsquigarrow_G P;Q(\lambda
  b \wedge c. y^c_b)}
\end{mathpar}
The right-hand derivation uses only one \textsc{cong}, concluding the proof.
\end{proof}

\begin{lemma} \label{lemma:normal-form-of-approxG}
If $x_1 \approx_G x_2$ then this is derivable by a derivation of the form
\begin{mathpar}
  \inferrule*[Left={cong}]{
    \inferrule*{\vdots}{x_b \leftrightsquigarrow^\omega_G y_b}
    \quad \forall b \in P
  }{x_1 = P(x) \approx P(y) = x_2}
\end{mathpar}
\end{lemma}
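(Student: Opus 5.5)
We argue by induction on the derivation of $x_1 \approx_G x_2$, showing that the set $N$ of pairs admitting a normal-form derivation (a single \textsc{cong}-$P$ with $P \in \cal{J}$ applied to premises $x_b \leftrightsquigarrow^\omega_G y_b$) contains $G$, is reflexive, symmetric and transitive, and is closed under \textsc{cong}-$Q$. Then $\approx_G \subseteq N$, which is the claim. Throughout we use the $B_\cal{J}$-set axioms \eqref{eqn:BJ-set-axioms}, in particular $P(\lambda b. x) = x$ and $b(P(x), x_b) = x_b$, i.e.\ $P(x) \equiv_b x_b$. We also use the fact that partitions in $\cal{J}$ can be composed via $P;Q$ and coarsened via $f\inv P$.

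The base cases are immediate. For \textsc{gen} and \textsc{refl}, take the trivial partition $\set{\top}$ and note that $\top$-ary amalgamation is the identity. For \textsc{symm}, reverse each premise, since $\leftrightsquigarrow^\omega_G$ is symmetric. For \textsc{cong}-$Q$, suppose each premise $x_c \approx_G x'_c$ (for $c \in Q$) is in normal form over a partition $P_c$ with families $x^b_c, y^b_c$. We form the composite partition $Q;P$. Using the amalgamation identities, $Q(\lambda c. P_c(x^{(-)}_c)) = (Q;P)(\lambda c\wedge b.\, x^b_c)$, exactly as in the flattening step of lemma \ref{lemma:partition-associated-to-rewrite}; one checks this by comparing $\equiv_{c \wedge b}$-components and invoking uniqueness of amalgamation. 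The premises $x^b_c \leftrightsquigarrow^\omega_G y^b_c$ are unchanged. Elements $c\wedge b = \bot$ are discarded via the ${}^-$ operation.

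The main obstacle is \textsc{trans}. Given $x_1 = P(x) \approx P(y) = x_2$ and $x_2 = P'(x') \approx P'(y') = x_3$, we cannot directly concatenate the chains, because $P(y)$ and $P'(x')$ are different decompositions of the same element. We pass to the common refinement $R = \set{b \wedge b'}^-$, which lies in $\cal{J}$ by rule (ii) (composing $P$ with the constant family $P'$ restricted to each $b$). On the part $r = b \wedge b'$ we want the chain
\[ x_b \leftrightsquigarrow^\omega_G y_b \;\overset{?}{=}\; \ldots \;x'_{b'} \leftrightsquigarrow^\omega_G y'_{b'}, \]
but $y_b$ and $x'_{b'}$ are only $\equiv_r$-equal, not equal. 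The fix is to replace every element in the chain by its ``localisation'' at $r$: set $\ell_r(z) := r(z, x_2)$, which is a binary operation and hence a finite partition $\set{r, \neg r}$. A single rewrite $z \rightsquigarrow_G z'$ yields $r(z, x_2) \rightsquigarrow_G r(z', x_2)$ by one \textsc{cong} over $\set{r,\neg r}$ together with \textsc{refl}; this is permitted because that \textsc{cong} is finite. Hence $\ell_r$ preserves $\leftrightsquigarrow^\omega_G$. Moreover $\ell_r(y_b) = r(y_b, x_2) = r(x_2, x_2) = x_2 = \ell_r(x'_{b'})$, using $y_b \equiv_r P(y) = x_2 \equiv_r x'_{b'}$. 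Concatenating gives $\ell_r(x_b) \leftrightsquigarrow^\omega_G \ell_r(y'_{b'})$.

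Finally we apply \textsc{cong}-$R$ and verify, using $P(x) \equiv_r x_b \equiv_r \ell_r(x_b)$ and uniqueness of amalgamation, that $R(\lambda r.\ell_r(x_b)) = P(x) = x_1$ and similarly $R(\lambda r. \ell_r(y'_{b'})) = x_3$. This is where the care lies: checking that each $\ell_r$-rewrite genuinely stays inside the finitary fragment $\rightsquigarrow_G$, and that the $\equiv_r$ identities follow from the axioms \eqref{eqn:B-set-axioms}.
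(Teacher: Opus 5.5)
Your proposal is correct and follows the paper's own proof. Both argue by induction on the derivation and handle \textsc{trans} the same way: pass to the common refinement $P;Q$, localise each chain with the binary operation $(b\wedge c)(-,*)$ (which uses only finite congruences, so it stays within $\leftrightsquigarrow^\omega_G$), and reassemble with a single \textsc{cong} over the refinement. Your choice $*=x_2$ (the paper takes $*$ arbitrary and computes $(b\wedge c)(P(x'),*)=(b\wedge c)(x'_b,*)$) and your explicit treatment of the \textsc{cong}-$Q$ case are only minor refinements of that same argument.
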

\begin{proof}
By induction on the derivation of $x_1 \approx_G x_2$. The base
cases and the inductive cases for \textsc{symm} and
\textsc{cong}-$P$ are easy, so we only work out the case for
\textsc{trans}. By induction hypothesis we know that our
derivation will look like
\begin{mathpar}
  \inferrule*[Left={trans}]{
    \inferrule*[Left={cong}]{
      \inferrule*{\vdots^{\Delta^x_b}}{x_b
      \leftrightsquigarrow^\omega_G x'_b} \quad \forall b \in P
    }{P(x) \approx_G P(x')}
    \qquad\qquad
    \inferrule*[Left={cong}]{
      \inferrule*{\vdots^{\Delta^y_c}}{y'_c
      \leftrightsquigarrow^\omega_G y_c} \quad \forall c \in Q
    }{Q(y') \approx_G Q(y)}
    \qquad\qquad
    P(x') = Q(y')
  }{x_1 = P(x) \approx_G Q(y) = x_2}
\end{mathpar}
We can re-arrange this into the following derivation,
\begin{mathpar}
  \inferrule*[Left={cong}]{
    \inferrule*[Left={trans}]{
      \inferrule*[Left={cong}]{
        \inferrule*{\vdots^{\Delta^x_b}}{x_b
        \leftrightsquigarrow^\omega_G x'_b}
        \quad
        \inferrule*[Right={refl}]{ }{* \approx_G *}
      }{(b \wedge c)(x_b, *) \approx_G (b \wedge c)(x'_b, *)}
      \qquad\quad
      \inferrule*[Left={cong}]{
        \inferrule*{\vdots^{\Delta^y_c}}{y_c'
        \leftrightsquigarrow^\omega_G y_c}
        \quad
        \inferrule*[Right={refl}]{ }{* \approx_G *}
      }{(b \wedge c)(y_c', *) \approx_G (b \wedge c)(y_c, *)}
      \qquad
      (b \wedge c)(x_b', *) = (b \wedge c)(y_c', *)
    }{(b \wedge c)(x_b, *) \approx_G (b \wedge c)(y_c, *)
    \qquad\qquad\qquad\qquad\qquad\qquad\;\; \forall b \wedge c \in P;Q}
  }{P;Q(\lambda b \wedge c. (b \wedge c)(x_b, *) \approx_G
      P;Q(\lambda b \wedge c. (b \wedge c)(y_c, *)}
    \end{mathpar}
    where $*$ is allowed to be any element of the $B_\cal{J}$-set
    $X$ (if $X$ is empty the theorem is vacuously true anyway).
    Here, the equality $(b \wedge c)(x_b', *) = (b \wedge
    c)(y_c', *)$ follows from $P(x') = Q(y')$, since
    $(b\wedge c)(P(x'), *) = c(b(P(x'), *), *) = c(b(b(P(x'),
    x'_b), *), *) = c(b(x'_b, *), *) = (b \wedge c)(x'_b, *)$ and
    similarly $(b\wedge c)(Q(y'), *) = (b \wedge c)(y'_c, *)$.

    Now, we see on the lefthand-side that $P(x) = P;Q(\lambda b
    \wedge c. x_b) = P;Q(\lambda b \wedge c. (b \wedge c)(x_b,
    *))$, and similarly for $Q(y)$ on the righthand-side. Each of
    the derivations of $(b \wedge c)(x_b, *) \approx_G (b \wedge
    c)(y_c, *)$ only uses finite congruences, so can be
    re-arranged into derivations of $(b \wedge c)(x_b, *)
    \leftrightsquigarrow^\omega_G (b \wedge c)(y_c, *)$, which
    concludes this inductive case and hence the proof.
  \end{proof}

  Now let $G$ be the generating equation of definition
  \ref{defn:sheaf-of-transitions}, and for which we will omit the
  subscript from this point on. The proof of the actual lemma
  proceeds in two steps. We first prove lemma
  \ref{lemma:pointwise-trace-equivalence-for-finitary-J}, which
  is the version of lemma \ref{lemma:pointwise-trace-equivalence}
  for $\rightsquigarrow^\omega_G$ (which actually suffices to
    prove lemma \ref{lemma:pointwise-trace-equivalence} in the case
  where $\cal{J}$ is finitary), and then prove the statement for $\approx$.

  \begin{lemma} \label{lemma:pointwise-trace-equivalence-for-finitary-J}
    For any $x, y \in T1[B]^{\cal{J}}$, if $x
    \leftrightsquigarrow^\omega y$ then for each $m, n \in T1$,
    we have $m \sim_{x(m) \wedge y(n)} n$.
  \end{lemma}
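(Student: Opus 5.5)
We argue by induction on the length of the zig-zag witnessing $x \leftrightsquigarrow^\omega y$, after first isolating a single step. The target relation is $R(x,y) :\iff \forall m,n.\; m \sim_{x(m)\wedge y(n)} n$. Reflexivity of $R$ holds because $x$ is a partition: if $m \neq n$ then $x(m)\wedge x(n) = \bot$, and $\sim_\bot$ is trivially true, while $m \sim m$ holds outright. Symmetry of $R$ follows from the symmetry of $\deno{-\sim-}$ in the lemma above. For transitivity, suppose $R(x,y)$ and $R(y,z)$, and fix $m,k$. Since $y$ is a partition in $\cal{J}$, we have $\bigvee_n y(n) = \top$, so
\[ x(m)\wedge z(k) = \bigvee_n x(m)\wedge y(n)\wedge z(k) \leq \bigvee_n \deno{m\sim n}\wedge\deno{n\sim k} \leq \deno{m \sim k}, \]
using the transitivity clause of the earlier lemma. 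Hence $R$ is an equivalence relation, and it suffices to show that $x_1 \rightsquigarrow x_2$ implies $R(x_1,x_2)$.

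By Lemma~\ref{lemma:partition-associated-to-rewrite}, a single step $x_1 \rightsquigarrow x_2$ takes the form $P(\lambda b.\,x^b_1) \rightsquigarrow P(\lambda b.\,x^b_2)$ for a single finite partition $P$, where each pair $(x^b_1, x^b_2)$ is either a \textsc{refl} pair or a \textsc{gen} pair. By Proposition~\ref{prop:free-BJ-sets} we have $P(\lambda b.\,h_b)(m) = \bigvee_b b\wedge h_b(m)$. Therefore
\[ x_1(m)\wedge x_2(n) = \bigvee_{b,b'} b\wedge b'\wedge x^b_1(m)\wedge x^{b'}_2(n) = \bigvee_b b\wedge x^b_1(m)\wedge x^b_2(n), \]
since distinct blocks of $P$ are disjoint. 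It is then enough to show $x^b_1(m)\wedge x^b_2(n) \leq \deno{m\sim n}$ for each $b$. The \textsc{refl} case is exactly the reflexivity of $R$ shown above, and \textsc{gen} pairs taken in either orientation are covered by the symmetry of $R$.

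The main case is the \textsc{gen} pair itself: $x = \delta_{t\bind u}$ and $y = P^{(t)}(\lambda[t\mapsto a].\,t\rskip u(a))$, so that $y(n) = \bigvee\set{[t\mapsto a] \mid t\rskip u(a) = n}$. The only nontrivial instance has $m = t\bind u$, and there we must show $[t\mapsto a] \leq \deno{t\bind u \sim t\rskip u(a)}$. Take $u' := \lambda a'.\, u(a)$, the constant family. Then $u(a) = u'(a)$ and $t\bind u' = t\rskip u(a)$, so $[t\mapsto a]$ occurs in the join defining $\deno{t\bind u \sim_1 t\rskip u(a)}$; here $|A|\le\kappa$ can be assumed by rank. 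I expect this witness choice to be the crux of the argument. The remaining technical points are that the relevant joins live in $\cal{O}\nm{LB}_0T$, where distributivity holds, and that a step with $m$ or $n$ outside the supports contributes only $\bot$.
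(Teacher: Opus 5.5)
Your proof is correct, but it is organised differently from the paper's. The paper fixes the whole zig-zag $x = x_0, x_1, \ldots, x_k = y$ and attaches to each $x_i$ the finite partitions coming from its neighbouring rewrite steps. It passes to a common refinement $S$, and on each block $d \in S$ follows one representative term along the chain. This gives $t^0_d \sim_d t^1_d \sim_d \cdots \sim_d t^k_d$, where each link is an equality or an instance of $t \bind u \sim_{[t\mapsto a]} t \rskip u(a)$ restricted to $d$. The paper then glues over the blocks lying below each $b \wedge c$.

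You instead prove that the target relation $R$ is itself an equivalence relation on $T1[B]^{\cal{J}}$, with transitivity coming from the cover $\set{y(n)}_n$ and frame distributivity. Then only a single rewrite step has to be checked, via the disjointness computation and the constant-family witness. This removes the common-refinement and term-tracking bookkeeping entirely.

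It also generalises at no cost. Your identity $x_1(m)\wedge x_2(n) = \bigvee_b b\wedge x^b_1(m)\wedge x^b_2(n)$ uses only infinite distributivity, so it holds verbatim for infinite $P \in \cal{J}$. Hence $R$ is a $B_{\cal{J}}$-congruence containing the generating pairs, and therefore contains $\approx$ itself. That yields the forward implication of Lemma~\ref{lemma:pointwise-trace-equivalence} directly, without Lemmas~\ref{lemma:partition-associated-to-rewrite} and~\ref{lemma:normal-form-of-approxG}. You could even dispense with Lemma~\ref{lemma:partition-associated-to-rewrite} in the present proof, by inducting on derivations. What the paper's route buys in return is a more operational picture: on each block it exhibits the explicit chain of one-step trace equivalences induced by the rewrites, with length bounded by that of the zig-zag.

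Finally, your aside that $|A| \le \kappa$ ``can be assumed by rank'' is right but deserves one line, since the paper passes over it. Factor $t = t' \bind \lambda i.\,\return f(i)$ through a small $I$, so that $t \bind u = t' \bind (u \circ f)$ and $t \rskip u(a) = t' \rskip u(a)$. Then use $[t \mapsto a] = \bigvee_{i \in f^{-1}(a)} [t' \mapsto i]$, which follows from \eqref{eqn:LB-bind}.
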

  \begin{proof}
    A derivation of $x \leftrightsquigarrow^\omega y$ is a
    (composable) chain of either $\rightsquigarrow$ or
    $\leftsquigarrow$ derivations, e.g. $x = x_0 \rightsquigarrow
    x_1 \leftsquigarrow x_2 \rightsquigarrow x_3 \rightsquigarrow
    x_4 \leftsquigarrow \ldots \rightsquigarrow x_k = y$ (arrow
    directions non-indicative). Each such derivation $x_i
    \rightsquigarrow x_{i+1}$, by lemma
    \ref{lemma:partition-associated-to-rewrite}, can be rewritten
    with exactly one \textsc{cong} rule over an associated
    partition $R$. This means that the derivation looks like
    \begin{mathpar}
      \inferrule*{\vdots}{x_i = R(h) \rightsquigarrow R(h') = x_{i + 1}}
    \end{mathpar}
    where for a unique $b_0 \in R$, we have $h_{b_0} = t \bind u$
    and $h'_{b_0} = P^{(t)}(\lambda a.t \rskip u(a))$ and for $b
    \neq b_0 \in R$, we have $h_b = h'_b$. Associate to $x_i$ the
    partition $P^{\leftarrow}_i := R$, and to $x_{i+1}$ the partition
    \[P^{\rightarrow}_{i + 1} := R \bind \lambda b.
      \begin{cases}
        P^{(t)} & \text{if } b = b_0 \\
        \set{\top} & \text{otherwise.}
    \end{cases} \]
    For a derivation $x_i \leftsquigarrow x_{i+1}$, perform the
    opposite assignment.
    So each $x_i$ is then associated with two partitions
    $P^{\rightarrow}_i$ and $P^{\leftarrow}_i$, except for $x_0$
    and $x_k$. For these, define $P^{\rightarrow}_0 := P$ and
    $P^{\leftarrow}_k := Q$. Since there are finitely many of
    these partitions, we can take a common refinement---call this $S$.

    Consider $d \in S$. It refines a unique $b \in P$, which
    identifies the term $t_d^0 := t_b$. Now look at the first
    derivation, and suppose it is $x_0 \rightsquigarrow x_1$.
    Then $d$ refines a unique $c \in P_1^\rightarrow$. There are
    two possible cases:
    \begin{enumerate}
      \item Either $c \in P^\leftarrow_0$, in which case we
        define $t_d^1 := t_d^0$;
      \item or $c = c_0 \wedge [t \mapsto a]$ for some $c_0 \in
        P^\leftarrow_0$, in which case we know that $t_d^0$ must
        be of the form $t \bind u$. So define $t_d^1 = t \rskip u(a)$.
    \end{enumerate}
    We note that in either case, we have $t_d^0 \sim_d t_d^1$.

    The other possibility is that $x_0 \leftsquigarrow x_1$. Then
    $d$ refines a unique $c \in P_0^\leftarrow$. There are two
    possible cases:
    \begin{enumerate}
      \item Either $c \in P^\rightarrow_1$, in which case we
        define $t_d^1 := t_d^0$;
      \item or $c = c_0 \wedge [t \mapsto a]$ for some $c_0 \in
        P^\rightarrow_1$, in which case we know that $t_d^0$ must
        be of the form $t \rskip u(a)$. So define $t_d^1 = t \bind u$.
    \end{enumerate}
    Now we may repeat this process, obtaining $t_b = t_d^0 \sim_d
    t_d^1 \sim_d \ldots \sim_d t_d^k$. Here, since $d$ refines
    some $c \in Q$, we have that $t_d^k = s_c$. So we may
    conclude $t_b \sim_d s_c$. To finish the proof, we see that
    any $b \wedge c \in P \wedge Q$ is a join of its refinements
    in $S$, and since we show $t_b \sim_d s_c$ for all of its
    refinements $d$, we can conclude that $t_b \sim_{b \wedge c} s_c$.
  \end{proof}

  Finally, we prove lemma \ref{lemma:pointwise-trace-equivalence}.

  $(\implies)$ suppose $x_1 \approx x_2 \in T1[B]^{\cal{J}}$.
  Then by lemma \ref{lemma:normal-form-of-approxG}, we know that
  $x_1 = P(x) \approx P(y) = x_2$ for some partition $P \in
  \cal{J}$ and families $x, y \colon P \to T1[B]^{\cal{J}}$ such
  that for each $b \in P$, $x_b \leftrightsquigarrow^\omega y_b$.
  So by lemma
  \ref{lemma:pointwise-trace-equivalence-for-finitary-J}, we have
  $m \sim_{x_b(m) \wedge y_b(n)} n$ for every $m, n \in T1$. Now,
  recall from \ref{defn:free-BJ-sets} that $P(x)(m) := \bigvee_{b
  \in P} b \wedge x_b(m)$, so $P(x)(m) \wedge P(y)(n) =
  \bigvee_{b \in P} b \wedge x_b(m) \wedge y_b(n)$. Hence
  $P(x)(m) \wedge P(y)(n) \leq \deno{m \sim n}$ iff for all $b
  \in P$, $b \wedge x_b(m) \wedge y_b(n) \leq \deno{m \sim n}$,
  which we have.

  $(\impliedby)$ Suppose $m \sim_{x_1(m) \wedge x_2(n)} n$ for
  each $m, n \in T1$.
  Let $P = \set{x_1(m) \wedge x_2(n) | m, n \in T1}^-$ be the
  common refinement
  of $x_1$ and $x_2$. Abusing notation, we will write families indexed by
  elements of $P$ as indexed by pairs $(m, n)$, for example
  $\lambda (m, n).x_{(m, n)}$.
  Then $x_1 = P(\lambda (m, n).
  m)$ and $x_2 = P(\lambda (m, n). n)$, and so by
  \textsc{cong-$P$} it suffices to prove
  $b(m, n) \approx n$ for each $m, n \in T1$ and $b \in P$ with
  $m \sim_b n$.

  Consider first the special case where $b \leq \deno{m \sim_1 n}$. Then
  \[ \left\{ [t \mapsto a] ~\middle|~
      \begin{matrix}
        A \in \Set, |A| \leq \kappa, t : TA, u, v \colon A \to T1,
        a \in A, \\
        u(a) = v(a), m = t \bind u, n = t \bind v
  \end{matrix}\right\} \cup \set{\neg b} \]
  is an open cover, so there is a refining partition $P$. We can
  further refine
  this partition to $Q = P; \set{b, \neg b}$. Now each $q \in Q$
  is either $q \leq \neg b$, or $q \leq b$ and associated with
  some $t_q \in TA$, $a_q \in A$ and families $u_q, v_q$ such
  that $q \leq [t_q \mapsto a_q]$, $u_q(a) = v_q(a)$, $m = t_q
  \bind u_q$ and $n = t_q \bind v_q$. Then we can derive
  \begin{alignat*}{2}
    b(m, n) &= Q \left(\lambda q. \left.
      \begin{cases}
        m & q \leq b \\
        n & q \leq \neg b \\
  \end{cases}\right\}\right) \\
  &= Q \left(\lambda q. \left.
    \begin{cases}
      t_q \bind u_q & q \leq b \\
      n & q \leq \neg b \\
\end{cases}\right\}\right) \\
&\approx Q \left(\lambda q. \left.
  \begin{cases}
    P^{(t_q)}(\lambda a. t_q \rskip u_q(a)) & q \leq b \\
    n & q \leq \neg b \\
\end{cases}\right\}\right) \qquad && \text{ by definition of } \approx \\
&= Q \left(\lambda q. \left.
\begin{cases}
  t_q \rskip u_q(a_q) & q \leq b \\
  n & q \leq \neg b \\
\end{cases}\right\}\right) \qquad && \text{ since } q \leq [t_q \mapsto a_q] \\
&= Q \left(\lambda q. \left.
\begin{cases}
t_q \rskip v_q(a_q) & q \leq b \\
n & q \leq \neg b \\
\end{cases}\right\}\right) \\
&= b(n, n) = n \qquad && \text{ by similar reasoning}
\end{alignat*}

Now, consider the general case: by ultraparacompactness, it suffices
to consider when $b \leq \deno{m \sim_k n}$ for each $k \in \bb{N}$.
Then $\set{\textstyle{\bigwedge^{k-1}_{i = 1}} \deno{m_i \sim_1
m_{i+1}} | m_1 = m, m_2 \ldots m_{k - 1} \in T1, m_k = n} \cup \set{\neg b}$
is refinable by a partition $Q$ such that each $q \in P$ is either $q
\leq \neg b$ or $q \leq b$ and $q \leq  \textstyle{\bigwedge^{k-1}_{i
= 1}} \deno{m_i \sim_1 m_{i + 1}}$ for some $\set{m_i}_{i \leq k}$.
Then we can prove $q(m_i, n) \approx q(m_{i+1}, n)$ by similar
reasoning as the previous paragraph, so by transitivity of $\approx$
we have $q(m, n) \approx q(n, n) = n$. Then we finally finish the proof with
the following equational reasoning:
\begin{alignat*}{2}
b(m, n) &= Q \left(\lambda q. \left.
\begin{cases}
m & q \leq b \\
n & q \leq \neg b \\
\end{cases}\right\}\right) \\
&= Q \left(\lambda q. \left.
\begin{cases}
q(m, n) & q \leq b \\
n & q \leq \neg b \\
\end{cases}\right\}\right) \\
&\approx Q \left(\lambda q. \left.
\begin{cases}
q(n, n) & q \leq b \\
n & q \leq \neg b \\
\end{cases}\right\}\right) \\
& = b(n, n) = n.
\end{alignat*}

\subsection{Proof that definition
\ref{defn:locale-of-transitions} corresponds to definition
\ref{defn:sheaf-of-transitions}}
\label{subsec:proof-of-locale-of-transitions}

Suppose $w \colon T1 \to \nm{LB}_0T$ is a function respecting
trace equivalence as in \ref{defn:locale-of-transitions}. Then
we need to show $w(t \bind u) = P^{(t)}(\lambda [t \mapsto a].
t \rskip u(a))$, which we have by
\begin{alignat*}{2}
w(t \bind u) & = P^{(t)}(\lambda [t \mapsto a]. w(t \bind u))
&& \qquad \text{\eqref{eqn:BJ-set-axioms}} \\
& = P^{(t)}(\lambda [t \mapsto a]. [t \mapsto a](w(t \bind
u), w(t \rskip u(a))))
&& \qquad \text{\eqref{eqn:BJ-set-axioms}} \\
& = P^{(t)}(\lambda [t \mapsto a]. w(t \rskip u(a)))
&& \qquad \text{(*)}
\end{alignat*}
where (*) follows because $w$ respects trace equivalence:
\begin{alignat*}{2}
& t \bind u \sim_{[t \mapsto a]} t \rskip u(a) \\
\implies & w(t \bind u) \equiv_{[t \mapsto a]} w(t \rskip u(a)) \\
\iff  & [t \mapsto a](w(t \bind u), w(t \rskip u(a))) = w(t \rskip u(a))
\end{alignat*}

On the other hand, if $\tilde{w} \colon F_T \to \nm{LB}_0T$ is
a $B_{\cal{J}}$-set homomorphism, then we need to show
$\tilde{w}$ restricts to a function $w \colon T1 \to
\nm{LB}_0T$ which respects trace equivalence. Consider then two
trace equivalent terms $m \sim_b n \in T1$. Then we have
$\tilde{w}(m) \equiv_b \tilde{w}_n$ since
\[ b(\tilde{w}(m), \tilde{w}(n)) = \tilde{w}(b(m, n)) = \tilde{w}(n) \]
where the last equality follows because $b(m, n) = n \iff m
\sim_b n$ by lemma \ref{lemma:pointwise-trace-equivalence}.

\subsection{Proof of proposition \ref{prop:germs-of-LB1T}}

By proposition \ref{prop:germs-of-local-homeo}, we know that
$\pt(\nm{LB}_1T) \cong \Sigma_{\beta \in \pt\nm{LB}_0T}
F_T/_{\equiv_\beta}$. But we know $\pt\nm{LB}_0T \cong
\bb{B}_0T$, so the $\beta$ really are just admissible behaviours.
Next, observe that every $x \in F_T$ can be expressed in the form
$x = P(\lambda b. m_b)$, and hence $x \equiv_b m_b$ for the $b
\in P$ with $\beta \in b$. Hence, we have $F_T/_{\equiv_b} \cong
T1/_{\equiv_\beta} \cong T1/_{\sim_\beta}$ over each $\beta$.
Therefore $\pt(\nm{LB}_1T) \cong \Sigma_{\beta \in \bb{B}_0T}
T_1/_{\sim_\beta} = \bb{B}_1T$.

\subsection{Proof of lemma \ref{lemma:pullback-with-LB1T}}

The following decomposition lemma, analogous to lemma
\ref{lemma:basis-of-LB1T}, will come in handy.

\begin{lemma} \label{lemma:basis-of-pullback-with-LB1T}
Every $h \colon T1 \to L$ with the conditions of this lemma
is of the form $h = \bigvee_{m} m^* \wedge \nm{const}_{h(m)}$
where $m^* := \lambda n. f\inv\deno{m \sim n}$.
\end{lemma}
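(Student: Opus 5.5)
We follow the proof of lemma \ref{lemma:basis-of-etale-space}, with $\hat{m}$ replaced by $m^*$ and the ambient frame $\cal{O}(\nm{LB}_0T)$ replaced by $\cal{O}L$, pulled back along $f\inv$. Throughout, $h$ satisfies the condition of lemma \ref{lemma:pullback-with-LB1T}: whenever $m_1 \sim_b m_2$ we have $h(m_1) \wedge f\inv b = h(m_2) \wedge f\inv b$.

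The first step is to check that each $m^*$ satisfies this same condition, so that $m^* \wedge \nm{const}_{h(m)}$ lies in the poset under consideration. Since that poset is closed under pointwise joins and finite meets, the join on the right-hand side is then a legitimate element. Constants trivially satisfy the condition. For $m^*$, suppose $n_1 \sim_b n_2$, so that $b \le \deno{n_1 \sim n_2}$. Using symmetry and transitivity of $\deno{- \sim -}$ (the equivalence-relation lemma for trace equivalence), we get
\[ b \wedge \deno{m \sim n_1} = b \wedge \deno{m \sim n_2}. \]
Applying the frame homomorphism $f\inv$ to both sides gives $f\inv b \wedge m^*(n_1) = f\inv b \wedge m^*(n_2)$, which is the required condition.

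The second step establishes $h \le \bigvee_m m^* \wedge \nm{const}_{h(m)}$ pointwise. At any argument $n$, take $m = n$. By reflexivity $\deno{n \sim n} = \top$, so $n^*(n) = f\inv \top = \top$, and the $m = n$ summand evaluated at $n$ is exactly $h(n)$.

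The third step establishes the reverse inequality: for all $m, n$ we need $f\inv\deno{m \sim n} \wedge h(m) \le h(n)$. This is the main obstacle, because $\deno{m \sim n}$ is an arbitrary join rather than a complemented open, whereas the hypothesis on $h$ is only stated for complemented $b$. We handle it in the same way as the analogous step of lemma \ref{lemma:basis-of-etale-space}. Since $\nm{LB}_0T$ is zero-dimensional (indeed ultraparacompact by proposition \ref{prop:LB0T-ultraparacompact}), we can write $\deno{m \sim n} = \bigvee \set{b \in B | b \le \deno{m \sim n}}$, and each such $b$ satisfies $m \sim_b n$. Because $f\inv$ preserves joins, it suffices to show $f\inv b \wedge h(m) \le h(n)$ for each such $b$. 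By hypothesis, $f\inv b \wedge h(m) = f\inv b \wedge h(n) \le h(n)$, which completes this step. Taking the join over $m$ then gives the decomposition $h = \bigvee_m m^* \wedge \nm{const}_{h(m)}$.
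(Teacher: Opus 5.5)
Your proof is correct and follows the paper's argument essentially step for step. The easy inequality comes from taking $m = n$ and using reflexivity; the reverse inequality comes from writing $\deno{m \sim n}$ as a join of complemented opens below it and applying the hypothesis on $h$; and you check that constants and $m^*$ satisfy the pullback condition. Your explicit use of symmetry and transitivity of $\deno{- \sim -}$ to obtain $b \wedge \deno{m \sim n_1} = b \wedge \deno{m \sim n_2}$ fills in a step that the paper merely asserts.
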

\begin{proof}
We have to show $h(n) = \bigvee_{m} f\inv\deno{m \sim n}
\wedge h(m)$. As in the proof of lemma
\ref{lemma:basis-of-LB1T}, the left-to-right inequality is
easy, so we focus on the right-to-left inequality for which
we have to show $f\inv\deno{m \sim n} \wedge h(m) \leq h(n)$.
By ultraparacompactness, $f\inv\deno{m \sim n} = \bigvee
\set{f\inv b | b \leq \deno{m \sim n}, b \in B}$ so it
suffices to prove for each complemented $b \leq \deno{m \sim
n}$ that $f\inv b \wedge h(m) \leq h(n)$, but this
immediately follows from the condition on $h$.

It is easy to see that $\nm{const}_{h(m)}$ satisfies the
condition since it is just a constant map. Meanwhile, for
$m^*$ whenever $n_1 \sim_b n_2$ we have that \[ m^*n_1 \wedge
f\inv b = f\inv\deno{m \sim n_1} \wedge f\inv b =
f\inv(\deno{m \sim n_1} \wedge b) = f\inv(\deno{m \sim n_2}
\wedge b) m^*n_2 \wedge f^*b. \]
Hence $m^*$ satisfies the condition of this lemma.
\end{proof}

The two projections $\pi_1 \colon L \times_{\nm{LB}_0T}
\nm{LB}_1T \to L$ and $\pi_2 \colon L \times_{\nm{LB}_0T}
\nm{LB}_1T \to \nm{LB}_1T$ are given by $\pi\inv_1 \colon u
\mapsto \nm{const}_u$ and $\pi\inv_2 \colon w \mapsto \lambda
m. f\inv w(m)$. Given a pullback cone $i \colon Z \to L$ and $j
\colon Z \to \nm{LB}_1T$, if a universal arrow $\braket{i, j}$
exists then it must satisfy
\[ \braket{i, j}\inv(\nm{const}_{h(m)}) = \braket{i,
j}\inv\pi\inv_1h(m) = i\inv h(m) \]
\[ \braket{i, j}\inv(m^*) = \braket{i,
j}\inv\pi\inv_2h(\hat{m}) = j\inv \hat{m} \]
But then by lemma \ref{lemma:basis-of-pullback-with-LB1T}, the
frame of opens for the pullback is generated by these opens, so
these two equations uniquely determine $\braket{i, j}$. It is
then straightforward to check that this is well-defined.

\subsection{Proof that definition
\ref{defn:localic-behaviour-category} is a localic category}
\label{subsec:proof-of-localic-behaviour-category}

It is very straightforward to check that the domain and
codomain of identity and compositions correspond to what they
should be, so we focus on the unitality and associativity. It
is easy to see from lemma \ref{lemma:pullback-with-LB1T} that
$\nm{LB}_0T \times_{\nm{LB}_0T} \nm{LB}_1T \cong \nm{LB}_1T$,
for which the unitality diagram becomes
\[
\begin{tikzcd}[ampersand replacement=\&,cramped]
{\nm{LB}_1T} \&\& {\nm{LB}_1T \times_{\nm{LB}_0T}
\nm{LB}_1T} \&\& {\nm{LB}_1T} \\
\\
\&\& {\nm{LB}_1T}
\arrow["{h \mapsto h(\return, -)}", from=1-1, to=1-3]
\arrow[equals, from=1-1, to=3-3]
\arrow["{w \mapsto w(- \rskip -)}", from=1-3, to=3-3]
\arrow["{h \mapsto h(-, \return)}"', from=1-5, to=1-3]
\arrow[equals, from=3-3, to=1-5]
\end{tikzcd}
\]
and this commutes by unitality of $\rskip$. Finally, by lemma
\ref{lemma:pullback-with-LB1T} the pullback of composable
triples $\nm{LB}_1T \times_{\nm{LB}_0T} \nm{LB}_1T
\times_{\nm{LB}_0T} \nm{LB}_1T$ can be constructed as the frame
of appropriate maps $T1 \times T1 \times T1 \to
\cal{O}\nm{LB}_0T$, for which the associativity diagram below
obviously commutes due to the associativity of $\rskip$.
\[
\begin{tikzcd}[ampersand replacement=\&,cramped]
{\nm{LB}_1T \times_{\nm{LB}_0T} \nm{LB}_1T
\times_{\nm{LB}_0T} \nm{LB}_1T} \&\& {\nm{LB}_1T
\times_{\nm{LB}_0T} \nm{LB}_1T} \\
\\
{\nm{LB}_1T \times_{\nm{LB}_0T} \nm{LB}_1T} \&\& {\nm{LB}_1T}
\arrow["{h \mapsto h(-,- \rskip -)}", from=1-1, to=1-3]
\arrow["{h \mapsto h(- \rskip -, -)}"', from=1-1, to=3-1]
\arrow["{w \mapsto w(- \rskip -)}", from=1-3, to=3-3]
\arrow["{w \mapsto w(- \rskip -)}"', from=3-1, to=3-3]
\end{tikzcd}\]

\subsection{Proof of lemma \ref{lemma:global-sections-of-B1T-is-F_T}}

Each $\beta \in \bb{B}_0T$ admits an open neighborhood
$s\inv[m_\beta | \return 1]$ where $t_\beta$ is some
representative of the equivalence class $s(\beta)$. This induces
an open cover on $\bb{B}_0T$ which is refined by a finite
partition $\set{b_i}_{i \in I}$ since $\bb{B}_0T$ is a Stone
space. It suffices to pick $m_i$ to be the $m_\beta$ of an open
$s\inv[m_\beta | \return 1]$ refined by $b_i$. The uniqueness
under trace equivalence is easy to see because for any $\beta \in
b_i \wedge b_j$, we have $[m_i]_\beta = s(\beta) = [m_j]_\beta$.
The spatiality of $\nm{LB}_0T$ then ensures this corresponds to
$m_i \sim_{b_i \wedge b_j} m_j$.

\subsection{Details to the proof of proposition
\ref{prop:functoriality-of-LB}}
\label{subsec:proof-of-functoriality-of-LB}

First we have to verify that $\nm{LB}\varphi$ is an internal
retrofunctor. For this we need to consider the pullbacks
\[
\begin{tikzcd}[ampersand replacement=\&,cramped]
{\Lambda_1} \&\& {\nm{LB}_1T} \& {\Lambda_2} \&\& {\nm{LB}_2T} \\
\\
{\nm{LB}_0S} \&\& {\nm{LB}_0T} \& {\Lambda_1} \&\& {\nm{LB}_1T}
\arrow["{\pi_2}", from=1-1, to=1-3]
\arrow["{\pi_1}"{description}, from=1-1, to=3-1]
\arrow["\lrcorner"{anchor=center, pos=0.125}, draw=none,
from=1-1, to=3-3]
\arrow["\sigma", from=1-3, to=3-3]
\arrow["{\pi_2}"', from=1-4, to=1-6]
\arrow["{\pi_1}", from=1-4, to=3-4]
\arrow["\lrcorner"{anchor=center, pos=0.125}, draw=none,
from=1-4, to=3-6]
\arrow["{\pi_1}"', from=1-6, to=3-6]
\arrow["{\nm{LB}_0\varphi}"', from=3-1, to=3-3]
\arrow["{\pi_2}", from=3-4, to=3-6]
\end{tikzcd}\]
Noting that $\Lambda_2$ is the pullback of the second square composed with
the first square, and following similar ideas to lemma
\ref{lemma:pullback-with-LB1T}, the second pullback $\Lambda_2$
can be expressed
by the frame of maps $h : T1 \times T1 \to \cal{O}\nm{LB}_0S$ such that
$m \sim_b m'$ implies $h(m, n) \wedge (\nm{LB}_0\varphi)\inv b
= h(m, n) \wedge (\nm{LB}_0\varphi)\inv b$, and $n \sim_b n'$
implies $h(m, n) \wedge (\nm{LB}_0\varphi)\inv \codeno{m}\inv b
= h(m, n') \wedge (\nm{LB}_0\varphi)\inv \codeno{m}\inv b$.

The requirements on the domain and codomain of the lift is
encoded by requiring
the following diagram to commute:
\[
\begin{tikzcd}[ampersand replacement=\&,cramped]
\&\& {\Lambda_1} \&\& {\nm{LB}_1T} \\
\\
{\nm{LB}_0S} \&\& {\nm{LB}_1S} \& {\nm{LB}_0S} \& {\nm{LB}_0T}
\arrow["{\pi_2}", from=1-3, to=1-5]
\arrow["{\pi_1}"', from=1-3, to=3-1]
\arrow["{\nm{LB}_1\varphi}"', from=1-3, to=3-3]
\arrow["\tau", from=1-5, to=3-5]
\arrow["\sigma"', from=3-3, to=3-1]
\arrow["\tau"', from=3-3, to=3-4]
\arrow["{\nm{LB}_0\varphi}"', from=3-4, to=3-5]
\end{tikzcd}\]
but this follows by a straightforward chase along the diagram.
Next, to see that identity and composition is respected, we
require the following diagrams to commute:
\[
\begin{tikzcd}[ampersand replacement=\&,cramped]
{\Lambda_2} \&\& {\Lambda_1} \&\& {\nm{LB}_0S} \\
\\
{\nm{LB}_2S} \&\& {\nm{LB}_1S}
\arrow["{\id \times \mu}", from=1-1, to=1-3]
\arrow["d"', from=1-1, to=3-1]
\arrow["{\nm{LB}_1\varphi}"', from=1-3, to=3-3]
\arrow["{\braket{\id, \iota \circ \nm{LB}_0}}", from=1-5, to=1-3]
\arrow["\iota", from=1-5, to=3-3]
\arrow["\mu"', from=3-1, to=3-3]
\end{tikzcd}\]
For the commutativity involving $\iota$, this amounts to
checking, for $w \in \cal{O}\nm{LB}_1S$, that
\[\bigvee_{m \in T1} w(\varphi(m)) \wedge
(\nm{LB}_0\varphi)\inv \deno{m \sim \return} = w(\return). \]
The proof of this is similar to the proof of lemma
\ref{lemma:basis-of-pullback-with-LB1T}. For the commutativity
involving $\mu$, the map $d$ is defined on inverse image
by $d\inv : h \mapsto h(\varphi_1(-), \varphi_1)$. Then the
commutativity of this
square amounts to checking, for $w \in \cal{O}\nm{LB}_1T$, that
$w(\varphi(-) \rskip \varphi(-)) = w(\varphi(- \rskip -))$ but
this easily follows from $\varphi$ being
a monad map.

We now also must verify that $\nm{LB}$ is functorial.
If $\varphi = \id : T \to T$ then we see that the definition of
$\nm{LB}\varphi$ is indeed the identity retrofunctor. For
composition, given
$\varphi : T \to S$ and $\psi : S \to R$, it
is obvious that $\nm{LB}_0(\varphi \circ \psi) =
\nm{LB}_0(\psi) \circ \nm{LB}_0(\varphi)$. For the action on
morphisms, the composite is given by
\[
\begin{tikzcd}[ampersand replacement=\&,cramped]
{\nm{LB}_0R \times_{\nm{LB}_0T} \nm{LB}_1T} \&\&\&
{\nm{LB}_0R \times_{\nm{LB}_0S} \nm{LB}_1S} \&\& {\nm{LB}_1R}
\arrow["{\braket{\pi_0, \nm{LB}_1\varphi \circ
(\nm{LB}_0\psi \times \id)}}", from=1-1, to=1-4]
\arrow["{\nm{LB}_1\psi}", from=1-4, to=1-6]
\end{tikzcd}\]

Let us compute the inverse image of an open set $w \in
\cal{O}(\nm{LB}_1R)$ along this map. The inverse along
$\nm{LB}_1\psi$ gives
$w \circ \psi_1$ which by lemma
\ref{lemma:basis-of-pullback-with-LB1T} can be expressed as
$\bigvee_{s \in S1} s^* \wedge \nm{const}_{w(\psi(s))}$. Now
the inverse of $\nm{const}_{w(\psi(s))}$ along the
pair of maps can be computed as the inverse of just the left
component, which again gives
$\nm{const}_{w(\psi(s))}$, but this time as an open in
$\nm{LB}_0R \times_{\nm{LB}_0T} \nm{LB}_1T$. The inverse of
$s^*$ along the pair can be computed as the inverse of
$\hat{s}$ along the right component, which gives $\lambda t \in
T1. (\nm{LB}_0\psi)\inv\deno{s \sim \varphi(t)}$. So combining
the two, in the end we get an open of $\nm{LB}_0 R
\times_{\nm{LB}_0T} \nm{LB}_1T$ defined by
\[ \lambda t \in T1. \bigvee_{s \in S1} (\nm{LB}_0\psi)\inv
\deno{s \sim \varphi(t)} \wedge w(\psi(s)) \]
and we have to show this is equal to $(\nm{LB}_1(\psi \circ
\varphi))\inv w = w \circ \psi_1 \circ \phi_1$. But this is
again the same type of reasoning as in the proof of lemma
\ref{lemma:basis-of-pullback-with-LB1T}.

\subsection{Details to the proof of theorem
\ref{thm:stone-adjunction-loc}}
\label{subsec:proof-of-stone-adjunction-loc}
Let us write $\Gamma$ as shorthand for $\Gamma\nm{LC}$.
It suffices to prove, for any retrofunctor $F \colon \nm{LC} \to
\nm{LB}T$, there is a unique monad morphism $\varphi \colon T
\to \Gamma$ such that $\nm{LB}\varphi \circ \varepsilon = F$.
For this, we show that this condition uniquely determines
$\varphi$. So consider $t \in TA$ and  observe that
$\varphi(t)\inv \braket{a \mapsto w} = \varphi(t)\inv\braket{a
\mapsto \top} \wedge \varphi(t \rskip \return)\inv w$. We then show that
\begin{enumerate*}
\item $F_0\inv[t \mapsto a] = \varphi(t)\inv\braket{a \mapsto \top}$; and
\item $(F_1\inv w)(t \rskip \return) = \varphi(t \rskip \return)\inv w$.
\end{enumerate*}
This fully determines $\varphi(t)\inv \braket{a \mapsto w}$ as
$F_0\inv[t \mapsto a] \wedge (F_1\inv w)(t \rskip \return)$.

Now, (i) is a straightforward unfolding of definitions on the
equation $F_0\inv = \varepsilon_0\inv (\nm{LB}_0\varphi)\inv$,
so we leave this as an exercise to the reader (if the reader is
still reading). For (ii), we have
\[
\begin{tikzcd}[ampersand replacement=\&,cramped]
{F_1 = \nm{LC}_0 \times_{\nm{LB}_0T} \nm{LB}_1T} \&\&\&
{\nm{LC}_0 \times_{\nm{LB}_0\Gamma} \nm{LB}_1\Gamma} \& {\nm{LC}_1}
\arrow["{\braket{\pi_0, \nm{LB}_1\varphi \circ
(\varepsilon_0 \times \id)}}", from=1-1, to=1-4]
\arrow["{\varepsilon_1}", from=1-4, to=1-5]
\end{tikzcd}\]
Let us compute the inverse image of $w \in \nm{LC}_1$ along
this map. First, by lemma
\ref{lemma:basis-of-pullback-with-LB1T} we can decompose
$\varepsilon_1\inv w = \bigvee_{m \in \Gamma 1}
\nm{const}_{m\inv w} \wedge m^*$. Then the the inverse image of
$\nm{const}_{m\inv w}$ along the pair
is given simply as $\nm{const}_{m\inv w}$ while the inverse
image of $m^*$ is
$\lambda n \in T1. \varepsilon_0\inv \deno{\varphi(n) \sim m}$.
Therefore, we arrive at the result
\[ F_1\inv w = \bigvee_{m} \lambda n \in T1. m\inv w \wedge
\varepsilon_0\inv \deno{\varphi(n) \sim m} \]
Hence, if we let $n := t \rskip
\return$, then we have $F_1\inv w(n) = \bigvee_{m \in \Gamma1}
m\inv w \wedge
\varepsilon\inv\deno{\varphi(n) \sim m}$. Now, it is easy to
see that $\varphi(n)\inv w \leq
F_1\inv w(n)$ by taking $m := \varphi(n)$. On the other hand,
for $\varphi(n)\inv w \geq F_1\inv w(n)$ we have to reason in
terms of witnesses of $\deno{\varphi(n) \sim m}$. For
simplicity, we simply consider a $1$-step witness $[h \mapsto
b]$ for $h \in \Gamma B, b \in B$ such that $\varphi(n) = h
\bind u$ and $m = h \bind v$, with $u(b) = v(b)$. Then one can see that
\begin{alignat*}{2}
m\inv w \wedge \varepsilon_0\inv [h \mapsto b]
&= (h \bind v)\inv w \wedge h\inv\braket{b \mapsto \top} \\
&= (h \rskip v(b))\inv w \wedge h\inv\braket{b \mapsto \top} \\
&= (h \bind u)\inv w \wedge h \inv\braket{b \mapsto \top} \\
&\leq \varphi(n)\inv w
\end{alignat*}
This proves (ii) and hence we conclude that $\varphi$ is
uniquely determined.

\subsection{Proof of proposition
\ref{prop:global-sections-monad-is-haffun}}
\label{subsec:proof-of-global-sections-monad-is-haffun}

\begin{proof}
A computation reveals that $(h \rskip \return)\inv w =
\bigvee_{a} h\inv\braket{a \mapsto w}$, while $\return\inv w =
\iota\inv w$. So these two are equal iff $h \rskip \return a =
\return a$. Notice that we do not use the second condition of
being hyperaffine, because it is automatically true for any $h$
satisfying the first condition (also known as \emph{affine}).
Now, suppose that such a hyperaffine $\overline{s}$ for a section
$s \in \Gamma\nm{LC} (A)$. Then a straightforward computation
(using the characterization of hyperaffines) reveals that the
condition $s = \overline{s} \bind \lambda a. s \rskip \return a$
implies $\overline{s}\inv \braket{a \mapsto w} = s\inv\braket{a
\mapsto \top} \wedge \iota\inv w$ which determines
$\overline{s}\inv$ as a well-defined frame homomorphism. See
below for the computations.
\end{proof}

(affine characterization) Applying the definition of $\bind$,
we find that $(h \rskip \return)\inv w = \bigvee_a
h\inv\braket{a \mapsto w'}$ where $w' = \bigvee \set{v_1 \wedge
\tau\inv\iota\inv v_2 | v_1 \times v_2 \leq \mu\inv w}$. But
notice that $w'$ is the inverse image of $w$ along
\[
\begin{tikzcd}[ampersand replacement=\&,cramped]
{\nm{LC}_1} \&\& {\nm{LC}_1 \times_{\nm{LC}_0} \nm{LC}_0}
\& {\nm{LC}_1 \times_{\nm{LC}_0} \nm{LC}_0} \& {\nm{LC}_1}
\arrow["{\braket{\id, \tau}}", from=1-1, to=1-3]
\arrow["{\id \times \iota}", from=1-3, to=1-4]
\arrow["{\pi_{\nm{LC}_1}}"', curve={height=18pt}, from=1-3, to=1-5]
\arrow["\mu", from=1-4, to=1-5]
\end{tikzcd}\]
but this inverse image is equally well computed as $w \wedge
\tau\inv\top = w$, and hence $w' = w$.

(determination of $\overline{s}$) Since $s = \overline{s}\inv
\bind \lambda a. s \rskip \return a$, we can compute
$s\inv\braket{a \mapsto \top}$ as
\[\overline{s}\inv\Braket{a \mapsto \bigvee \set{v_1 \wedge
\tau\inv s\inv \bigvee_{a' \in A} | v_1, v_2 \in \cal{O}\nm{LC}_1 }} \]
Next, we know $\iota\inv w = \bigvee_{a'' \in A}
\overline{s}\inv\braket{a'' \mapsto w}$ so
\[\iota\inv w \wedge s\inv\braket{a \mapsto \top} =
\overline{s}\inv\Braket{a \mapsto w \wedge \bigvee \set{ v_1
\wedge \tau\inv s\inv \bigvee_{a' \in A}\braket{a' \mapsto v_2}
| v_1, v_2 \in \cal{O}\nm{LC}_1 }}\]
and now we have $w \leq \bigvee \set{ v_1 \wedge \tau\inv s\inv
\bigvee_{a' \in A}\braket{a' \mapsto v_2} | v_1, v_2 \in
\cal{O}\nm{LC}_1 }$ by taking $v_1 = w$ and $v_2 = \top$, so
this simplifies to $\overline{s}\inv \braket{a \mapsto w}$.

\subsection{Proof of lemma \ref{lemma:H2-of-hyperaffine-unary-monad}}

The Grothendieck Boolean algebra structure $H2_{\cal{J}}$ is
established in \cite{Garner2024-yc}. The Boolean algebra
structure on $H2$ is given by $\top = \return 1$, $h_1 \wedge h_2
= h_1 \bind (0 \mapsto \return 0; 1 \mapsto h_2)$ and $\neg h = h
\bind (0 \mapsto \return 1; 1 \mapsto \return 0)$, and from this
it is easy to see that $H2$ satisfies \eqref{eqn:LB-bot},
\eqref{eqn:LB-return}, $[t \rskip \return a \mapsto a']$, and
$\bigvee_{a \in A} [t \mapsto a] = \top$ for finite sets $A$.
Then the only missing axioms are $\bigvee_{a \in A} [t \mapsto a]
= \top$ for infinite $A$, but these are precisely the partitions
in $\cal{J}$. Hence $H2_{\cal{J}}$ generates $\cal{O}(\nm{LB}_0T)$.

The inverse to $\delta \colon T_1 \to F_T$ is witnessed by
$\delta\inv \colon x \mapsto h \bind \lambda b. x\inv b$, where
$h \in HP$ is a hyperaffine realizing the partition $P^{(h)}$
induced by $x \colon T1 \to H2$. On one hand we have
$\delta\delta\inv(x) = \delta(h \bind \lambda b. x\inv b) =
P(\delta(h \rskip x\inv b)) = P(\delta(x\inv b)) = x$. On the
other hand, $\delta\inv\delta(t) = \return \top \bind \lambda b.
\delta(t)\inv b = \delta(t)\inv \top = t$.

\subsection{Proof of proposition
\ref{prop:left-fixpoints-are-haffun}}
\label{subsec:proof-of-left-fixpoints-are-haffun}

($\impliedby$) Suppose now the unit map is an isomorphism. Then
the hyperaffine-unary factorization of $\Gamma\nm{LB}T$
(proposition \ref{prop:global-sections-monad-is-haffun}) must
transfer along the unit map onto $T$.

($\implies$) For the converse direction, we make use of lemma
\ref{lemma:H2-of-hyperaffine-unary-monad}, which basically says
any global section $s \in \Gamma\nm{LB}T A$ identifies a
hyperaffine $h \in HA$
and a family $u : A \to T1$ of unary computations, and the composite
$h \bind \lambda a. u(a) \rskip \return a$ induces the section
$s$. What follows is the proof in more detail.

Assume $T$ is
hyperaffine-unary, we have to prove $\eta_T$ is an isomorphism,
i.e., bijective at each level. To see that $\eta_T$ is surjective,
consider then a section $s \in \Gamma\nm{LB}TA$. We can always
factor $s = h \bind \lambda i. \eta(m_i) \rskip \return
f(i)$ for some hyperaffine section $h \in HC$,
some family of $T1$-terms $\set{m_i}_{i \in I}$ and function $f
\colon I \to A$. So it suffices to show that $h$ is in the image
of $\eta_T$. The data of a hyperaffine section $h$ is completely
determined by the partition $\set{h\inv\braket{i \mapsto \top} |
i \in I}^- \subseteq \nm{LB}_0T$, but now because $T$ is
hyperaffine-unary, by lemma
\ref{lemma:H2-of-hyperaffine-unary-monad} such a partition has to
be of the form $\set{[h' \mapsto j] | j \in J }$ for some $h' \in
HJ \subseteq TJ$ and $J = \set{i \in I | h\inv\braket{i \mapsto
\top} \neq \bot} \subseteq I$. We claim that $\eta_T(h') = h$,
and this follows by unfolding definitions:
\[\eta_T(h')\inv\braket{i \mapsto w} = [h' \mapsto i] \wedge w(h'
\rskip \return) = h\inv\braket{i \mapsto \top} \wedge w(\return)
= h\inv\braket{i \mapsto w} \]
Finally, to see that $\eta_T$ is injective, consider $t_1 \neq
t_2 \in TA$. Then again by proposition
\ref{prop:global-sections-monad-is-haffun}, both admit
decompositions $t_1 = \overline{t_1} \bind \lambda a. m_1 \rskip
\return a$ and $t_2 = \overline{t_2} \bind \lambda a. m_2 \rskip
\return a$, so if they are not equal it must be that either
$\overline{t_1} \neq \overline{t_2} \in HA$ or $m_1 \neq m_2 \in
T1$. If the former, then by hyperaffineness of $h_1$ and $h_2$:
\begin{alignat*}{2}
h_1 &= h_1 \bind \lambda a. h_1 \bind \lambda a'. \return a
\text{ if } a = a' \text{ else } h_2 & \text{($h_1$ is h.aff.)} \\
&= h_1 \bind \lambda a. [h_1 \mapsto a] \bind (\return a, h_2)
& \text{(definition of $[h_1 \mapsto a]$)} \\
&= h_1 \bind \lambda a. [h_2 \mapsto a] \bind (h_2 \rskip
\return a, h_2) & \text{($[h_1 \mapsto a] = [h_2 \mapsto a]$
and $h_2$ is h.aff.)} \\
&= h_1 \bind \lambda a. h_2 \bind \lambda a'. h_2 \bind \lambda
a''. \return a \text{ if } a = a'' \text{ else } a'' &
\text{(definition of $[h_2 \mapsto a]$)} \\
&= h_1 \bind \lambda a. h_2 \bind \lambda a'. h_2 \bind \lambda
a''. \return a'' \text{ if } a' = a'' \text{ else } \ldots
\qquad & \text{(the $\ldots$ does not matter)} \\
&= (h_1 \bind \lambda a. h_2) = (h_1 \rskip h_2) = h_2 &
\text{($h_2$ is h.aff.)}
\end{alignat*}
If the latter is true, then $\eta(m_1), \eta(m_2) \colon
\nm{LB}_0T \to \nm{LB}_1T$, viewed as maps of local
homeomorphisms, correspond to maps of sheaves $\delta(m_1),
\delta(m_2) \colon 1 \to F_T$, and from the isomorphism $F_T
\cong T1$ of lemma \ref{lemma:H2-of-hyperaffine-unary-monad} we
know these cannot be equal. It can then be verified that if we
have two hyperaffines $h_1 \neq h_2 \in \Gamma\nm{LB}TA$ or unary
sections $s_1 \neq s_2 \in \Gamma\nm{LB}1$ then $h_1 \bind
\lambda a. s_1 \return a \neq h_2 \bind \lambda a. s_2 \return
a$, and hence $\eta(t_1) \neq \eta(t_2)$.

\subsection{Proof of proposition
\ref{prop:right-fixpoints-are-ample}}
\label{subsec:proof-of-right-fixpoints-are-ample}

The following lemma come in handy.

\begin{lemma} \label{lemma:retrofunctorial-iso-iff-functorial-iso}
Two internal categories are retrofunctorially isomorphic iff they
are functorially isomorphic.
\end{lemma}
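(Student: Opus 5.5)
The plan is to show directly that a retrofunctor which is invertible in $\LocRetro$ determines an internal functor with the same object part that is invertible in the category of internal categories and functors, and conversely. The guiding observation is that in both notions the object component is a map $F_0 \colon \bb{C}_0 \to \bb{D}_0$, and an isomorphism in either category has $F_0$ an isomorphism of locales. Once $F_0$ is invertible, the pullback $\bb{C}_0 \times_{\bb{D}_0} \bb{D}_1$ along $F_0$ and $\sigma_{\bb{D}}$ is canonically isomorphic to $\bb{D}_1$ via the projection $\pi_2$. So a retrofunctor component $F_1 \colon \bb{C}_0 \times_{\bb{D}_0} \bb{D}_1 \to \bb{C}_1$ becomes a map $\bb{D}_1 \to \bb{C}_1$ lying over $F_0^{-1}$ on sources.

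\textbf{($\Rightarrow$)} Suppose $F \colon \bb{C} \rightsquigarrow \bb{D}$ has an inverse retrofunctor $G$. First I would check that the composite of retrofunctors has object part $G_0 F_0$. It follows that $G_0 = F_0^{-1}$.

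Next, set $\Phi := F_1 \circ \langle F_0^{-1}\sigma_{\bb{D}}, \id\rangle \colon \bb{D}_1 \to \bb{C}_1$, and define $\Psi \colon \bb{C}_1 \to \bb{D}_1$ analogously from $G_1$. The composite retrofunctor $G \circ F = \id$ has morphism component which, after the identifications above, is $\Phi\circ\Psi$. Hence $\Phi\circ\Psi = \id$ and symmetrically $\Psi\circ\Phi=\id$.

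It then remains to check that $(F_0,\Psi)$ is an internal functor $\bb{C}\to\bb{D}$:
\begin{itemize}
\item preservation of sources follows from the retrofunctor axiom that lifts start at $c$;
\item the codomain condition $F_0\tau_{\bb{C}} F_1 = \tau_{\bb{D}}\pi_2$ gives preservation of targets for $\Phi$, and hence for its inverse $\Psi$;
\item the identity and composition axioms for $F_1$, namely $F_1(c,\id)=\id$ and $F_1(c,g\circ f) = F_1(c',g)\circ F_1(c,f)$, say exactly that $\Phi$ preserves identities and composition.
\end{itemize}
Since $\Phi$ is a bijection on morphisms commuting with all structure maps, $\Psi$ does too. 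Thus $(F_0,\Psi)$ is a functor with inverse functor $(F_0^{-1},\Phi)$.

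\textbf{($\Leftarrow$)} Given an internal functor isomorphism $(K_0,K_1)$, define a retrofunctor with object part $K_0$ and lifting map $K_1^{-1}\circ\pi_2 \colon \bb{C}_0\times_{\bb{D}_0}\bb{D}_1 \to \bb{C}_1$. The same verifications, run in reverse, show that this is a retrofunctor and that its inverse is built the same way from $K^{-1}$.

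Since everything is expressed by commuting diagrams of locale maps and pullbacks, the argument works internally in any category with pullbacks, and in particular in $\Loc$ and $\Top$. The step I expect to need the most care is the internal bookkeeping showing that the composition axiom of a retrofunctor, stated on $\bb{C}_0\times_{\bb{D}_0}\bb{D}_2$, translates under the isomorphism $\bb{C}_0\times_{\bb{D}_0}\bb{D}_2\cong\bb{D}_2$ into $\Phi\circ\mu_{\bb{D}} = \mu_{\bb{C}}\circ(\Phi\times\Phi)$. I would handle this by writing the reindexing map $\bb{D}_2\to\bb{C}_0\times_{\bb{D}_0}\bb{D}_2$ explicitly as $\langle F_0^{-1}\sigma\pi_1, \id\rangle$ and chasing generalized elements (Yoneda), which reduces it to the set-level computation above.
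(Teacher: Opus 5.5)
Your proposal is correct and follows the paper's proof. The constructions are the same in both directions: for a functorial isomorphism the paper takes the lifting map $F_1^{-1}\circ\pi$, and conversely it builds the functor with morphism part $(G^{-1})_1\circ\langle G_0\sigma,\id\rangle$ from the inverse retrofunctor, which is exactly your $\Psi$. The paper leaves the verifications to the reader, and your argument supplies them, notably that $G\circ F=\id$ forces $\Phi\circ\Psi=\id$ and that the retrofunctor axioms for $F_1$ make $(F_0^{-1},\Phi)$ an internal functor.
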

\begin{proof}
Let $\nm{LC}$ and $\nm{LD}$ be internal categories. Given a
functorial isomorphism $F \colon \nm{LC} \to \nm{LD}$ with
inverse $F\inv$, define the retrofunctor $G$ by $G_0 := F_0$ and
$G_1 := F\inv_1 \circ \pi \colon \nm{LC}_0 \times_{\nm{LD}_0}
\nm{LD}_1 \to \nm{LD}_1 \to \nm{LC}_1$, and vice versa for
$G\inv$. On the other hand, given retrofunctors $G$ and $G\inv$,
define the functor $F$ by $F_0 := G_0$ and $F_1 := G\inv_1
\circ\braket{G_0 \sigma, \nm{id}}$ where $\braket{G_0 \sigma,
\nm{id}} \colon \nm{LC}_1 \to \nm{LD}_0 \times_{\nm{LC}_0}
\nm{LC}_1$. Define the inverse $F\inv$ similarly. We leave it to
the reader to verify the necessary equations.
\end{proof}

For brevity, we omit the subscript $\nm{LC}$ from $\varepsilon$,
and let us also write $\nm{LB}_i := \nm{LB}_i\Gamma\nm{LC}$ for
$i = 0, 1$.

$(\implies)$ By lemma
\ref{lemma:retrofunctorial-iso-iff-functorial-iso}, we get an
isomorphism $\nm{LB}_0\Gamma\nm{LC} \cong \nm{LC}_0$ so
$\nm{LC}_0$ is also ultraparacompact, and also we get an
isomorphism $\nm{LB}_1\Gamma\nm{LC} \cong \nm{LC}_1$ commuting
with the source maps, so the source map of $\nm{LC}$ is also a
local homeomorphism.

$(\impliedby)$ By lemma
\ref{lemma:retrofunctorial-iso-iff-functorial-iso} it suffices to
prove that the counit $\varepsilon$ partakes in a functorial
isomorphism. The action on objects $\varepsilon_0 \colon
\nm{LC}_0 \to \nm{LB}_0$ has inverse $\revepsilon_0$ given on
generating clopens $b \in \frk{B}\nm{LC}_0$ by $\revepsilon_0\inv
\colon b \mapsto [b^+]$ where $b^+ \colon \nm{LC}_0 \to 2 \cdot
\nm{LC}_1$ given by $(b^+)\inv \colon \braket{1 \mapsto w}
\mapsto b \wedge \iota_{\nm{LC}}\inv w$ and $(b^+)\inv \colon
\braket{0 \mapsto w} \mapsto \neg b \wedge \iota_{\nm{LC}}\inv
w$. This map is well-defined because it realizes all partitions
of $\nm{LC}_0$: any partition $P$ manifests as a section $P^+ \in
\Gamma\nm{LC}(P)$ defined analogously to $b^+$, and hence we have $\top =
\bigvee_{b \in P} [P^+ \mapsto b] = \bigvee_{b \in P} [b^+]$. It
is straightforward to see that $\varepsilon\inv_0
\revepsilon\inv_0 = \id$. On the other hand, to see that
$\revepsilon\inv_0 \varepsilon\inv_0 = \id$, consider a
generating open $[s]$ where $s \in \Gamma\nm{LC}2$. By
proposition \ref{prop:global-sections-monad-is-haffun} we have
its corresponding hyperaffine $\overline{s}$, and it is easy to
see that $[\overline{s}] = [s]$. Then, it is a matter of checking
that $(\varepsilon\inv_0[s])^+ = \overline{s}$.

This gives us an internal functor $\cal{E}$ with $\cal{E}_0 :=
\revepsilon_0$ and $\cal{E}_1 := %
\begin{tikzcd}[ampersand replacement=\&,cramped]
{\nm{LB}_1} \&\& {\nm{LC}_0 \times_{\nm{LB}_0} \nm{LB}_1} \& {\nm{LC}_1}
\arrow["{\braket{\revepsilon_0 \sigma, \id}}", from=1-1, to=1-3]
\arrow["{\varepsilon_1}", from=1-3, to=1-4]
\end{tikzcd}$
which more explicitly can be computed as $\cal{E}_1\inv w =
\lambda m. \revepsilon_0\inv m\inv w$. By proposition
\ref{prop:global-sections-monad-is-haffun} and lemma
\ref{lemma:H2-of-hyperaffine-unary-monad}, the local homeomorphism
$\sigma_{\nm{LB}}$ is induced by the $B_\cal{J}$-set
$\Gamma\nm{LC}1$, but this just corresponds to the sheaf
induced by the local homeomorphism $\sigma_{\nm{LC}}$, so we must
have $\sigma_{\nm{LC}} \cong \sigma_{\nm{LB}}$. The map $\cal{E}_1$
is the canonical map witnessing this isomorphism, up to a change of
base along the isomorphism $\revepsilon_0$. We leave it
to the reader to verify functoriality.

\subsection{Proof of theorem \ref{thm:stone-duality}}

It follows from proposition \ref{prop:right-fixpoints-are-ample} that
$\varepsilon_{\nm{LB}}$ is an isomorphism, since $\nm{LB}T$ is
ample for any monad $T$. Hence adjunction
\ref{thm:stone-adjunction-loc} is idempotent, and propositions
\ref{prop:left-fixpoints-are-haffun} and
\ref{prop:right-fixpoints-are-ample} characterize the fixpoints.
For the finitary monad case, we know that $\cal{O} : \Top \to
\Loc$ preserves
pullbacks along local homeomorphisms, which means it preserves
ample topological
categories and their retrofunctors. From this, we get a functor
$\cal{O} : \AmpTopRetro \to \AmpLocRetro$, for which the equivalence
$\nm{HUMnd}_r \simeq \AmpLocRetro$, when restricted to finitary
monads, factors through. This factorization
$\nm{HUMnd}_\omega \to \AmpTopRetro$ is essentially surjective on
objects, because now taking the global sections monad on an ample
topological category $\bb{C}$, the compactness of the base space
$\bb{C}_0$ ensures this monad is the monad $\Gamma_\omega\bb{C}$
of \emph{finitary} sections. Hence, we get an equivalence
$\nm{HUMnd}_\omega \simeq \AmpTopRetro$.

\end{document}